\definecolor{darkred}{rgb}{0.8,0.1,0.1}
\theoremstyle{plain}
\newtheorem{theo}{Theorem}[section]
\newtheorem{propo}[theo]{Proposition}
\newtheorem{cor}[theo]{Corollary}
\theoremstyle{definition}
\newtheorem{defi}[theo]{Definition}
\newenvironment{ex}
  {\pushQED{\qed}\exx}
  {\popQED\endexx}
\newenvironment{rem}
  {\pushQED{\qed}\remm}
  {\popQED\endremm}
\numberwithin{equation}{section}
\def\nn{\nonumber}
\def\bbR{\mathbb{R}}
\def\bbC{\mathbb{C}}
\def\bbZ{\mathbb{Z}}
\def\bbQ{\mathbb{Q}}
\def\bbS{\mathbb{S}}
\def\bbL{\mathbb{L}}
\def\bbE{\mathbb{E}}
\def\Hom{\mathrm{Hom}}
\def\hom{\mathrm{hom}}
\def\Imm{\mathrm{Im}}
\def\Ker{\mathrm{Ker}}
\def\PSh{\mathrm{PSh}}
\def\id{\mathrm{id}}
\def\dd{\mathrm{d}}
\def\1{I}
\def\oone{\mathbbm{1}}
\def\op{\mathrm{op}}
\def\ev{\mathrm{ev}}
\def\Loc{\mathbf{Loc}}
\def\Ran{\operatorname{Ran}}
\def\hoRan{\operatorname{hoRan}}
\def\Locc{\mathbf{Loc}_{\text{\large $\diamond$}}^{}}
\def\Man{\mathbf{Man}}
\def\Cart{\mathbf{Cart}}
\def\Set{\mathbf{Set}}
\def\Alg{\mathbf{Alg}}
\def\Ch{\mathbf{Ch}}
\def\dgAlg{\mathbf{dgAlg}}
\def\CC{\mathbf{C}}
\def\DD{\mathbf{D}}
\def\MM{\mathbf{M}}
\def\Cat{\mathbf{Cat}}
\def\OCat{\mathbf{OrthCat}}
\def\Grpd{\mathbf{Grpd}}
\def\sSet{\mathbf{sSet}}
\def\Op{\mathbf{Op}}
\def\Open{\mathbf{Open}}
\def\PBun{\mathbf{PBun}}
\def\Disc{\mathbf{Disk}}
\def\QFT{\mathbf{QFT}}
\def\AAA{\mathfrak{A}}
\def\BBB{\mathfrak{B}}
\def\CCC{\mathfrak{C}}
\def\DDD{\mathfrak{D}}
\def\O{\mathcal{O}}
\def\P{\mathcal{P}}
\def\E{\mathcal{E}}
\def\holim{\mathrm{holim}}
\def\colim{\mathrm{colim}}
\def\hocolim{\mathrm{hocolim}}
\def\Sing{\mathrm{Sing}}
\newcommand\und[1]{\underline{#1}}
\newcommand\ovr[1]{\overline{#1}}
\DeclareMathOperator*{\Motimes}{\text{\raisebox{0.25ex}{\scalebox{0.8}{$\bigotimes$}}}}
\def\sk{\vspace{2mm}}
\let\@fnsymbol\@alph
\title{%
Homotopy theory of algebraic quantum field theories
}
\author{%
Marco Benini$^{1,a}$, 
Alexander Schenkel$^{2,b}$\ and\
Lukas Woike$^{1,c}$\vspace{4mm}\\
{\small ${}^1$ Fachbereich Mathematik, Universit\"at Hamburg,}\\
{\small Bundesstr.~55, 20146 Hamburg, Germany.}\vspace{3mm}\\
{\small ${}^2$ School of Mathematical Sciences, University of Nottingham,}\\
{\small University Park, Nottingham NG7 2RD, United Kingdom.}\vspace{5mm}\\
{\small \begin{tabular}{ll}
Email: & ${}^a$~\texttt{marco.benini@uni-hamburg.de}\\
& ${}^b$~\texttt{alexander.schenkel@nottingham.ac.uk}\\
& ${}^c$~\texttt{lukas.jannik.woike@uni-hamburg.de}\vspace{3mm}
\end{tabular}
}
}
\date{January 2019}
\begin{document}

\maketitle

\begin{abstract}
\noindent Motivated by gauge theory, we develop a general framework for chain complex valued algebraic quantum field theories. Building upon our recent operadic approach to this subject, we show that the category of such theories carries a canonical model structure and explain the important conceptual and also practical consequences of this result. As a concrete application we provide a derived version of Fredenhagen's universal algebra construction, which is relevant e.g.\ for the BRST/BV formalism. We further develop a homotopy theoretical generalization of algebraic quantum field theory with a particular focus on the homotopy-coherent Einstein causality axiom. We provide examples of such homotopy-coherent theories via (1) smooth normalized cochain algebras on $\infty$-stacks, and (2) fiber-wise groupoid cohomology of a category fibered in groupoids with coefficients in a strict quantum field theory.
\end{abstract}

\vspace{2mm}

\paragraph*{Report no.:} ZMP-HH/18-11, Hamburger Beitr\"age zur Mathematik Nr.\ 738

\paragraph*{Keywords:} algebraic quantum field theory, gauge theory, BRST/BV formalism, model categories, colored operads, homotopy algebras, $E_\infty$-algebras, $\infty$-stacks

\paragraph*{MSC 2010:} 81Txx, 18D50, 18G55, 55U35

\tableofcontents




\section{\label{sec:intro}Introduction and summary} 
Algebraic quantum field theory is a far developed mathematical
framework to investigate quantum field theories on Lorentzian
spacetimes from a model-independent perspective.
Its origins lie in a seminal paper of Haag and Kastler dating back 
to the early 1960s \cite{HaagKastler}. More than 50 years
of developments in this field have led to a broad spectrum
of beautiful and deep mathematical results for quantum field theories,
reaching from rigorous statements about their scattering theory, over 
discovering intriguing features in their representation theory, to constructing
non-perturbative models in low spacetime dimensions.
We refer to \cite{AQFTbook} for a recent general overview
and also to \cite{Kawahigashi} for a more specialized
review of chiral conformal quantum field theories.
\sk

One of the major open problems in algebraic quantum field theory
is that there is currently no consensus on the description
of gauge theories in this framework. In particular, the important questions 
of {\em what} is a quantum gauge theory and {\em how} it 
differs from a theory without gauge symmetries are not
yet sufficiently well understood. We believe that the key
to address and solve these problems is to develop a generalization 
of algebraic quantum field theory that takes
into account the crucial {\em higher structures}
which are present in gauge theories. This has led us to initiate the
{\em homotopical algebraic quantum field theory program}, cf.\
\cite{BeniniSchenkelSzabo,BeniniSchenkel,BSSStack,BeniniSchenkelWoike,BSWinvolutions}.
From a technical perspective, this program is about
combining the conceptual/physical ideas of algebraic 
quantum field theory with modern techniques from {\em homotopy theory},
in the sense of model category theory (see e.g.\ \cite{Dwyer,Hovey,Hirschhorn}) 
or higher category theory (see e.g.\ \cite{Lurie,HA}), which are capable to
describe the relevant higher structures of gauge theories.
We would like to emphasize that higher structures
already play a major role in other mathematical approaches to
quantum field theory, most notably in topological quantum field theory (see e.g.\ \cite{TQFT})
and in the factorization algebra approach of Costello and Gwilliam \cite{CostelloGwilliam}. Even though
these frameworks provide some inspiration for our developments, they are unfortunately
not directly applicable to our specific problem. This is because we are interested in quantum field theories
on {\em Lorentzian} spacetimes, in contrast to theories on topological or Riemannian manifolds, and for those the 
causal structure on spacetime is intrinsically linked to algebraic properties of quantum field theory 
via the Einstein causality axiom.
\sk

Let us provide a brief non-technical explanation of what are
the higher structures in gauge theory that we are talking about.
Recall that in a classical gauge theory we have two different
kinds of data, namely gauge fields and gauge transformations between gauge fields.
Hence, when thinking of the ``space of fields'' in a gauge theory,
one should think of a kind of {\em higher space} that consists of points, 
which describe the gauge fields, and also arrows between these points, which 
describe the gauge transformations between gauge fields. More technically,
this means that the space of gauge fields is not modeled by a set,
but rather by a {\em groupoid} (i.e.\ a category with only invertible arrows), 
which is the prime example of a higher structure. 
Let us also mention that the smooth structure on such a higher space of fields can be encoded
by using the concept of a {\em stack}, which is loosely speaking a smooth groupoid.
See e.g.\ \cite{Schreiber,Hollander,Dugger} for the technical background 
on ($\infty$-)stacks and \cite{BSSStack} for a physical example given by the 
stack of non-Abelian Yang-Mills fields. It is important to emphasize
that the stack of gauge fields is much richer
than the traditional gauge orbit space, which is the quotient space
obtained by identifying all gauge equivalent gauge fields. 
This is because the stack does not only indicate which of the 
gauge fields are gauge equivalent, but it also captures higher information by 
counting in how many ways they are gauge equivalent. 
There is the following striking analogy with the homotopy theory of topological spaces: 
This higher information can be thought of as ``loops'' in the stack of gauge fields and hence
should be interpreted as the $1$st homotopy group of the stack, while the 
gauge orbit space is only its $0$th homotopy group. (The precise concept 
is that of sheaves of homotopy groups for stacks, cf.\ \cite{Hollander,Dugger}.)
A key observation, which justifies the necessity of taking into account such
higher structures, is that they are crucial for obtaining a local-to-global (i.e.\ descent) property 
of the gauge theory, see e.g.\ \cite{BeniniSchenkelSzabo} for a concrete demonstration by 
computations and also \cite{Dougherty,Teh} for more philosophical arguments.
\sk

These higher structures play an important role also in quantum gauge theories.
To get some intuition on their appearance, let us adopt for the moment the 
point of view of deformation quantization. In this approach the observable algebras of a quantum field
theory are obtained by a deformation of the algebras of functions on the 
classical spaces of fields. Because in a gauge theory we have higher spaces (i.e.\ stacks) of gauge fields,
it is natural to expect that the corresponding observable algebras will be some kind of {\em higher algebras}.
Let us notice that certain shadows of such higher algebraic structures 
are already well-known in the physics literature, where they go under the name {\em BRST/BV formalism}, see e.g.\
\cite{Hollands,FredenhagenRejzner,FredenhagenRejzner2} for the corresponding developments
in algebraic quantum field theory. The quantum observable algebras in this framework are described
by certain {\em differential graded algebras}, which one can interpret as a quantization of a
higher function algebra (Chevalley-Eilenberg algebra) on the formal neighborhood 
of a point of the stack of gauge fields. In this construction, the higher structures encoded 
in the stack of gauge fields get translated to the higher homology groups of the differential 
graded algebras of observables. In particular this means that it is an important characteristic feature of
quantum gauge theories that their observables are described by higher algebras 
instead of ordinary gauge invariant observable algebras, which would provide
an incomplete picture because they describe only the $0$th homologies
and neglect all higher structures. Unfortunately, even though the BRST/BV formalism is able to
capture some of the higher structures of a quantum gauge theory, it is intrinsically perturbative
because it considers only the formal neighborhood of
a point of the stack of gauge fields.
We propose in Section \ref{subsec:stacks} of this paper a certain
non-perturbative generalization of the Chevalley-Eilenberg algebra, i.e.\ the classical
BRST formalism, which we will obtain by developing a concept of smooth normalized 
cochain algebras on stacks.
\sk

The overall aim of this paper is to develop a model-independent framework
for algebraic quantum field theories with values in chain complexes.
As we have explained in the paragraphs above, the generalization from
ordinary algebras in vector spaces to higher algebras 
in chain complexes is motivated by the necessity to encode and describe
the crucial higher structures of quantum gauge theories within the framework
of algebraic quantum field theory. Defining chain complex valued algebraic 
quantum field theories is relatively straightforward and these
have already appeared in previous works related to the BRST/BV formalism
\cite{Hollands,FredenhagenRejzner,FredenhagenRejzner2}.
Concretely, given a category $\CC$ of spacetimes, a chain complex valued algebraic quantum field theory
is a functor $\AAA : \CC\to \dgAlg(k)$ to the category of differential graded
algebras (over a commutative ring $k$) that satisfies some physically motivated properties 
such as the Einstein causality axiom and the time-slice axiom.
There is however the following subtle but essential point that has 
been mostly ignored in previous works: A common feature of all scenarios
involving higher structures is that the correct way to compare two objects
is not via isomorphisms, but via {\em weak equivalences}, which can be more general than
isomorphisms. For example, the weak equivalences between groupoids are equivalences of their underlying 
categories, the weak equivalences between topological spaces are weak homotopy equivalences, 
and the weak equivalences between chain complexes are quasi-isomorphisms,
i.e.\ maps between chain complexes that induce an isomorphism in homology.
We would like to stress the direct practical significance of weak equivalences: The usual technique of adding 
auxiliary fields and performing a gauge fixing in the BRST/BV formalism 
\cite{Hollands,FredenhagenRejzner,FredenhagenRejzner2}
is the prime example of a weak equivalence between chain complexes 
of observables that is in general not an isomorphism.
\sk

Working with categories that are endowed with a notion of weak equivalences
requires special care. The main reason is that ordinary categorical concepts and constructions,
such as functors, will generically fail to preserve weak equivalences, which
can lead to major issues because weakly equivalent objects in such situations are regarded
as ``being the same''. These problems can be solved systematically 
by using more sophisticated concepts describing category theory with a certain homotopy theoretical 
flavor, such as model category theory \cite{Dwyer,Hovey,Hirschhorn} 
or higher category theory \cite{Lurie,HA}. These powerful frameworks 
provide a consistent approach to describe categories with weak equivalences 
together with categorical constructions that are compatible with the weak equivalences.
\sk

The main aim of this paper is to show that there exists a canonical model structure on the
category of chain  complex valued algebraic quantum field theories and to explain 
the immense relevance of such structure both from a conceptual and from a more practical point of view.
The key ingredient for these developments is our recent result in \cite{BeniniSchenkelWoike} 
that algebraic quantum field  theories admit a description in terms of algebras over a 
suitable colored operad.
\sk

Now let us describe in more detail the content of the present paper.
In Section \ref{sec:prelim} we recall some basic definitions and results about 
the model category $\Ch(k)$ of chain complexes of $k$-modules, 
colored operads and their homotopy theory. This will be
particularly useful to fix our notations and also to make our article sufficiently
self-contained for readers who have only little experience with 
operads and/or homotopy theory. 
After these purely mathematical preliminaries, we focus in 
Section \ref{sec:strict} on $\Ch(k)$-valued algebraic quantum field theories.
Following \cite{BeniniSchenkelWoike}, we shall adopt a 
very broad and flexible definition of algebraic quantum field theory, which
encompasses various related concepts such as traditional algebraic
quantum field theory on a fixed spacetime \cite{HaagKastler}, 
locally covariant quantum field theory on all spacetimes \cite{Brunetti,FewsterVerch}
and chiral conformal quantum field theory on the circle \cite{Kawahigashi}.
In our framework each of these specific scenarios corresponds to a choice of 
{\em orthogonal category} $\ovr{\CC} = (\CC,\perp)$, cf.\ Definition \ref{def:OCat}.
To every orthogonal category $\ovr{\CC}$ we assign the category $\QFT(\ovr{\CC})$ 
of $\Ch(k)$-valued quantum field theories on $\ovr{\CC}$,
which is a full subcategory of the functor category $\dgAlg(k)^\CC$
consisting of all functors $\AAA : \CC\to \dgAlg(k)$ that satisfy the
{\em $\perp$-commutativity axiom}, cf.\ Definition \ref{def:QFTcats}.
Making use of the crucial result of \cite{BeniniSchenkelWoike} that $\QFT(\ovr{\CC})$ 
is the category of algebras over a suitable colored 
operad $\O_{\ovr{\CC}}$, we can endow the category $\QFT(\ovr{\CC})$
with a canonical model structure.
The corresponding weak equivalences admit a simple description in terms
of natural transformations $\zeta : \AAA\Rightarrow \BBB$ between 
theories $\AAA,\BBB : \CC\to\dgAlg(k)$ whose components 
$\zeta_c : \AAA(c)\to \BBB(c)$ are quasi-isomorphisms of chain complexes.
In particular, our rigorous concept of weak equivalences coincides
with the less formal notions used in practical constructions within the 
BRST/BV formalism, cf.\ \cite{Hollands,FredenhagenRejzner,FredenhagenRejzner2}.
As a non-trivial application of this model structure, we discuss 
Fredenhagen's universal algebra construction \cite{Fre1,Fre2,Fre3}
for $\Ch(k)$-valued quantum field theories from the perspective of
{\em derived functors} (cf.\ \cite{Dwyer,Hovey,Riehl}). We shall
present simple toy-models which show that
deriving the universal algebra construction 
is necessary to obtain a construction that 
preserves weak equivalences and that produces
the correct global gauge theory observables on arbitrary spacetimes.
The technical details are presented in Appendix \ref{app:extension}.
\sk

In Section \ref{sec:resolution} we explore a natural homotopy theoretical
generalization of algebraic quantum field theory.
These considerations are inspired by the concept of {\em homotopy algebras over operads},
which describe homotopy-coherent algebraic structures, such as
$A_\infty$-algebras and $E_\infty$-algebras.
In general, the homotopy algebras over an operad $\O$ are defined as 
algebras over a $\Sigma$-cofibrant resolution $\O_\infty\to \O$ of the operad.
In Section \ref{subsec:Sigmaresolution} we study 
$\Sigma$-cofibrant resolutions ${\O_{\ovr{\CC}}}_\infty \to \O_{\ovr{\CC}}$
of our algebraic quantum field theory operads and thereby
introduce {\em homotopy quantum field theories} 
given by algebras over the resolved colored operad ${\O_{\ovr{\CC}}}_\infty$.
As usual, there exist many different $\Sigma$-cofibrant resolutions 
${\O_{\ovr{\CC}}}_\infty \to \O_{\ovr{\CC}}$, i.e.\ different variants 
of homotopy quantum field theories. We prove that each of these
variants defines a Quillen equivalent model category of
homotopy quantum field theories and also that 
the identity $\id : \O_{\ovr{\CC}}\to \O_{\ovr{\CC}}$
is an instance of a $\Sigma$-cofibrant resolution.
The latter should be interpreted as a strictification 
result for homotopy quantum field theories to strict
algebraic quantum field theories in the sense of Section \ref{sec:strict}.
Let us stress that this strictification result {\em does not} mean that
homotopy quantum field theories are useless in practice,
because strictifications are extremely hard to compute and
hence it can be convenient to work with non-strict homotopy quantum field theories.
Inspired by our examples of non-strict homotopy quantum field theories 
in Sections \ref{subsec:stacks} and \ref{sec:orbifoldization}, we describe
in Section \ref{subsec:Einftyresolution} a specific $\Sigma$-cofibrant
resolution $\O_{\ovr{\CC}} \otimes \E_\infty \to \O_{\ovr{\CC}}$ that is obtained
by a component-wise tensor product of our quantum field theory operad 
with the Barratt-Eccles $E_\infty$-operad from \cite{BergerFresse}.
The resulting homotopy quantum field theories satisfy a homotopy-coherent
$\perp$-commutativity axiom. In particular, every $\E_\infty$-algebra valued functor 
$\AAA : \CC\to\Alg(\E_\infty)$ defines a homotopy quantum field theory 
over this resolution. In Section \ref{subsec:stacks} we construct examples of 
such functors by taking smooth normalized cochain algebras on $\infty$-stacks, 
which we interpret as a non-perturbative analog of the classical BRST formalism 
(Chevalley-Eilenberg algebra).
\sk

In Section \ref{sec:orbifoldization} we provide another class of examples of homotopy quantum
field theories over our resolution $\O_{\ovr{\CC}} \otimes \E_\infty  \to \O_{\ovr{\CC}}$.
Our construction is inspired by \cite{BeniniSchenkel} and it is based on the following idea:
We start with a category fibered in groupoids $\pi : \DD\to\CC$ over the category
of spacetimes $\CC$. The total category $\DD$ should be interpreted as a category of
spacetimes with background gauge fields, such as spin structures, bundles and connections. 
Then we take a strict $\Ch(k)$-valued quantum field theory 
on the total category $\DD$
and consider its underlying functor $\AAA : \DD\to\Ch(k)$. Over each spacetime
$c\in \CC$, there exists a groupoid $\pi^{-1}(c)$ of background fields and we
would like to take homotopy invariants of the corresponding groupoid actions on
the quantum field theory $\AAA$. We formalize this construction
in terms of a homotopy right Kan extension $\hoRan_\pi$ along the projection functor 
$\pi :\DD\to \CC$. Our main result is Theorem \ref{theo:hoRanQFT},
which proves that, after a very explicit strictification of the category fibered in groupoids
according to \cite{Hollander}, the functor $\hoRan_\pi \AAA : \CC\to\Ch(k)$ carries canonically
the structure of a homotopy quantum field theory over our resolution.
This construction does not only provide toy-models
of homotopy quantum field theories, but also seems to be relevant for
perturbative quantum gauge theories, cf.\  Example \ref{ex:fiberedcat}.
From a mathematical perspective, the homotopy quantum field theory
$\hoRan_\pi \AAA$ admits an interpretation in terms of fiber-wise groupoid cohomology
on $\pi :\DD\to\CC$ with coefficients in the strict quantum field theory $\AAA$, 
cf.\ Remark \ref{rem:groupoidcoho}.


\section{\label{sec:prelim}Preliminaries and notations}

\subsection{\label{subsec:complexes}Chain complexes}
Let $k$ be a commutative and unital ring. Throughout this paper we 
assume that the ring of rational numbers $\bbQ$ is a subring of $k$, 
i.e.\ $k\supseteq \bbQ$. This will considerably simplify our 
homotopy theoretical considerations, see \cite{HinichOriginal,Hinich}
and the subsections below for details. Notice that this assumption is satisfied in 
applications to quantum field theory, where $k$ will be either the field of complex numbers $\bbC$ 
or formal power series (in Planck's constant $\hbar$ and the coupling constants) in $\bbC$
if one is interested in perturbative theories.
\sk

We denote by $\Ch(k)$ the category of (possibly unbounded) {\em chain 
complexes of $k$-modules}. Recall that a chain complex
is a family of $k$-modules $\{V_n\,:\, n\in \bbZ\}$ together with a
differential, i.e.\ a family of $k$-linear maps $\{ \dd : V_n\to V_{n-1}\,:\, n\in\bbZ\}$
such that $\dd^2=0$. We often denote chain complexes simply by symbols like 
$V,W,Z \in\Ch(k)$ and use the same letter $\dd$ to denote various differentials.
A chain complex morphism $f : V\to W$ is a family of $k$-linear maps
$\{f : V_n\to W_n \,:\, n\in\bbZ\}$ that is compatible with the differentials, 
i.e.\ $\dd\, f = f\, \dd$.
\sk

The category $\Ch(k)$ carries the following
standard closed symmetric monoidal structure:
The tensor product $V\otimes W$ of two chain complexes
is defined as usual by
\begin{flalign}\label{eqn:tensorchaincomplex}
(V\otimes W)_n \,:=\, \bigoplus_{m \in\bbZ} V_m \otimes W_{n-m}\quad,
\end{flalign}
for all $n\in\bbZ$, together with the differential 
$\dd(v\otimes w) := \dd v \otimes w + (-1)^{m} \, v\otimes \dd w$,
for all $v\in V_m$ and $w\in W_{n-m}$. The monoidal unit
is $k\in\Ch(k)$, which we regard as a chain complex concentrated in degree $0$ 
with trivial differential $\dd=0$. The symmetric braiding 
$ \tau : V\otimes W\to W\otimes V$ is defined by
the usual sign rule $\tau(v\otimes w) := (-1)^{ml} \, w\otimes v$,
for all  $v\in V_m$ and $w\in W_l$. 
The internal hom object $[V,W]$ between 
two chain complexes is defined by
\begin{flalign}\label{eqn:internalhomchaincomplex}
[V,W]_n \,:=\, \prod_{m\in\bbZ}\hom_k^{}(V_m , W_{n+m})\quad,
\end{flalign}
for all $n\in \bbZ$, where $\hom_k^{}$ denotes the internal hom for $k$-modules (that
is the $k$-module of $k$-linear maps between $k$-modules), 
together with the differential $\dd(L) := \dd\,  L - (-1)^{n}\, L\, \dd$,
for all $(L: V_m\to W_{n+m}) \in \hom_k^{}(V_m,W_{n+m})$.
\sk

Let us further recall that a $\Ch(k)$-morphism $f : V\to W$
is called a {\em quasi-isomorphism} if it induces an isomorphism
in homology, i.e.\ $H_n(f) : H_n(V) \to H_n(W)$ is
an isomorphism of $k$-modules for each $n\in\bbZ$. 
The homology $k$-modules are defined as usual by the quotients
$H_n(V) := \Ker (\dd : V_n\to V_{n-1})/\Imm(\dd : V_{n+1}\to V_n)$,
for all $n\in\bbZ$. This notion of quasi-isomorphism in $\Ch(k)$
is part of the rich and powerful structure of a (symmetric monoidal)
{\em model category}. For the definition of a model category 
we refer to e.g.\ \cite{Dwyer,Hovey,Hirschhorn}. 
For an informal approach to our paper, however, it will be sufficient to keep in mind the
following core principle: A model category is a category together with
three distinguished classes of morphisms -- called {\em weak equivalences}, {\em fibrations}
and {\em cofibrations} -- that satisfy a list of axioms. These axioms are designed
in such a way that weak equivalences behave (with some technical care)
as good as isomorphisms, e.g.\ they are preserved by (derived) functors.
Hence, by using techniques from model category theory, we can consistently consider 
two objects as being the same not only when they are isomorphic, but also when they are
weakly equivalent. Such ideas and techniques are crucial for our development 
of a consistent framework for chain complex valued algebraic quantum field 
theories in this paper. The following result is proven in \cite[Sections 2.3 and 4.2]{Hovey}.
\begin{theo}\label{theo:Chmodelcat}
Define a morphism $f:V\to W$ in $\Ch(k)$ to be
\begin{itemize}
\item[(i)] a weak equivalence if it is a quasi-isomorphism, 
i.e.\ it induces an isomorphism in homology;
\item[(ii)] a fibration if it is surjective in each degree $f:V_n\to W_n$;
\item[(iii)] a cofibration if it has the left lifting property (cf.\ Remark \ref{rem:lifting})
with respect to all acyclic fibrations (i.e.\ all morphisms in $\Ch(k)$
that are both a fibration and a weak equivalence).
\end{itemize}
With these choices $\Ch(k)$ becomes a symmetric monoidal model category.
Moreover, the monoidal unit $k\in\Ch(k)$ is  a cofibrant object,
i.e.\ the unique morphism $0 \to k$ from the zero chain complex
is a cofibration.
\end{theo}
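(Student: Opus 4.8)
The plan is to cite or reconstruct the standard projective model structure on unbounded chain complexes over a ring, and then verify the two additional monoidal claims. Since Theorem~\ref{theo:Chmodelcat} explicitly attributes the result to \cite[Sections 2.3 and 4.2]{Hovey}, the most honest proof is a \emph{proof by reference}, but I will indicate the structure one would check if reproducing it. The three classes of maps are given explicitly (quasi-isomorphisms, degreewise surjections, and maps with the left lifting property against acyclic fibrations), so the cofibrations are \emph{defined} by a lifting property and there is nothing to prove about their characterization; the work is entirely in verifying the model category axioms.

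First I would check the elementary axioms: $\Ch(k)$ is complete and cocomplete (limits and colimits are computed degreewise in $k\text{-Mod}$), and the three classes each contain all isomorphisms and satisfy the two-out-of-three and retract axioms. Two-out-of-three for quasi-isomorphisms is immediate from the functoriality of homology and the corresponding property for isomorphisms of $k$-modules; the retract axioms are formal. The cofibrations satisfy two-out-of-three and retract stability automatically, since any class defined by a left lifting property is closed under retracts. The lifting axioms then split into two halves: acyclic cofibrations lift against fibrations, and cofibrations lift against acyclic fibrations. The latter half holds \emph{by definition} of cofibration, so only the former requires argument.

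The technical heart is the factorization axiom, and this is where I expect the main obstacle to lie, precisely because the complexes are \emph{unbounded}. One produces the two functorial factorizations by the small object argument applied to explicit generating sets: the generating cofibrations are the inclusions $S^{n-1}\hookrightarrow D^n$ of sphere into disk complexes, and the generating acyclic cofibrations are the maps $0\to D^n$. The small object argument applies because every object of $\Ch(k)$ is small relative to these sets (a consequence of working over a ring and the fact that the generating maps have finitely presented, or at least compact-enough, (co)domains). The delicate point for unbounded complexes is identifying the maps built by the small object argument: one must show that the map lifting against all fibrations that is a relative $\{0\to D^n\}$-cell complex is a weak equivalence, and dually. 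Bounded-below arguments via explicit chain homotopies do not immediately transfer, and the standard resolution (as in Hovey) uses the structure of $k$-modules more carefully; this identification of the two weak factorization systems is the step I would expect to consume the most effort, and is exactly why deferring to \cite{Hovey} is appropriate.

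It remains to verify the two supplementary claims. For the symmetric monoidal model structure one checks the \emph{pushout-product axiom}: if $f$ and $g$ are cofibrations then $f\,\square\,g$ is a cofibration, acyclic if either factor is. Since cofibrations are generated by $S^{n-1}\hookrightarrow D^n$, it suffices to verify this on generators, where the pushout-product of sphere-disk inclusions is computed explicitly and recognized as a cofibration of the same type; the unit axiom is automatic because the monoidal unit $k$ will be shown cofibrant. Finally, for cofibrancy of $k$: the chain complex $k$ concentrated in degree $0$ is exactly the sphere complex $S^0$, and the inclusion $0\to S^0$ lifts against every acyclic fibration, since an acyclic fibration is a degreewise surjection inducing homology isomorphisms and one can choose a cycle in degree $0$ mapping to any prescribed image cycle; equivalently $0\to k$ is a retract of a generating cofibration. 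Hence $k$ is cofibrant, completing the statement.
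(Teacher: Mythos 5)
Your proposal follows the same route as the paper: the paper gives no proof at all, deferring entirely to \cite[Sections 2.3 and 4.2]{Hovey}, and your outline of what that reference contains is accurate --- the generating cofibrations $S^{n-1}\to D^n$, generating acyclic cofibrations $0\to D^n$, the small object argument, the identification of the weak factorization systems as the genuinely delicate step for \emph{unbounded} complexes, and the verification of the pushout-product axiom on generators. Your primary argument for cofibrancy of $k$ is also correct: an acyclic fibration $p:V\to W$ is surjective on cycles (pick any preimage $v'$ of a cycle $w$, note $\dd v'$ lies in the kernel $K=\Ker(p)$, which is acyclic by the long exact sequence, write $\dd v' = \dd \kappa$ with $\kappa\in K_0$, and correct to $v=v'-\kappa$), so $0\to k$ has the left lifting property against all acyclic fibrations.

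One side remark is wrong, although it does not damage the proof since you offer it only as an alternative: $0\to k$ is \emph{not} a retract of a generating cofibration $S^{n-1}\to D^n$. Such a retract diagram would exhibit $S^0=k$ as a retract of the disk complex $D^n$, which is impossible: $D^n$ is acyclic, homology carries retracts to retracts, and $H_0(S^0)=k\neq 0$. The statement you want is that $0\to S^0$ is a \emph{pushout} of the generating cofibration $S^{-1}\to D^0$ along $S^{-1}\to 0$ (indeed $D^0/S^{-1}\cong S^0$); since cofibrations are closed under cobase change, this gives an even quicker proof that $k$ is cofibrant than the explicit lifting argument.
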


\begin{rem}\label{rem:lifting}
Let us recall that a morphism $f : V\to W$ is said to have the {\em left lifting property}
with respect to a morphism $f^\prime : V^\prime\to W^\prime$ 
if all commutative squares of the form
\begin{flalign}
\xymatrix{
\ar[d]_-{f}\ar[rr] V && \ar[d]^-{f^\prime}V^\prime\\
\ar@{-->}[rru]_-{f^{\prime\prime}}W \ar[rr] && W^\prime
}
\end{flalign}
admit a lift $f^{\prime\prime}$, i.e.\ the two triangles commute.
\end{rem}

\subsection{\label{subsec:operads}Homotopy theory of colored operads}
Let us start with a very brief review of some relevant aspects
of the theory of colored operads in a general bicomplete closed 
symmetric monoidal category $\MM$. (We will later take 
$\MM = \Ch(k)$, but the definitions below are easily stated
for general $\MM$.) We refer to \cite{BergerMoerdijk,Yau,YauQFT} 
and also to our previous paper \cite{BeniniSchenkelWoike} 
for a more extensive introduction. Let $\CCC\in\Set$
be a non-empty set. We refer to elements in $\CCC$ as colors. 
A {\em $\CCC$-colored operad} $\O$ with values in $\MM$ is given by the following data:
\begin{itemize}
\item[(i)] for each $n\geq 0$ and $n+1$-tuple of colors 
$(\und{c},t) := ((c_1,\dots,c_n),t)\in\CCC^{n+1}$, an object 
\begin{flalign}
\O\big(\substack{t\\\und{c}}\big)\in\MM\quad;
\end{flalign}

\item[(ii)] for each color $c\in\CCC$, an $\MM$-morphism
(called {\em operadic unit}) from the monoidal unit
\begin{flalign}
\oone : I \longrightarrow \O\big(\substack{c\\c}\big)\quad;
\end{flalign}

\item[(iii)] for each $n>0$, $n+1$-tuple of colors 
$(\und{a},t)\in \CCC^{n+1}$ and $k_i+1$-tuples of colors 
$(\und{b}_i,a_i)\in \CCC^{k_i +1}$, for $i=1,\dots,n$, an $\MM$-morphism
(called {\em operadic composition})
\begin{flalign}
\gamma  :  \O\big(\substack{t\\\und{a}}\big) \otimes\bigotimes_{i=1}^n 
\O\big(\substack{a_i\\\und{b}_i}\big) \longrightarrow
\O\big(\substack{t\\ \und{b}}\big)\quad,
\end{flalign}
where $\und{b} := (\und{b}_1, \dots, \und{b}_n)$ denotes
concatenation of tuples;

\item[(iv)] for each $n\geq 0$, $n+1$-tuple of colors
$(\und{c},t) \in\CCC^{n+1}$ and permutation $\sigma\in\Sigma_n$,
an $\MM$-morphism
\begin{flalign}\label{eqn:operadequivariances}
\O(\sigma) : \O\big(\substack{t\\ \und{c}}\big) \longrightarrow \O\big(\substack{t\\ \und{c}\sigma}\big)\quad,
\end{flalign}
where $\und{c}\sigma = (c_{\sigma(1)},\dots,c_{\sigma(n)})$ is given by right permutation.
\end{itemize}
These four data are required to satisfy compatibility conditions:
1.)~The morphisms in (iv) define right $\Sigma_n$-actions, i.e.\ $\O(\sigma^\prime)  \,\O(\sigma)
= \O(\sigma \sigma^\prime)$. 2.)~The operadic compositions in (iii) are equivariant
under these actions of the permutation groups. 3.)~The operadic compositions are 
associative and also unital with respect to the operadic units in (ii).
(See e.g.\ \cite[Definition 11.2.1]{Yau} for
the details.) The axioms for $\CCC$-colored operads 
admit the following intuitive interpretation:
$\O\big(\substack{t\\ \und{c}}\big)$ in (i) is interpreted
as the object of $n$-ary operations with input profile $\und{c}= (c_1,\dots,c_n)$ 
and target $t$. The morphisms in (iv) allow us to permute the input colors
and the operadic compositions in (iii) allow us to compose such operations. The operadic
units in (ii) should be regarded as the identity operations for each color $c\in\CCC$.
\sk

For every non-empty set of colors $\CCC\in\Set$, 
the $\CCC$-colored operads with values in $\MM$
form a category, which we denote by $\Op_\CCC(\MM)$. A morphism
$\phi : \O\to \P$ between two $\CCC$-colored operads is a collection
of $\MM$-morphisms 
\begin{flalign}
\phi :  \O\big(\substack{t\\ \und{c}}\big) \longrightarrow \P\big(\substack{t \\ \und{c}}\big)\quad,
\end{flalign}
for all $n\geq 0$ and $(\und{c},t)\in\CCC^{n+1}$, that is compatible with the permutation
actions $\phi\, \O(\sigma) = \P(\sigma)\,\phi$, the operadic compositions 
$\phi \,\gamma^\O=\gamma^\P\,\big(\phi \otimes\Motimes_{i=1}^n \phi\big) $ 
and the operadic units $\phi\,\oone^\O = \oone^\P$.
\sk

Now let us focus on the case of chain complexes $\MM = \Ch(k)$.
Since by Theorem \ref{theo:Chmodelcat} we know that
$\Ch(k)$ is a symmetric monoidal model category, it is natural 
to ask whether the category $\Op_\CCC(\Ch(k))$ of $\CCC$-colored 
operads with values in chain complexes carries an induced model structure. 
In the most general scenario, i.e.\ colored operads with values in {\em any} 
symmetric monoidal model category, this would not be the case. However, there
has been a large amount of research on establishing criteria which 
ensure that the category of operads (as well as the category of algebras 
over an operad, see Section \ref{subsec:algebras} below) 
carries a canonical model structure induced by the free-forget adjunction, cf.\
\cite{HinichOriginal,Spitzweck,BergerMoerdijkOriginal,BergerMoerdijk,CisinskiMoerdijk,PavlovScholbach,Hinich}.
The case of interest for us, i.e.\ chain complexes $\Ch(k)$ of $k$-modules
with $k\supseteq \bbQ$ (this technical condition is important here), has been understood
by Hinich \cite{HinichOriginal,Hinich}, who has proven the following result.
\begin{theo}\label{theo:Opmodelcat}
Let $\CCC\in\Set$ be any non-empty set of colors.
Define a morphism $\phi : \O\to \P$ in $\Op_\CCC(\Ch(k))$ to be
\begin{itemize}
\item[(i)] a weak equivalence if each component 
$\phi : \O\big(\substack{t\\ \und{c}}\big) \to \P\big(\substack{t \\ \und{c}}\big)$
is a weak equivalence in $\Ch(k)$, i.e.\ a quasi-isomorphism;

\item[(ii)] a fibration if each component
$\phi : \O\big(\substack{t\\ \und{c}}\big) \to \P\big(\substack{t \\ \und{c}}\big)$
is a fibration in $\Ch(k)$, i.e.\ degree-wise surjective;

\item[(iii)] a cofibration if it has the left lifting property
with respect to all acyclic fibrations.
\end{itemize}
If $k\supseteq \bbQ$, these choices endow $\Op_\CCC(\Ch(k))$
with the structure of a model category.
\end{theo}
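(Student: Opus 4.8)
The plan is to obtain the model structure on $\Op_\CCC(\Ch(k))$ by \emph{transfer} along the free--forgetful adjunction
\[
F : \SymSeq_\CCC(\Ch(k)) \rightleftarrows \Op_\CCC(\Ch(k)) : U \quad,
\]
where $\SymSeq_\CCC(\Ch(k))$ is the category of $\CCC$-colored symmetric sequences, i.e.\ families of chain complexes $\O\big(\substack{t\\\und{c}}\big)$ carrying only the equivariances \eqref{eqn:operadequivariances} but no operadic units or compositions, $U$ forgets units and compositions, and $F$ is the free $\CCC$-colored operad functor. First I would equip the base with its projective model structure: since $\SymSeq_\CCC(\Ch(k))$ is a category of diagrams in $\Ch(k)$ indexed by the groupoid of profiles and reindexing permutations, the componentwise quasi-isomorphisms and componentwise degreewise surjections define a cofibrantly generated model structure, with generating (acyclic) cofibrations freely generated at each profile by those of $\Ch(k)$. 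Note that, already at this stage, the base weak equivalences and fibrations are exactly the componentwise ones in $\Ch(k)$, so that the transferred notions will match the definitions (i) and (ii) verbatim.

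Next I would invoke the standard lifting (transfer) theorem for cofibrantly generated model categories. Its hypotheses are: (a) the base is cofibrantly generated, established above; (b) a smallness condition, namely that $U$ preserves filtered colimits and the domains of the generating cofibrations are small -- both hold because the free-operad monad is finitary (it is built from finite tensor powers) and the domains are compact in $\Ch(k)$; and (c) the \emph{acyclicity condition}, that $U$ sends every relative $F(J)$-cell complex to a weak equivalence, where $J$ is the set of generating acyclic cofibrations of the base. Granting these, the theorem yields a cofibrantly generated model structure on $\Op_\CCC(\Ch(k))$ whose fibrations and weak equivalences are created by $U$ -- precisely (ii) and (i) -- and whose cofibrations are characterized by the left lifting property against acyclic fibrations, which is (iii). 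The remaining model-category axioms (bicompleteness, two-out-of-three, retract closure) then hold automatically.

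The main obstacle is the acyclicity condition (c), and this is the unique place where $k\supseteq\bbQ$ is indispensable. Following Hinich \cite{HinichOriginal,Hinich}, I would verify it by analyzing a single pushout $\O \to \O\sqcup_{F(A)}F(B)$ along the image of a generating acyclic cofibration $j : A\to B$ of the base. The underlying symmetric sequence of such a pushout is computed by a canonical transfinite filtration whose associated graded pieces are $\Sigma_n$-coinvariants of tensor products of components of $\O$ with the relative term of $j$. Over a ring containing $\bbQ$ the coinvariants functor $(-)_{\Sigma_n}$ is exact: by the averaging idempotent $\tfrac{1}{n!}\sum_{\sigma\in\Sigma_n}\sigma$ it is a natural retract of the identity, hence it coincides with invariants and preserves quasi-isomorphisms. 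Consequently each graded piece, and therefore each stage of the filtration, is a quasi-isomorphism because $j$ is; passing to the transfinite composite (and using that filtered colimits in $\Ch(k)$ preserve quasi-isomorphisms) shows that $U$ carries every relative $F(J)$-cell complex to a weak equivalence. This is exactly the step that breaks down in positive characteristic, where coinvariants are no longer exact and the symmetric-group equivariance genuinely obstructs the transfer.

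As an alternative to the explicit filtration analysis, one can verify (c) by Quillen's path-object argument in the form used by Berger--Moerdijk \cite{BergerMoerdijk}: since fibrations are the degreewise surjections, every operad is fibrant, so it suffices to produce a functorial factorization $\O \to \mathrm{Path}(\O) \to \O\times\O$ of the diagonal into a weak equivalence followed by a fibration. Such a path object is obtained by cotensoring componentwise with a coassociative, cocommutative comonoid interval $H\in\Ch(k)$ weakly equivalent to $k$, setting $\mathrm{Path}(\O)\big(\substack{t\\\und{c}}\big) := \big[\,H,\O\big(\substack{t\\\und{c}}\big)\big]$; here again the comparison of the comultiplication with the $\Sigma_n$-actions, and hence the very existence of the comonoid structure, relies on working over $\bbQ$. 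Either route completes the verification of the lifting theorem and establishes the claimed model structure.
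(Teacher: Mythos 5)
The paper does not prove this theorem itself: it quotes it as a result of Hinich \cite{HinichOriginal,Hinich}, whose proof is exactly the transfer argument you outline --- lifting the componentwise projective model structure along the free--forgetful adjunction from $\CCC$-colored symmetric sequences, with the hypothesis $k\supseteq\bbQ$ entering precisely through the exactness of $\Sigma_n$-coinvariants (equivalently, through Quillen's path-object argument, where Hinich builds the path object by tensoring componentwise with the commutative algebra interval $k[t,\dd t]$ rather than cotensoring with a cocommutative coalgebra interval, a purely cosmetic difference since $k[t,\dd t]\simeq k$ is itself a characteristic-zero phenomenon). Your proposal is therefore correct and is essentially the same proof that the paper relies on by citation.
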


\subsection{\label{subsec:algebras}Homotopy theory of algebras over colored operads}
An algebra over a colored operad $\O$ should be interpreted
as a concrete realization/representation of the abstract operations encoded in
$\O$. Explicitly, an {\em algebra $A$ over a $\CCC$-colored operad} $\O\in \Op_\CCC(\MM)$
with values in $\MM$ (also called an {\em $\O$-algebra}) is given by the following data:
\begin{itemize}
\item[(i)] for each color $c\in\CCC$, an object
\begin{flalign}
A_c \in\MM\quad;
\end{flalign}

\item[(ii)] for each $n\geq 0$ and $n+1$-tuple of colors 
$(\und{c},t) \in\CCC^{n+1}$, an $\MM$-morphism (called {\em $\O$-action})
\begin{flalign}
\alpha  :  \O\big(\substack{t\\ \und{c}}\big) \otimes A_{\und{c}} \longrightarrow A_t\quad,
\end{flalign}
where $A_{\und{c}} := \bigotimes_{i=1}^n A_{c_i}$ with the usual 
convention that $A_{\emptyset}=I$ for the empty tuple $\emptyset \in \CCC^0$.
\end{itemize}
These data are required to satisfy the following compatibility conditions:
1.)~The $\O$-action is equivariant under permutations, i.e.\ the diagrams
\begin{flalign}\label{eqn:alg1}
\xymatrix{
\ar[rr]^-{\alpha} \O\big(\substack{t\\ \und{c}}\big) \otimes A_{\und{c}} \ar[d]_-{\O(\sigma) \otimes \text{permute}}
&& A_t\\
\O\big(\substack{t\\ \und{c}\sigma}\big) \otimes A_{\und{c}\sigma} \ar[rru]_-{\alpha}&& 
}
\end{flalign}
commute. 2.)~The $\O$-action is compatible with the operadic composition $\gamma$, i.e.\ the diagrams
\begin{flalign}\label{eqn:alg2}
\xymatrix@C=2em{
\Big(\O\big(\substack{t\\\und{a}}\big) \otimes \bigotimes\limits_{i=1}^n  \O\big(\substack{a_i\\\und{b}_i}\big)\Big) \otimes
A_{\und{b}} \ar[d]_{\text{permute}}\ar[rrrr]^-{\gamma\otimes\id} &&&& \O\big(\substack{t \\ \und{b}}\big) \otimes
A_{\und{b}}\ar[d]^-{\alpha}\\
\O\big(\substack{t\\\und{a}}\big) \otimes\bigotimes\limits_{i=1}^n \Big( \O\big(\substack{a_i\\\und{b}_i}\big) \otimes
A_{\und{b}_i}\Big)  \ar[rr]_-{\id\otimes \Motimes_{i=1}^n\alpha}&&\O\big(\substack{t\\\und{a}}\big) \otimes A_{\und{a}} \ar[rr]_-{\alpha} && A_t
}
\end{flalign}
commute. 3.)~The $\O$-action of the operadic unit $\oone$ is trivial, i.e.\
the diagrams
\begin{flalign}\label{eqn:alg3}
\xymatrix@C=1em{
\ar[drr]_-{\cong} I\otimes A_c\ar[rr]^-{\oone\otimes \id} && \O\big(\substack{c \\ c}\big) \otimes A_c\ar[d]^-{\alpha}\\
&& A_c
}
\end{flalign}
commute. The $\O$-algebras form a category, which we denote by $\Alg(\O)$.
Concretely, a morphism $\kappa : A \to B$ between two $\O$-algebras
is a collection of $\MM$-morphisms
\begin{flalign}
\kappa :  A_c \longrightarrow B_c\quad,
\end{flalign}
for all colors $c\in\CCC$, that is compatible with the $\O$-actions, 
i.e.\ $\kappa \, \alpha^A = \alpha^B\,\big(\id\otimes \Motimes_{i=1}^n\kappa\big)$.
\sk

Now let us focus on the case of chain complexes $\MM = \Ch(k)$.
Similarly to the scenario in Section \ref{subsec:operads},
it is in general not true that the category of $\O$-algebras for
any colored operad $\O$ with values in any symmetric monoidal 
model category carries a canonical model structure induced
by the free-forget adjunction. If it does, the colored operad
$\O$ is called {\em admissible} in the standard terminology of
\cite{HinichOriginal,Spitzweck,BergerMoerdijkOriginal,BergerMoerdijk,CisinskiMoerdijk,PavlovScholbach,Hinich}.
The case of interest for us, i.e.\ chain complexes $\Ch(k)$ of $k$-modules
with $k\supseteq \bbQ$ (this technical condition is again important), has been understood
by Hinich \cite{HinichOriginal,Hinich}, who has proven the following result.
\begin{theo}\label{theo:Algmodelcat}
Let $\CCC\in \Set$ be any non-empty set of colors
and $\O\in\Op_\CCC(\Ch(k))$ any $\CCC$-colored
operad with values in chain complexes. Define a
morphism $\kappa : A \to B$ in $\Alg(\O)$ to be
\begin{itemize}
\item[(i)] a weak equivalence if each component 
$\kappa : A_c \to B_c$
is a weak equivalence in $\Ch(k)$, i.e.\ a quasi-isomorphism;

\item[(ii)] a fibration if each component
$\kappa : A_c \to B_c$
is a fibration in $\Ch(k)$, i.e.\ degree-wise surjective;

\item[(iii)] a cofibration if it has the left lifting property
with respect to all acyclic fibrations.
\end{itemize}
If $k\supseteq \bbQ$, these choices endow $\Alg(\O)$ with 
the structure of a model category. In other words, each $\CCC$-colored operad
$\O\in \Op_\CCC(\Ch(k))$ is admissible when $k\supseteq \bbQ$.
\end{theo}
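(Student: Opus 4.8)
The plan is to obtain the model structure on $\Alg(\O)$ by \emph{transfer} (right-induction) along the free-forgetful adjunction
\[ F : \Ch(k)^{\CCC} \rightleftarrows \Alg(\O) : U \quad, \]
where $U$ sends an $\O$-algebra to its underlying $\CCC$-indexed family of chain complexes and $F$ is the free $\O$-algebra functor. The point is that classes (i) and (ii) are by definition the maps that $U$ carries to a weak equivalence resp.\ fibration of the product model category $\Ch(k)^{\CCC}$, while (iii) is exactly the lifting characterization forced by transfer. The base $\Ch(k)^{\CCC}$ is cofibrantly generated because $\Ch(k)$ is (by Theorem \ref{theo:Chmodelcat} together with Hovey's explicit generators, the generating acyclic cofibrations being the inclusions $0\to D^n$ of the contractible disk complexes), with generators for the product obtained by applying the insertion functors $\iota_c:\Ch(k)\to\Ch(k)^{\CCC}$ into each factor. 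The standard transfer theorem (Kan's lifting criterion, see e.g.\ \cite{Hirschhorn,BergerMoerdijk}) then produces the asserted model structure provided four conditions hold: bicompleteness of $\Alg(\O)$, cofibrant generation of the base, a smallness condition, and an acyclicity condition.

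First I would dispose of the formal conditions. Limits in $\Alg(\O)$ are created by $U$, and the relevant monad $UF$ on $\Ch(k)^{\CCC}$ is finitary, being a sum over colour profiles of functors of the form $V\mapsto\big(\O\big(\substack{t\\\und c}\big)\otimes V_{\und c}\big)_{\Sigma_n}$, each of which preserves filtered colimits. Hence $\Alg(\O)\cong\Alg(UF)$ is a category of algebras for a finitary monad on a locally presentable category, so it is itself locally presentable, in particular bicomplete, and $U$ creates filtered colimits. The smallness condition is then automatic, since every object of $\Ch(k)^{\CCC}$ is small and $U$ preserves filtered colimits.

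The real content is the acyclicity condition: every relative $F(J)$-cell complex, with $J$ the generating acyclic cofibrations, must be a quasi-isomorphism after applying $U$. This is precisely where the hypothesis $k\supseteq\bbQ$ is indispensable, and I expect it to be the main obstacle. Since each $j\in J$ has domain $0$ and $F(0)$ is the initial $\O$-algebra, every pushout of $F(j)$ is simply a coproduct $X\sqcup F(\iota_c D^n)$ in $\Alg(\O)$. I would analyze its underlying complex through the standard filtration for coproducts with a free algebra, whose associated graded in filtration degree $m\geq 1$ has the form $\big(\O_X(m)\otimes (D^n)^{\otimes m}\big)_{\Sigma_m}$, where $\O_X$ denotes the enveloping operad of $X$. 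Each $(D^n)^{\otimes m}$ is contractible, so the complex inside the coinvariants is contractible as a plain complex; and because $m!$ is invertible in $k$, the averaging idempotent $\frac{1}{m!}\sum_{\sigma\in\Sigma_m}\sigma$ exhibits $(-)_{\Sigma_m}$ as a natural retract of the underlying tensor power. A retract of a contractible complex being contractible, every layer is acyclic, so the inclusion $X\to X\sqcup F(\iota_c D^n)$ of the bottom ($m=0$) layer is a quasi-isomorphism.

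It remains to propagate this across the cell-complex construction: coproducts of maps in $J$ land on $0\to\bigoplus_i \iota_{c_i}D^{n_i}$ with contractible target, so the same analysis applies, and quasi-isomorphisms are stable under the transfinite compositions and filtered colimits that build relative $F(J)$-cell complexes (again using that $U$ preserves filtered colimits). This verifies acyclicity, and the transfer theorem yields the cofibrantly generated model structure (i)--(iii) on $\Alg(\O)$. As $\CCC$ and $\O$ were arbitrary, every $\CCC$-colored operad in $\Ch(k)$ is thereby admissible when $k\supseteq\bbQ$; equivalently, one may simply cite Hinich's theorem \cite{HinichOriginal,Hinich}, the argument above recording the mechanism that makes it hold.
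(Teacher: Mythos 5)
Your proposal is correct, but it takes a genuinely different route from the paper, which in fact offers no proof of this statement at all: Theorem \ref{theo:Algmodelcat} is recalled there as a result of Hinich \cite{HinichOriginal,Hinich}. Hinich's own argument, like yours, transfers the model structure along the free-forget adjunction, but it verifies the crucial acyclicity condition by Quillen's path-object argument rather than by a cell-by-cell analysis: since every object of $\Alg(\O)$ is fibrant, it suffices to produce functorial path objects, which Hinich obtains by tensoring an algebra with the polynomial de Rham forms $k[t,\dd t]$ on the interval -- this is where $k\supseteq\bbQ$ enters his proof. Your route instead checks acyclicity directly: the generating acyclic cofibrations have domain $0$, so pushouts of their free images are coproducts $X\sqcup F(V)$ with $V$ contractible; the coproduct filtration has associated graded $\big(\O_X(m)\otimes V^{\otimes m}\big)_{\Sigma_m}$; and invertibility of $m!$ exhibits the coinvariants as a retract of a contractible complex, hence contractible, with transfinite compositions handled by exactness of filtered colimits in $\Ch(k)$. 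This is a valid alternative, in the spirit of the admissibility arguments of \cite{BergerMoerdijkOriginal,PavlovScholbach}, and its merit is that it makes completely explicit where characteristic zero is used and why the statement fails in characteristic $p$. Its cost is the one nontrivial input you treat as standard, namely the enveloping-operad filtration with the stated associated graded; note that in your situation this can be made essentially painless, since with cell domains equal to $0$ the filtration of $X\sqcup F(V)$ splits degree-wise as a direct sum $U(X)\oplus\bigoplus_{m\geq 1}\big(\O_X(m)\otimes V^{\otimes m}\big)_{\Sigma_m}$, so the layers are direct summands and no long-exact-sequence bookkeeping is needed. The path-object argument, by contrast, is shorter and avoids this machinery entirely, but it hides the mechanism that your argument lays bare.
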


Before we can conclude this section, we still have to discuss
the compatibility (in a sense to be made precise below) of these 
model structures under a change of operad, for example by
an $\Op_\CCC(\Ch(k))$-morphism. For our paper we shall
need a more flexible variant of changing the operad which does not
necessarily preserve the set of colors. This is captured by 
the following definition.
\begin{defi}\label{def:variablecolorsoperads}
The {\em category $\Op(\Ch(k))$ of operads with varying colors} is defined as follows:
An object is a pair $(\CCC,\O)$ consisting of a non-empty set
$\CCC\in\Set$ and a $\CCC$-colored operad $\O\in\Op_\CCC(\Ch(k))$.
A morphism is a pair $(f,\phi) : (\CCC,\O)\to(\DDD,\P)$
consisting of a map of sets $f : \CCC\to\DDD$
and an $\Op_\CCC(\Ch(k))$-morphism $\phi : \O\to f^\ast(\P)$
to the pullback of $\P$ along $f$. The latter is defined by the 
components $f^\ast(\P)\big(\substack{t\\ \und{c}}\big) := 
\P\big(\substack{f(t) \\ f(\und{c})}\big)$ and the obvious 
restriction of the operad structure on $\P$.
\end{defi}

For every $\Op(\Ch(k))$-morphism $(f,\phi) : (\CCC,\O)\to (\DDD,\P)$,
we can define a pullback functor $(f,\phi)^\ast : \Alg(\P)\to \Alg(\O)$
between the categories of algebras. Concretely, given any $\P$-algebra
$A$, i.e.\ a collection of chain complexes $A_d \in \Ch(k)$, for all
$d\in\DDD$, equipped with a $\P$-action $\alpha : \P\big(\substack{s \\ \und{d} }\big)\otimes A_{\und{d}} \to A_s$,
we define the $\O$-algebra $(f,\phi)^\ast A$ by the collection of chain complexes
$((f,\phi)^\ast A)_c := A_{f(c)}$, for all $c\in\CCC$, equipped with the $\O$-action
\begin{flalign}
\xymatrix{
\O\big(\substack{t \\ \und{c}}\big)\otimes ((f,\phi)^\ast A)_{\und{c}}\ar[r]^-{\phi \otimes \id}&
\P\big(\substack{f(t) \\ f(\und{c}) } \big)\otimes A_{f(\und{c})} \ar[r]^-{\alpha} & A_{f(t)}= ((f,\phi)^\ast A)_t
}\quad.
\end{flalign}
It is well-known (see e.g.\ \cite{BeniniSchenkelWoike}) 
that the pullback functor admits a left adjoint, 
i.e.\ we have an adjunction
\begin{flalign}\label{eqn:generaladjunction}
\xymatrix{
(f,\phi)_! \,:\, \Alg(\O) ~\ar@<0.5ex>[r]&\ar@<0.5ex>[l]  ~\Alg(\P) \,:\, (f,\phi)^\ast
}\quad,
\end{flalign}
for every $\Op(\Ch(k))$-morphism $(f,\phi) : (\CCC,\O)\to (\DDD,\P)$.
The left adjoint $(f,\phi)_!$ is often called {\em operadic left Kan extension}.
It is easy to prove that these adjunctions are compatible (in the sense of Quillen adjunctions)
with the model structures given in Theorem \ref{theo:Algmodelcat}.
\begin{propo}\label{propo:Quillenadjunctionalgebras}
For every $\Op(\Ch(k))$-morphism $(f,\phi) : (\CCC,\O)\to (\DDD,\P)$, 
the adjunction \eqref{eqn:generaladjunction} is a Quillen adjunction,
i.e.\ the right adjoint functor $(f,\phi)^\ast$ preserves fibrations and acyclic fibrations.
Furthermore, $(f,\phi)^\ast$ preserves also weak equivalences.
\end{propo}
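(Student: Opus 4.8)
The plan is to reduce everything to the component-wise description of the two model structures. By Theorem~\ref{theo:Algmodelcat}, a morphism in $\Alg(\O)$ (respectively $\Alg(\P)$) is a weak equivalence or a fibration precisely when each of its components is a quasi-isomorphism or a degree-wise surjection in $\Ch(k)$; equivalently, these model structures are created by the forgetful functors to the product categories $\Ch(k)^{\CCC}$ and $\Ch(k)^{\DDD}$ of collections of chain complexes. Since acyclic fibrations are exactly the morphisms that are simultaneously fibrations and weak equivalences, it suffices to prove that $(f,\phi)^\ast$ preserves weak equivalences and fibrations; this will establish both that \eqref{eqn:generaladjunction} is a Quillen adjunction and the supplementary claim in a single stroke.

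The key structural observation is that, on underlying collections, the pullback functor $(f,\phi)^\ast$ is nothing but reindexing along the map of colors $f:\CCC\to\DDD$. Indeed, by Definition~\ref{def:variablecolorsoperads} we have $((f,\phi)^\ast A)_c = A_{f(c)}$, so $(f,\phi)^\ast$ fits into a commutative square with the two forgetful functors and the restriction functor $f^\ast:\Ch(k)^{\DDD}\to\Ch(k)^{\CCC}$ sending $(A_d)_{d\in\DDD}$ to $(A_{f(c)})_{c\in\CCC}$. I would stress that the operad morphism $\phi$ enters only in the definition of the $\O$-action on $(f,\phi)^\ast A$ and is invisible at the level of the underlying chain complexes, hence it plays no role in detecting weak equivalences or fibrations.

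With this in hand the argument is immediate. Given a morphism $\kappa:A\to B$ in $\Alg(\P)$ with components $\kappa_d:A_d\to B_d$, the components of $(f,\phi)^\ast\kappa$ are literally the $\kappa_{f(c)}:A_{f(c)}\to B_{f(c)}$ for $c\in\CCC$. Thus if every $\kappa_d$ is a quasi-isomorphism (respectively surjective in each degree), then so is each selected component $\kappa_{f(c)}$, and therefore $(f,\phi)^\ast\kappa$ is a weak equivalence (respectively a fibration) in $\Alg(\O)$. I do not expect a genuine obstacle here: the statement is essentially a bookkeeping consequence of the component-wise nature of the model structures, and the only point requiring care is to record faithfully that pullback acts by reindexing on underlying objects, so that each component of $(f,\phi)^\ast\kappa$ is itself a component of $\kappa$.
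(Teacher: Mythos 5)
Your proof is correct and follows essentially the same route as the paper's: both observe that $(f,\phi)^\ast$ acts on underlying collections by reindexing along $f$, so each component of $(f,\phi)^\ast\kappa$ is a component $\kappa_{f(c)}$ of $\kappa$, and then invoke the component-wise definition of weak equivalences and fibrations (hence acyclic fibrations) from Theorem~\ref{theo:Algmodelcat}. No discrepancies to report.
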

\begin{proof}
Recall that an $\Alg(\P)$-morphism $\kappa : A\to B$ is a collection
of $\Ch(k)$-morphisms $\kappa : A_d\to B_d$, for all $d\in\DDD$,
that is compatible with the $\P$-actions. Applying the pullback functor
defines an $\Alg(\O)$-morphism
$(f,\phi)^\ast\kappa : (f,\phi)^\ast A\to (f,\phi)^\ast B$,
whose underlying collection of $\Ch(k)$-morphisms
is $\kappa : A_{f(c)}\to B_{f(c)}$, for all $c\in\CCC$.
Because fibrations and weak equivalences are 
defined component-wise in Theorem \ref{theo:Algmodelcat}, 
and hence so are acyclic fibrations, it follows that
the pullback functor preserves these classes of morphisms.
\end{proof}

It remains to clarify under which conditions a weak equivalence
between colored operads induces a Quillen equivalence 
between the corresponding model categories of algebras, which is
the appropriate notion of equivalence between model categories \cite{Dwyer,Hovey}. For this we shall consider
a natural generalization of the notion of weak equivalence 
between $\CCC$-colored operads established in Theorem \ref{theo:Opmodelcat}
to the case of operads with varying colors, cf.\ \cite[Definition 2.4.2]{Hinich}.
In the following definition we denote by $\O_1$ the underlying $\Ch(k)$-enriched
category of a $\CCC$-colored operad $\O \in\Op_\CCC(\Ch(k))$.
Explicitly, the objects in this category are the colors $c\in \CCC$, 
the chain complex of morphisms from $c$ to $t$
is $\O\big(\substack{ t \\ c}\big)\in\Ch(k)$, the composition
of morphisms is via the operadic composition and the identity morphisms
are given by the operadic unit. We further denote by
$H_0(\O_1)$ the (ordinary) category which is obtained by
taking the $0$th homology of all chain complexes of morphisms.
\begin{defi}\label{def:variablecolorWE}
A morphism $(f,\phi) : (\CCC,\O)\to (\DDD,\P)$ in the category of operads with varying colors
$\Op(\Ch(k))$ is called a {\em weak equivalence} if
\begin{itemize}
\item[(1)] each component $\phi : \O\big(\substack{t \\ \und{c}}\big) \to \P\big(\substack{ f(t) \\ f(\und{c}) }\big)$
is a weak equivalence in $\Ch(k)$, i.e.\ a quasi-isomorphism;

\item[(2)] the induced functor $H_0(f,\phi) : H_0(\O_1)\to H_0(\P_1)$ is an equivalence of categories.
\end{itemize}
\end{defi}
\begin{rem}
Notice that a morphism of $\CCC$-colored operads $\phi : \O\to \P$ can be equivalently regarded as an
$\Op(\Ch(k))$-morphism $(\id_\CCC , \phi) : (\CCC,\O)\to (\CCC,\P)$. In this special case, the first condition
in Definition \ref{def:variablecolorWE} implies the second one. This means that the concept
of weak equivalence introduced in Definition \ref{def:variablecolorWE} agrees with the previous
one from Theorem \ref{theo:Opmodelcat} for morphisms between operads with the same underlying set of colors.
\end{rem}

An important class of operads for which such a weak equivalence induces a Quillen equivalence
between their model categories of algebras is given by $\Sigma$-cofibrant colored operads,
cf.\ \cite[Definition 2.4.3]{Hinich}. Let us briefly recall this definition: Given any $\CCC$-colored
operad $\O\in\Op_\CCC(\Ch(k))$, note that each of its components
$\O\big(\substack{t \\ \und{c}}\big) \in\Ch(k)$ carries a canonical right action 
of the group of automorphisms of the profile $\und{c} = (c_1,\dots,c_n)$. Concretely,
this is given by restricting \eqref{eqn:operadequivariances} to the subgroup 
$\Sigma_{\und{c}} \subseteq \Sigma_n$ of those permutations $\sigma$ that preserve 
the profile, i.e.\ $\und{c}\sigma = \und{c}$. As a consequence, one may regard
$\O\big(\substack{t \\ \und{c}}\big)$ as an object in the category $\Ch(k)^{\Sigma_{\und{c}}}$ 
of chain complexes with right $\Sigma_{\und{c}}$-action. (The morphisms
in $\Ch(k)^{\Sigma_{\und{c}}}$ are $\Sigma_{\und{c}}$-equivariant chain complex morphisms.)
We note that the category $\Ch(k)^{\Sigma_{\und{c}}}$ carries  a canonical model
structure in which a morphism is a weak equivalence (respectively a fibration) if, when forgetting
the group actions, the underlying chain complex morphism is a weak equivalence (respectively a fibration)
in $\Ch(k)$. Cofibrations are determined by the left lifting property with respect to all acyclic fibrations.
\begin{defi}\label{def:Sigmacofibrant}
A $\CCC$-colored operad $\O\in \Op_\CCC(\Ch(k))$ is called {\em $\Sigma$-cofibrant}
if each component $\O\big(\substack{t\\ \und{c}} \big)$ is a cofibrant object in the model
category $\Ch(k)^{\Sigma_{\und{c}}}$. 
\end{defi}

The following result has been proven by Hinich \cite{Hinich,HinichOriginal}.
\begin{theo}\label{theo:Quillenequivalence}
Consider a weak equivalence $(f,\phi) : (\CCC,\O)\to (\DDD,\P)$ in the sense 
of Definition \ref{def:variablecolorWE}. If both $\O$ and $\P$ are $\Sigma$-cofibrant, 
then the Quillen adjunction in \eqref{eqn:generaladjunction} is a Quillen equivalence.
\end{theo}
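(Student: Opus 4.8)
The plan is to verify the two standard conditions that characterize a Quillen equivalence (see e.g.\ \cite{Hovey}). Writing $L := (f,\phi)_!$ and $R := (f,\phi)^\ast$ for the adjoint pair in \eqref{eqn:generaladjunction}, it suffices to show that (i) $R$ reflects weak equivalences between fibrant objects, and (ii) for every cofibrant $A\in\Alg(\O)$ the unit $\eta_A : A\to RLA$ is a weak equivalence. Two simplifications apply throughout. By Theorem \ref{theo:Algmodelcat} the fibrations are the degree-wise surjections, so \emph{every} object of $\Alg(\O)$ and $\Alg(\P)$ is fibrant; hence the derived unit is just $\eta_A$ and no fibrant replacement is needed. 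Moreover $R$ already preserves all weak equivalences by Proposition \ref{propo:Quillenadjunctionalgebras}, so only reflection has to be established for (i).

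For (i), let $\kappa : A\to B$ in $\Alg(\P)$ be such that $R\kappa$ is a weak equivalence, i.e.\ $\kappa_{f(c)} : A_{f(c)}\to B_{f(c)}$ is a quasi-isomorphism for all $c\in\CCC$; I must deduce that $\kappa_d$ is a quasi-isomorphism for all $d\in\DDD$. This is exactly where condition (2) of Definition \ref{def:variablecolorWE} enters: since $H_0(f,\phi)$ is essentially surjective, every $d$ is isomorphic in $H_0(\P_1)$ to some $f(c)$, and such an isomorphism is represented by degree-zero cycles $u\in\P\big(\substack{d\\f(c)}\big)$ and $v\in\P\big(\substack{f(c)\\d}\big)$ whose operadic composites are homologous to the operadic units. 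Acting with $u$ and $v$ produces chain maps $A_{f(c)}\to A_d$ and $A_d\to A_{f(c)}$ which, because the $\O$-action of $\oone$ is the identity \eqref{eqn:alg3} and the action is a chain map in the operadic variable, are mutually inverse isomorphisms on homology; the same holds for $B$, and compatibility of $\kappa$ with the action makes these isomorphisms natural in $\kappa$. The resulting commuting square forces $H_\bullet(\kappa_d)$ to be an isomorphism once $H_\bullet(\kappa_{f(c)})$ is.

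For (ii), I reduce to free algebras and then propagate along cell attachments. Writing $F_\O\dashv U_\O$ and $F_\P\dashv U_\P$ for the free–forgetful adjunctions over $\Ch(k)^\CCC$ and $\Ch(k)^\DDD$, the identity $U_\O R = f^\ast U_\P$ dualizes to a natural isomorphism $L F_\O\cong F_\P f_!$, where $f_!$ is left Kan extension along $f$, i.e.\ $(f_! X)_d=\bigoplus_{c\in f^{-1}(d)}X_c$. Unwinding the Schur-functor formula $F_\O(X)_t=\bigoplus_{n\geq 0}\big(\bigoplus_{\und c}\O\big(\substack{t\\\und c}\big)\otimes X_{\und c}\big)_{\Sigma_n}$, the $t$-component of $\eta_{F_\O(X)}$ is the map on $\Sigma_n$-coinvariants induced by $\phi\otimes\id$, and decomposing over $\Sigma_n$-orbits of profiles reduces each summand to the map $\big(\O\big(\substack{t\\\und c}\big)\otimes X_{\und c}\big)_{\Sigma_{\und c}}\to\big(\P\big(\substack{f(t)\\f(\und c)}\big)\otimes X_{\und c}\big)_{\Sigma_{\und c}}$. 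Here $\Sigma$-cofibrancy (Definition \ref{def:Sigmacofibrant}) is decisive: since $\O\big(\substack{t\\\und c}\big)$ and $\P\big(\substack{f(t)\\f(\und c)}\big)$ are cofibrant in $\Ch(k)^{\Sigma_{\und c}}$ and $\phi$ is a weak equivalence between them by condition (1) of Definition \ref{def:variablecolorWE}, the functor $(\,\cdot\otimes X_{\und c})_{\Sigma_{\und c}}$ is homotopically meaningful and sends this weak equivalence to a weak equivalence whenever $X$ is built from cofibrant chain complexes. Thus $\eta$ is a weak equivalence on all such free algebras.

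It then remains to pass from free algebras to arbitrary cofibrant $A$. Every cofibrant $A$ is a retract of a cellular $\O$-algebra, obtained as a transfinite composite of pushouts of maps $F_\O(j)$ with $j$ a generating cofibration of $\Ch(k)^\CCC$. As a left adjoint $L$ preserves these colimits and pushouts, whereas $R$ and the class of weak equivalences are stable under the relevant filtered colimits along cofibrations and under retracts; hence the collection of objects on which $\eta$ is a weak equivalence is closed under all of these operations. The one genuinely delicate point --- and the main obstacle --- is the pushout step: attaching a cell along $F_\O(j)$ yields a pushout in $\Alg(\O)$ whose underlying complex carries the standard filtration (cf.\ \cite{BergerMoerdijk}), whose associated graded pieces are again $\Sigma_n$-coinvariants of tensor products of operad components with (co)fibrant data. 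Controlling these quotients homotopically is precisely where $\Sigma$-cofibrancy of both $\O$ and $\P$ is indispensable: it guarantees that each comparison map between associated graded pieces is a weak equivalence and that the filtration is one of cofibrations, so that the weak equivalence survives the colimit. Assembling the free-algebra case with this closure argument proves (ii), and together with (i) shows that the Quillen adjunction \eqref{eqn:generaladjunction} is a Quillen equivalence.
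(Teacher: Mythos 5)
A preliminary remark on the comparison: the paper does not prove this theorem at all --- it is imported from Hinich \cite{HinichOriginal,Hinich} --- so your attempt must be measured against the literature proof rather than an in-paper argument. Your skeleton is the standard one and its easy parts are correct: all objects of $\Alg(\O)$ and $\Alg(\P)$ are fibrant (fibrations are the degree-wise surjections, cf.\ Theorem \ref{theo:Algmodelcat}), so the derived unit is the unit; $(f,\phi)^\ast$ preserves weak equivalences by Proposition \ref{propo:Quillenadjunctionalgebras}; your reflection argument via condition (2) of Definition \ref{def:variablecolorWE} is sound (degree-zero cycles $u,v$ realizing the isomorphism $d\cong f(c)$ in $H_0(\P_1)$ act as chain maps on any algebra, homologous operations induce equal maps on homology, and compatibility of $\kappa$ with the action gives strictly commuting squares); and your identification of the unit on free algebras with the map induced by $\phi\otimes\id$ on $\Sigma_{\und{c}}$-coinvariants is right.

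The genuine gap is the cell-attachment step, which you correctly flag as ``the main obstacle'' but then merely assert. The flaw is concrete: the associated graded pieces of the filtration on a pushout $A_\beta\to A_{\beta+1}$ along $F_\O(j)$ are \emph{not} coinvariants of tensor products of components of $\O$ with cofibrant data, as your sketch suggests; they have the form $\O_{A_\beta}[t]\otimes_{\Sigma_t}(D/C)^{\otimes t}$, where $\O_{A_\beta}$ is the \emph{enveloping operad} of the previously built stage $A_\beta$, an object that depends on the whole cell structure and not just on $\O$. Consequently, comparing the $\O$-side with the $\P$-side (note that $(f,\phi)^\ast$ does not preserve pushouts, so one cannot induct on $\eta$ directly but must work with underlying chain complexes through these filtrations) requires knowing that the canonical map from $\O_{A_\beta}$ to the pullback of the enveloping operad of $(f,\phi)_!A_\beta$ is a componentwise weak equivalence between suitably $\Sigma$-cofibrant objects. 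This does not follow from $\Sigma$-cofibrancy of $\O$ and $\P$ alone; it is proved by a simultaneous induction over the cells (base case: the enveloping operad of the initial algebra is the operad itself, which is where the hypothesis $\phi$ weak equivalence enters; inductive step: each cell attachment modifies the enveloping operads by an analogous filtered pushout, and one must show weak equivalence and $\Sigma$-cofibrancy persist). That simultaneous induction is the technical heart of Hinich's theorem and of the analogous rectification arguments in \cite{BergerMoerdijkOriginal,BergerMoerdijk,Spitzweck,PavlovScholbach}, and it is absent from your proposal: the core of the theorem is assumed rather than proved.
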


In our case of interest $k\supseteq\bbQ$, we have the following
straightforward result which implies that every colored operad whose components are 
cofibrant chain complexes is $\Sigma$-cofibrant.
\begin{propo}\label{propo:componentcofimpliesSigmacof}
Let  $k\supseteq \bbQ$ and $G$ be any finite group.
Then every object $X\in \Ch(k)^G$ whose underlying chain 
complex is a cofibrant object in $\Ch(k)$ is a cofibrant object in $\Ch(k)^G$. 
As a consequence, every $\CCC$-colored operad $\O\in\Op_\CCC(\Ch(k))$
whose components $\O\big(\substack{t\\\und{c}}\big)$ are cofibrant objects
in $\Ch(k)$ is $\Sigma$-cofibrant.
\end{propo}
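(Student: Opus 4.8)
The plan is to exploit the free-forgetful adjunction between $\Ch(k)$ and $\Ch(k)^G$ together with an averaging argument that is available precisely because $|G|$ is invertible in $k$ (recall $k\supseteq\bbQ$). First I would recall the adjunction
\begin{flalign}
\xymatrix{
F \,:\, \Ch(k) ~\ar@<0.5ex>[r]&\ar@<0.5ex>[l]  ~\Ch(k)^G \,:\, U
}\quad,
\end{flalign}
where $U$ forgets the $G$-action and its left adjoint is the free functor $F(V) := V\otimes k[G]$, endowed with the right $G$-action on the group algebra tensor factor. By the very definition of the model structure on $\Ch(k)^G$ recalled before Definition \ref{def:Sigmacofibrant}, the functor $U$ preserves fibrations and weak equivalences, hence acyclic fibrations; thus $U$ is right Quillen and its left adjoint $F$ is left Quillen. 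Since left Quillen functors preserve cofibrant objects (as $F(0)=0$ and $F$ preserves cofibrations), $F(V)$ is cofibrant in $\Ch(k)^G$ whenever $V$ is cofibrant in $\Ch(k)$.

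The key step is then to exhibit an arbitrary $X\in\Ch(k)^G$ as a retract of $F(U(X))$ in $\Ch(k)^G$. The counit $\epsilon_X : F(U(X)) = U(X)\otimes k[G]\to X$ is the action morphism $x\otimes h\mapsto x\cdot h$, and I would construct an explicit $G$-equivariant section by averaging,
\begin{flalign}
s : X\longrightarrow U(X)\otimes k[G]\quad,\qquad s(x) := \frac{1}{|G|}\sum_{g\in G} (x\cdot g^{-1})\otimes g\quad,
\end{flalign}
which is well-defined since $|G|^{-1}\in k$. A short computation confirms that $s$ is a chain map (the differential on $U(X)\otimes k[G]$ acts only on the first tensor factor and commutes with the $G$-action), that it is right $G$-equivariant (by reindexing the sum $g\mapsto gh$), and that $\epsilon_X\circ s = \id$ (since $\sum_{g}(x\cdot g^{-1})\cdot g = |G|\,x$). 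Hence $X$ is a retract of $F(U(X))$ in $\Ch(k)^G$.

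Combining the two ingredients gives the first assertion: if the underlying complex $U(X)$ is cofibrant in $\Ch(k)$, then $F(U(X))$ is cofibrant in $\Ch(k)^G$ by the first step, and cofibrant objects are closed under retracts in any model category, so $X$ is cofibrant in $\Ch(k)^G$. The consequence for operads is then immediate: each profile automorphism group $\Sigma_{\und{c}}$ is a finite group, and each component $\O\big(\substack{t\\\und{c}}\big)$ is by hypothesis cofibrant in $\Ch(k)$, so applying the first part with $G=\Sigma_{\und{c}}$ shows that every component is cofibrant in $\Ch(k)^{\Sigma_{\und{c}}}$, i.e.\ $\O$ is $\Sigma$-cofibrant in the sense of Definition \ref{def:Sigmacofibrant}.

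I expect the only genuinely delicate point to be the hypothesis $k\supseteq\bbQ$: without invertibility of $|G|$ the averaging section $s$ does not exist, and the statement can in fact fail in positive characteristic, where an object with cofibrant underlying complex need not be projectively cofibrant. Everything else — the Quillen adjunction, the closure of cofibrant objects under retracts, and the equivariance and section checks for $s$ — is routine.
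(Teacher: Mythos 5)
Your proof is correct, but it takes a genuinely different route from the paper's. The paper argues directly with the lifting property: given a lifting problem in $\Ch(k)^G$ against an acyclic fibration, it forgets the $G$-actions, produces a (possibly non-equivariant) lift $\widetilde{l}$ in $\Ch(k)$ using cofibrancy of the underlying complex, and then replaces $\widetilde{l}$ by its group average $l(x) = \frac{1}{\vert G\vert}\sum_{g\in G}\big(\widetilde{l}(x\cdot g)\big)\cdot g^{-1}$, which is an equivariant lift. You instead package the same characteristic-zero input (invertibility of $\vert G\vert$) into a Maschke-type splitting: the averaging section $s$ exhibits $X$ as a retract of the induced object $F(U(X)) = U(X)\otimes k[G]$, which is cofibrant because $F$ is left Quillen for the free-forget Quillen adjunction, and cofibrant objects are closed under retracts. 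Both arguments hinge on the same averaging trick, but they deploy it at different places: the paper averages the solution of each individual lifting problem, while you average once and for all to split the counit, after which everything is formal model-category theory. The paper's proof is shorter and needs no machinery beyond the definition of a cofibration; yours is more structural and reusable — it isolates exactly where $k\supseteq\bbQ$ enters (existence of the equivariant section $s$, i.e.\ relative projectivity of every $G$-object over its underlying object), and the remaining steps (left Quillen functors preserve cofibrant objects, retract closure) would apply verbatim in any transferred model structure on $G$-objects. The application to operads, taking $G=\Sigma_{\und{c}}$ componentwise, is identical in both treatments.
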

\begin{proof}
By definition of cofibrations in $\Ch(k)^G$, 
we have to prove that there exists a lifting
\begin{flalign}\label{eqn:liftingproblem}
\xymatrix{
&& Y\ar[d]^-{p}\\
X\ar@{-->}[rru]^-{l} \ar[rr]_-{h}&& Z
}
\end{flalign}
in the category $\Ch(k)^G$, for all acyclic fibrations $p : Y\to Z$ and all morphisms $h : X\to Z$ in $\Ch(k)^G$.
Forgetting the right $G$-actions in this diagram, we obtain a lifting $\widetilde{l} : X\to Y$ in the category
of chain complexes  because $X$ is by hypothesis a cofibrant object in $\Ch(k)$.
However, such $\widetilde{l}$ is not necessarily $G$-equivariant.
Using the crucial property that $k\supseteq \bbQ$ contains the rationals, we can always define a
$G$-equivariant chain complex morphism by group averaging
\begin{flalign}
l :  X\longrightarrow Y~,~~x\longmapsto l(x) := \frac{1}{\vert G\vert} \sum_{g\in G} \big(\widetilde{l}(x \cdot g) \big)\cdot g^{-1}\quad,
\end{flalign}
where we denote right $G$-actions by $\cdot$ and $\vert G\vert$ is the number of elements in $G$.
It is easy to confirm that this defines a lifting in \eqref{eqn:liftingproblem}.
\end{proof}


\section{\label{sec:strict}Strict algebraic quantum field theories in chain complexes}

\subsection{Orthogonal categories and quantum field theories}
Given any small category $\CC$, we denote by
$\mathrm{Mor}\,\CC \, {}_\mathrm{t}^{}\!\times\!{}^{}_\mathrm{t} \,\mathrm{Mor}\,\CC$
the set of pairs of $\CC$-morphisms whose targets coincide. An element
in this set is of the form $(f_1 : c_1 \to c, f_2 : c_2\to c)$. The following
concept of orthogonal categories has been introduced in \cite{BeniniSchenkelWoike}.
\begin{defi}\label{def:OCat}
\begin{itemize}
\item[a)] An {\em orthogonal category} is a pair $\ovr{\CC} = (\CC,\perp)$
consisting of a small category $\CC$ and a subset ${\perp} \subseteq 
\mathrm{Mor}\,\CC \, {}_\mathrm{t}^{}\!\times\!{}^{}_\mathrm{t} \,\mathrm{Mor}\,\CC$
(called {\em orthogonality relation}) satisfying the following properties:
\begin{itemize}
\item[(1)] {\em Symmetry:} If $(f_1,f_2)\in {\perp}$, then $(f_2,f_1)\in{\perp}$.
\item[(2)] {\em Stability under post-composition:} If $(f_1,f_2)\in{\perp}$, then
$(g\,f_1, g\, f_2)\in{\perp}$, for all composable $\CC$-morphisms $g$.
\item[(3)] {\em Stability under pre-composition:} If $(f_1,f_2)\in{\perp}$, then
$(f_1\,h_1,  f_2\,h_2)\in{\perp}$, for all composable $\CC$-morphisms $h_1$ and $h_2$.
\end{itemize}
Elements $(f_1,f_2)\in{\perp}$ are called {\em orthogonal pairs} and they are also denoted by $f_1\perp f_2$.

\item[b)] Given two orthogonal categories $\ovr{\CC}=(\CC,\perp_\CC)$ and 
$\ovr{\DD} = (\DD,\perp_\DD)$, an {\em orthogonal functor} $F : \ovr{\CC}\to \ovr{\DD}$ 
is a functor $F : \CC\to\DD$ that preserves the orthogonality relations, i.e.\
such that $f_1\perp_\CC f_2$ implies $F(f_1) \perp_\DD F(f_2)$. 

\item[c)] We denote by $\OCat$ the category whose objects are orthogonal categories and whose morphisms
are orthogonal functors.
\end{itemize}
\end{defi}

Now we shall present examples of orthogonal categories that are relevant
for locally covariant quantum field theory \cite{Brunetti,FewsterVerch}. 
Further examples can be found in \cite{BeniniSchenkelWoike}, 
including orthogonal categories that are relevant for 
ordinary algebraic quantum field theories on a fixed spacetime and
chiral conformal quantum field theories.
\begin{ex}\label{ex:LocOCat}
Let $\Loc$ be any small category that is equivalent to the usual category of oriented, 
time-oriented and globally hyperbolic Lorentzian manifolds of a fixed dimension $m \geq 2$, 
see e.g.\ \cite{Brunetti,FewsterVerch}. We often follow the standard 
terminology and use the word {\em spacetimes} for objects $M\in\Loc$  
and {\em spacetime embeddings} for $\Loc$-morphisms $f : M\to M^\prime$. 
Let us equip $\Loc$ with the following orthogonality relation:
$(f_1 : M_1\to M)\perp_\Loc (f_2 : M_2\to M)$ if and only if the images
$f_1(M_1)$ and $f_2(M_2)$ are causally disjoint subsets in $M$.
Then $\ovr{\Loc} := (\Loc,\perp_\Loc) \in \OCat$ is an orthogonal category.
\sk

Let $\Locc\subseteq \Loc$ be the full subcategory of all oriented, 
time-oriented and globally hyperbolic Lorentzian manifolds whose underlying manifold
is diffeomorphic to $\bbR^m$. With a slight abuse of notation, we shall refer to objects $M\in\Locc$ 
as {\em diamond spacetimes}. The orthogonality relation 
on $\ovr{\Loc}$ restricts to an orthogonality relation on $\Locc\subseteq \Loc$. Explicitly,
 $(f_1 : M_1\to M)\perp_{\Locc} (f_2 : M_2\to M)$
if and only if $(f_1 : M_1\to M)\perp_\Loc (f_2 : M_2\to M)$. The embedding
functor $j: \Locc \to\Loc$ defines an orthogonal functor $j : \ovr{\Locc}\to\ovr{\Loc}$.
\end{ex}

Let us denote by $\dgAlg(k)$ the category of (possibly unbounded) {\em differential graded
algebras} over a commutative and unital ring $k\supseteq \bbQ$, i.e.\ $\dgAlg(k)$
is the category of associative and unital algebras in the symmetric monoidal 
category $\Ch(k)$ of chain complexes of $k$-modules, cf.\ Section \ref{subsec:complexes}. 
Given any small category $\CC$, we consider the functor category $\dgAlg(k)^{\CC}$
whose objects are covariant functors $\AAA : \CC\to \dgAlg(k)$
and whose morphisms are natural transformations $\zeta : \AAA \Rightarrow \BBB$
between such functors $\AAA,\BBB : \CC\to\dgAlg(k)$. 
Algebraic quantum field theories with values in chain complexes
may be formalized by the following general definition.
\begin{defi}\label{def:QFTcats}
Let $\ovr{\CC} = (\CC,\perp)$ be an orthogonal category. 
\begin{itemize}
\item[a)] A functor $\AAA: \CC\to\dgAlg(k)$ is called
{\em $\perp$-commutative} if for all  $(f_1: c_1\to c)\perp (f_2:c_2\to c)$
the diagram
\begin{flalign}
\xymatrix{
\AAA(c_1) \otimes \AAA(c_2) \ar[rr]^-{\AAA(f_1)\otimes \AAA(f_2)} \ar[d]_-{\AAA(f_1)\otimes \AAA(f_2)} && \AAA(c)\otimes \AAA(c) \ar[d]^-{\mu_c^\op}\\
\AAA(c)\otimes \AAA(c) \ar[rr]_-{\mu_c}&& \AAA(c)
}
\end{flalign}
in $\Ch(k)$ commutes. Here $\mu_c$ (respectively $\mu_c^\op :=\mu_c\,\tau $) denotes the (opposite)
multiplication on the differential graded algebra $\AAA(c)\in\dgAlg(k)$.

\item[b)] The category of {\em $\Ch(k)$-valued quantum field theories on $\ovr{\CC}$}
is defined as the full subcategory
\begin{flalign}
\QFT(\ovr{\CC}) \,\subseteq\, \dgAlg(k)^\CC
\end{flalign}
whose objects are all $\perp$-commutative functors. 
\end{itemize}
\end{defi}

\begin{ex}\label{ex:LCQFT}
For the orthogonal category $\ovr{\Loc}$ defined in Example \ref{ex:LocOCat},
a functor $\AAA : \Loc\to\dgAlg(k)$ is $\perp$-commutative precisely 
when the restricted graded commutators
\begin{flalign}
\big[\AAA(f_1)(-), \AAA(f_2)(-)\big] \,:\, \AAA(M_1)\otimes\AAA(M_2)~\longrightarrow~ \AAA(M)
\end{flalign}
are zero, for all pairs of spacetime embeddings $f_1 : M_1\to M$ and $f_2 : M_2\to M$ 
whose images are causally disjoint. Hence, the $\perp$-commutativity property 
of the functor $\AAA$ is precisely the (graded) Einstein causality axiom of locally 
covariant quantum field theory. The category $\QFT(\ovr{\Loc})$ from
Definition \ref{def:QFTcats} is thus the category of $\Ch(k)$-valued {\em off-shell} 
locally covariant quantum field theories, i.e.\ functors $\AAA : \Loc\to\dgAlg(k)$ 
that satisfy the Einstein causality axiom, but not necessarily the time-slice axiom. 
If we would like to focus only on those theories which satisfy time-slice, we can 
localize the orthogonal category $\ovr{\Loc}$ 
at the set of Cauchy morphism $W$, which results in another orthogonal category 
$\ovr{\Loc[W^{-1}]}$ together with an orthogonal localization functor
$L : \ovr{\Loc} \to \ovr{\Loc[W^{-1}]}$, see \cite{BeniniSchenkelWoike}. 
The category $\QFT(\ovr{\Loc[W^{-1}]})$ then describes $\Ch(k)$-valued 
{\em on-shell} locally covariant quantum field theories, i.e.\  functors 
$\AAA : \Loc\to\dgAlg(k)$ that satisfy both the Einstein causality axiom and 
the time-slice axiom. 
\end{ex}

\begin{ex}\label{ex:BRSTBV}
Constructions of perturbative quantum gauge theories via the BRST/BV formalism
produce $\perp$-commutative functors $\AAA : \Loc\to \dgAlg(k)$, where
$k = \bbC[[\hbar,g]]$ is the ring of formal power series in Planck's constant 
$\hbar$ and the coupling constant $g$, see e.g.\ \cite{Hollands,FredenhagenRejzner,FredenhagenRejzner2}. 
Here unbounded chain complexes are crucial to support both the ghost fields (having negative
degree in our homological degree convention) and the anti fields (having positive degree).
\end{ex}

\subsection{The underlying operads}
We have shown in \cite{BeniniSchenkelWoike} that the category
$\QFT(\ovr{\CC})$ of quantum field theories on an orthogonal category
$\ovr{\CC}$ (cf.\ Definition \ref{def:QFTcats}) admits a description in terms of 
the category of algebras over a suitable colored operad $\O_{\ovr{\CC}}$. 
Before we can provide a definition of this operad, we have to introduce some notation.
For a small category $\CC$, we denote by $\CC_0$ its set of objects.
Given any tuple $(\und{c},t) = ((c_1,\dots,c_n),t)\in\CC_0^{n+1}$ of objects,
we denote by $\CC(\und{c},t) := \prod_{i=1}^n \CC(c_i,t)$ the product of
$\Hom$-sets. Its elements will be denoted by symbols like
$\und{f} = (f_1,\dots,f_n)\in \CC(\und{c},t)$. For a set
$S\in\Set$ and a chain complex $V\in\Ch(k)$, we define the tensoring
\begin{flalign}\label{eqn:tensoringgeneral}
S\otimes V \,:= \, \bigoplus_{s\in S} V\, \in \, \Ch(k)\quad.
\end{flalign}
For example, the tensoring
\begin{flalign}\label{eqn:tensoring}
S\otimes k = \bigoplus_{s\in S} k\, \in \, \Ch(k)
\end{flalign}
of a set $S$ and the monoidal unit  $k\in\Ch(k)$
is a chain complex concentrated in degree $0$
with trivial differential $\dd=0$. The $k$-module
in degree $0$ is the free $k$-module generated by $S$.
With a slight abuse of notation, we denote
for an element $s\in S$ the unit element $1\in k$
of the $s$-component of the direct sum \eqref{eqn:tensoring} 
by the same symbol $s\in S\otimes k$.
\begin{defi}\label{def:AQFToperads}
Let $\ovr{\CC} = (\CC,\perp)$ be an orthogonal category.
The {\em $\CC_0$-colored operad $\O_{\ovr{\CC}}\in\Op_{\CC_0}(\Ch(k))$ 
of $\Ch(k)$-valued quantum field theories on $\ovr{\CC}$}
is defined by the following data:
\begin{itemize}
\item[(i)] For any $(\und{c},t) \in\CC_0^{n+1}$, we define the chain complex
of operations by
\begin{flalign}\label{eqn:AQFToperadoperations}
\O_{\ovr{\CC}}\big(\substack{t \\ \und{c}}\big) \,:=\, \big(\Sigma_n \times \CC(\und{c},t)\big)\big/{\sim_\perp}\otimes k\,\in\,\Ch(k)\quad,
\end{flalign}
where $\Sigma_n$ is the symmetric group on $n$ letters and 
the equivalence relation is defined as follows:
$(\sigma,\und{f}) \sim_\perp (\sigma^\prime,\und{f}^\prime)$ if and only if
$\und{f} = \und{f}^\prime$ and the right permutation
$\sigma\sigma^{\prime\,-1} : \und{f}\sigma^{-1} \to \und{f}\sigma^{\prime\,-1}$
is generated by transpositions of adjacent orthogonal pairs.

\item[(ii)] For any $c\in \CC_0$, we define the operadic unit by 
\begin{flalign}
\oone : k \longrightarrow \O_{\ovr{\CC}}\big(\substack{c \\ c}\big)~,~~1 \longmapsto
[e,\id_c]\quad,
\end{flalign}
where $e\in\Sigma_1$ is the identity permutation.

\item[(iii)] For any $(\und{a},t)\in \CC_0^{n+1}$ and $(\und{b}_i,a_i)\in \CC_0^{k_i +1}$, 
we define the operadic composition by
\begin{flalign}
\nn \gamma :   \O_{\ovr{\CC}}\big(\substack{t\\\und{a}}\big) \otimes\bigotimes_{i=1}^n 
\O_{\ovr{\CC}}\big(\substack{a_i\\\und{b}_i}\big) & ~\longrightarrow~
\O_{\ovr{\CC}}\big(\substack{t\\ \und{b}}\big)~~,\\
[\sigma,\und{f}]\otimes \Motimes_{i=1}^n [\sigma_i,\und{g}_i]
&~ \longmapsto~ \big[ \sigma(\sigma_1,\dots,\sigma_n) , \und{f}(\und{g}_1,\dots,\und{g}_n)\big]\quad,
\end{flalign}
where $\sigma(\sigma_1,\dots,\sigma_n) = \sigma\langle k _{\sigma^{-1}(1)} , \dots, k_{\sigma^{-1}(n)}\rangle
\, (\sigma_1\oplus\cdots\oplus \sigma_n)$ is the group multiplication in $\Sigma_{k_1+\cdots+k_n}$
of the corresponding block permutation and block sum permutation, and 
$\und{f}(\und{g}_1,\dots,\und{g}_n) = (f_1\, g_{11}, \dots, f_n\, g_{n k_n})$ is given by 
composition of $\CC$-morphisms.

\item[(iv)] For any $(\und{c},t) \in\CC_0^{n+1}$ and $\sigma^\prime\in\Sigma_n$,  
we define the permutation action by
\begin{flalign}
\O_{\ovr{\CC}}(\sigma^\prime) :  \O_{\ovr{\CC}}\big(\substack{t\\ \und{c}}\big)\longrightarrow
\O_{\ovr{\CC}}\big(\substack{t\\ \und{c}\sigma^\prime}\big)~,~~[\sigma,\und{f}] \longmapsto
[\sigma\sigma^\prime, \und{f}\sigma^\prime]\quad.
\end{flalign}
\end{itemize}
\end{defi}

\noindent The following results have been proven in \cite{BeniniSchenkelWoike}.
\begin{propo}\label{propo:functorialityoperads}
The assignment $\ovr{\CC}\mapsto \O_{\ovr{\CC}} $ of the colored operads
from Definition \ref{def:AQFToperads} naturally extends to a functor $\O_{(-)} : \OCat\to \Op(\Ch(k))$.
\end{propo}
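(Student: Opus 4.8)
The plan is to construct the functor $\O_{(-)}$ explicitly and then check functoriality by direct inspection, the only genuinely subtle point being well-definedness on the quotient defining the operad components. On objects I set $\O_{(-)}(\ovr{\CC}) := (\CC_0, \O_{\ovr{\CC}})$, with $\O_{\ovr{\CC}}$ the colored operad of Definition \ref{def:AQFToperads}. Given an orthogonal functor $F : \ovr{\CC}\to\ovr{\DD}$, its underlying functor restricts to a map of object sets $F_0 : \CC_0 \to \DD_0$, which will serve as the color map. It then remains to produce an $\Op_{\CC_0}(\Ch(k))$-morphism $\O_F : \O_{\ovr{\CC}} \to F_0^\ast(\O_{\ovr{\DD}})$; recalling that $F_0^\ast(\O_{\ovr{\DD}})\big(\substack{t\\\und{c}}\big) = \O_{\ovr{\DD}}\big(\substack{F(t)\\F(\und{c})}\big)$, I would define it on generators by $[\sigma,\und{f}]\mapsto [\sigma, F(\und{f})]$, where $F(\und{f}) := (F(f_1),\dots,F(f_n)) \in \DD(F(\und{c}),F(t))$, extended $k$-linearly. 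Since these components are concentrated in degree zero with trivial differential, any such set-map on generators is automatically a $\Ch(k)$-morphism.

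The step I expect to be the crux is checking that $\O_F$ is well defined, i.e.\ that it descends to the quotient by $\sim_\perp$. Suppose $(\sigma,\und{f})\sim_\perp(\sigma^\prime,\und{f})$, so that $\sigma\sigma^{\prime\,-1}$ is generated by transpositions of adjacent orthogonal pairs of $\ovr{\CC}$. Since $F$ is an \emph{orthogonal} functor, it maps orthogonal pairs to orthogonal pairs (cf.\ Definition \ref{def:OCat}), so each such elementary transposition acting on $\und{f}$ corresponds to a transposition of an adjacent orthogonal pair of $\ovr{\DD}$ acting on $F(\und{f})$. Hence $(\sigma,F(\und{f}))\sim_\perp(\sigma^\prime,F(\und{f}))$ in $\O_{\ovr{\DD}}\big(\substack{F(t)\\F(\und{c})}\big)$ and the assignment is well defined. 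This is the only place where the orthogonality-preservation hypothesis on $F$ enters; everything else is pure bookkeeping.

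It then remains to verify that $\O_F$ respects the operadic structure and that $F\mapsto(F_0,\O_F)$ is functorial, both of which follow immediately from functoriality of the underlying functor $F$. Equivariance holds because $F$ acts only on the $\CC$-morphism entries and commutes with right permutation, $F(\und{f}\sigma^\prime) = F(\und{f})\sigma^\prime$; the operadic units are preserved because $F(\id_c) = \id_{F(c)}$, so $[e,\id_c]$ is sent to $[e,\id_{F(c)}]$; and the operadic compositions are preserved because $F$ preserves composition of $\CC$-morphisms, $F(\und{f}(\und{g}_1,\dots,\und{g}_n)) = F(\und{f})(F(\und{g}_1),\dots,F(\und{g}_n))$, while the block-permutation part $\sigma(\sigma_1,\dots,\sigma_n)$ is untouched since it involves no $\CC$-morphisms. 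For functoriality of $\O_{(-)}$, the equality $\O_{\id_{\ovr{\CC}}} = \id$ is clear, and for composable orthogonal functors $F:\ovr{\CC}\to\ovr{\DD}$, $G:\ovr{\DD}\to\ovr{\EE}$ the identity $(G\circ F)(\und{f}) = G(F(\und{f}))$ on generators gives $\O_{G\circ F} = F_0^\ast(\O_G)\circ\O_F$, which under the composition law of $\Op(\Ch(k))$ (Definition \ref{def:variablecolorsoperads}) is exactly $(G_0,\O_G)\circ(F_0,\O_F)$. This establishes that $\O_{(-)}:\OCat\to\Op(\Ch(k))$ is a well-defined functor.
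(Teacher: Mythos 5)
Your proposal is correct and matches the intended argument: this paper does not actually spell out a proof (it defers to \cite{BeniniSchenkelWoike}), but the construction given there is essentially yours --- define $\O_F$ on generators by $[\sigma,\und{f}]\mapsto[\sigma,F(\und{f})]$, observe that descent to the $\sim_\perp$-quotient is exactly where preservation of orthogonal pairs by $F$ enters, and reduce compatibility with units, compositions, permutation actions and functoriality of $\O_{(-)}$ to functoriality of $F$. The only cosmetic difference is that the reference first establishes this for the $\Set$-valued operads with components $\big(\Sigma_n\times\CC(\und{c},t)\big)/{\sim_\perp}$ and then transports the result to $\Ch(k)$ along the base-change functor $S\mapsto S\otimes k$, whereas you work directly with the $\Ch(k)$-valued components; since these are free $k$-modules on their degree-zero generators, the two formulations agree.
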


\begin{theo}\label{theo:QFTsareOalgebras}
For any orthogonal category $\ovr{\CC}$ there exists an isomorphism
\begin{flalign}
\Alg(\O_{\ovr{\CC}}) \,\cong\, \QFT(\ovr{\CC})
\end{flalign}
between the category of $\O_{\ovr{\CC}}$-algebras and the category of
$\Ch(k)$-valued quantum field theories on $\ovr{\CC}$, cf.\ Definition \ref{def:QFTcats}.
This isomorphism is natural in $\ovr{\CC}\in\OCat$.
\end{theo}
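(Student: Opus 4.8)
The plan is to unravel the definition of an $\O_{\ovr{\CC}}$-algebra and match its data piece by piece with the structure of a $\perp$-commutative functor $\AAA:\CC\to\dgAlg(k)$, then to promote this object-level correspondence to an isomorphism of categories that is natural in $\ovr{\CC}$. Since each operad component \eqref{eqn:AQFToperadoperations} is a free $k$-module on a set of equivalence classes, the universal property of the tensoring $S\otimes k$ lets me describe an $\O_{\ovr{\CC}}$-action entirely by specifying, for each generating operation $[\sigma,\und{f}]$, a single $\Ch(k)$-morphism $A_{\und{c}}\to A_t$, subject to the relations coming from $\sim_\perp$ and from the operad-algebra axioms \eqref{eqn:alg1}--\eqref{eqn:alg3}.

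First I would extract from a given $\O_{\ovr{\CC}}$-algebra $A=\{A_c\}$ the three ingredients of a $\perp$-commutative functor. Setting $\AAA(c):=A_c$, the unique nullary operation with target $c$ supplies a unit $k\to A_c$, the binary operation $[e,(\id_c,\id_c)]$ supplies a multiplication $\mu_c:A_c\otimes A_c\to A_c$, and each unary operation $[e,f]$ for $f\in\CC(c,t)$ supplies a map $\AAA(f):A_c\to A_t$. The associativity, unitality and equivariance axioms then translate, via the explicit operadic composition of Definition \ref{def:AQFToperads}(iii), into the statements that each $\AAA(c)$ is a differential graded algebra, that $\AAA$ is a functor, and that each $\AAA(f)$ is a morphism of differential graded algebras. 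The $\perp$-commutativity of $\AAA$ is the only genuinely new point: for an orthogonal pair $f_1\perp f_2$ the relation $\sim_\perp$ forces $[e,(f_1,f_2)]=[\tau,(f_1,f_2)]$ in the operad, and feeding this identity through the action together with the equivariance axiom \eqref{eqn:alg1} yields precisely the commuting square of Definition \ref{def:QFTcats}(a).

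Conversely, given a $\perp$-commutative functor $\AAA$, I would define the action of a generator $[\sigma,\und{f}]$ on $A_{\und{c}}=\bigotimes_i \AAA(c_i)$ by first permuting the tensor factors according to $\sigma$ with the Koszul sign, then applying $\AAA(f_i)$ to each factor, and finally multiplying the results in $\AAA(t)$ using the iterated multiplication $\mu_t$. The crucial and most delicate step is to verify that this assignment descends to the quotient by $\sim_\perp$: two representatives differing by an adjacent orthogonal transposition induce the same morphism exactly because $\perp$-commutativity makes the two multiplication orders of the corresponding factors agree, and one must track the Koszul signs to confirm that the relevant graded commutativity matches $\mu_t^\op=\mu_t\,\tau$. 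Checking the remaining operad-algebra axioms is then a direct but bookkeeping-heavy computation with the block permutations $\sigma\langle k_{\sigma^{-1}(1)},\dots\rangle\,(\sigma_1\oplus\cdots\oplus\sigma_n)$ entering the operadic composition. I expect this well-definedness together with the sign bookkeeping to be the main obstacle, whereas the conceptual content is entirely carried by the equivalence between $\sim_\perp$ and $\perp$-commutativity.

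To upgrade this to an isomorphism of categories, I would observe that an $\Alg(\O_{\ovr{\CC}})$-morphism $\kappa:A\to B$ is a family $\kappa_c:A_c\to B_c$ compatible with all actions; restricting this compatibility to unary operations gives naturality of $\zeta:=\kappa$ as a transformation $\AAA\Rightarrow\BBB$, while restricting to the binary and nullary operations gives that each $\zeta_c$ is a morphism of differential graded algebras. Thus the two constructions are mutually inverse on objects and identify the morphism sets, producing the isomorphism $\Alg(\O_{\ovr{\CC}})\cong\QFT(\ovr{\CC})$. Finally, for naturality in $\ovr{\CC}$ I would take an orthogonal functor $F:\ovr{\CC}\to\ovr{\DD}$, use Proposition \ref{propo:functorialityoperads} to obtain the induced $\Op(\Ch(k))$-morphism, and check that its pullback functor on algebras corresponds under the above identification to the restriction functor $\dgAlg(k)^{\DD}\to\dgAlg(k)^{\CC}$ along $F$; since both are given by the same formula on the underlying objects $A_{F(c)}$ and on the generating operations, this last point is a routine diagram chase.
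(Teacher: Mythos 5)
Your proposal is correct and takes essentially the same approach as the proof of this statement, which the present paper does not spell out but imports from \cite{BeniniSchenkelWoike}: there, too, the isomorphism is obtained by unpacking the operad action on generators (the nullary and binary operations yielding the differential graded algebra structures, the unary operations $[e,f]$ yielding functoriality), matching the quotient relation $\sim_\perp$ against the $\perp$-commutativity axiom via the equivariance axiom \eqref{eqn:alg1}, identifying morphisms component-wise, and checking naturality in $\ovr{\CC}$ through the pullback functors induced by Proposition \ref{propo:functorialityoperads}. The points you flag as delicate --- descent of the action to the $\sim_\perp$-quotient and the Koszul sign bookkeeping against $\mu_t^\op=\mu_t\,\tau$ --- are exactly the computations carried out in the cited reference, so there is no gap.
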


\begin{ex}\label{ex:emphasizeOalgebraperspective}
Consider the orthogonal category $\ovr{\Loc}$ defined in Example \ref{ex:LocOCat}.
Then algebras over the colored operad $\O_{\ovr{\Loc}}$ are precisely $\Ch(k)$-valued
off-shell locally covariant quantum field theories, cf.\ Example \ref{ex:LCQFT}. Furthermore,
algebras over the colored operad $\O_{\ovr{\Loc[W^{-1}]}}$ corresponding to the
orthogonal category localized at Cauchy morphisms 
are precisely $\Ch(k)$-valued on-shell locally covariant quantum field theories. 
\end{ex}

\subsection{\label{subsec:homotopytheoryQFT}Homotopy theory}
We are now in the position to endow the category $\QFT(\ovr{\CC})$
of $\Ch(k)$-valued quantum field theories on an orthogonal category 
$\ovr{\CC}$ with a model structure. Our choice of model structure
is determined by the canonical model structure on the category 
of algebras over a colored operad in chain complexes (cf.\ Theorem
\ref{theo:Algmodelcat}) and our result in Theorem \ref{theo:QFTsareOalgebras}.
Even though model categories are rather abstract concepts,
we will try to explain by various examples in this section why they are crucial
for many practical constructions with $\Ch(k)$-valued quantum field theories.
\begin{theo}\label{theo:QFTmodelcat}
Let us assume as before that $k\supseteq \bbQ$ contains the rationals. Let $\ovr{\CC}$ be any
orthogonal category and consider the corresponding category  $\QFT(\ovr{\CC})$
of $\Ch(k)$-valued quantum field theories, cf.\ Definition \ref{def:QFTcats}.
Define a morphism $\zeta :\AAA\Rightarrow\BBB$ in $\QFT(\ovr{\CC})$ (i.e.\ a
natural transformation between functors  $\AAA,\BBB: \CC\to\dgAlg(k)$) to be
\begin{itemize}
\item[(i)] a weak equivalence if the underlying $\Ch(k)$-morphism of each
component $\zeta_c : \AAA(c) \to \BBB(c)$ is a quasi-isomorphism;

\item[(ii)] a fibration if the underlying $\Ch(k)$-morphism of each
component $\zeta_c : \AAA(c) \to \BBB(c)$ is degree-wise surjective;

\item[(iii)] a cofibration if it has the left lifting property (cf.\ Remark \ref{rem:lifting}) with respect to
all acyclic fibrations.
\end{itemize}
These choices endow $\QFT(\ovr{\CC})$ with the structure of a model category.
\end{theo}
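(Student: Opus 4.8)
The plan is to obtain the model structure on $\QFT(\ovr{\CC})$ by transporting the model structure on $\Alg(\O_{\ovr{\CC}})$ along the isomorphism of categories provided by Theorem \ref{theo:QFTsareOalgebras}. Since $k \supseteq \bbQ$ by hypothesis, Theorem \ref{theo:Algmodelcat} applies to the $\CC_0$-colored operad $\O_{\ovr{\CC}} \in \Op_{\CC_0}(\Ch(k))$ of Definition \ref{def:AQFToperads}, so that $\Alg(\O_{\ovr{\CC}})$ carries a model structure whose weak equivalences and fibrations are detected on the chain complexes $A_c$ indexed by the colors $c \in \CC_0$, and whose cofibrations are defined by the left lifting property against acyclic fibrations. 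An isomorphism of categories preserves all limits, colimits and lifting properties, hence carries a model structure along itself verbatim. Therefore $\QFT(\ovr{\CC})$ inherits a model structure in which a morphism is a weak equivalence (respectively a fibration, a cofibration) exactly when its preimage under $\Alg(\O_{\ovr{\CC}}) \cong \QFT(\ovr{\CC})$ is one. It then remains only to check that the transported classes coincide with those described in the statement.

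First I would make explicit how the isomorphism of Theorem \ref{theo:QFTsareOalgebras} operates on underlying data. Under this identification, the color-$c$ component $A_c \in \Ch(k)$ of an $\O_{\ovr{\CC}}$-algebra is precisely the underlying chain complex of the differential graded algebra $\AAA(c) \in \dgAlg(k)$, with the multiplication $\mu_c$, the unit, and the functorial maps $\AAA(f)$ all recovered from the $\O_{\ovr{\CC}}$-action on the nullary, binary and unary operations of the operad. Consequently, an $\Alg(\O_{\ovr{\CC}})$-morphism $\kappa : A \to B$ with components $\kappa : A_c \to B_c$ corresponds to a natural transformation $\zeta : \AAA \Rightarrow \BBB$ whose component $\zeta_c : \AAA(c) \to \BBB(c)$ has underlying chain map exactly $\kappa : A_c \to B_c$. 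The essential feature is that this isomorphism is the identity on underlying chain complexes at every color.

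With this dictionary, the final comparison is immediate. By Theorem \ref{theo:Algmodelcat}(i) the weak equivalences of $\Alg(\O_{\ovr{\CC}})$ are those $\kappa$ for which every $\kappa : A_c \to B_c$ is a quasi-isomorphism; under the dictionary this is exactly condition (i), that the underlying chain map of each $\zeta_c$ be a quasi-isomorphism. Identically, the fibrations of Theorem \ref{theo:Algmodelcat}(ii) (degree-wise surjective components) correspond to condition (ii). Since in both categories the cofibrations are precisely the maps with the left lifting property against acyclic fibrations, and the acyclic fibrations agree under the isomorphism, the transported cofibrations are those of condition (iii). The argument is thus essentially formal, and the only point demanding real attention, which I regard as the main (and only mild) obstacle, is the verification of the dictionary above, namely that the isomorphism of Theorem \ref{theo:QFTsareOalgebras} restricts to the identity on underlying chain complexes at each color; this is exactly what makes the component-wise definitions in Theorems \ref{theo:Algmodelcat} and \ref{theo:QFTmodelcat} coincide.
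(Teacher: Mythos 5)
Your proposal is correct and follows essentially the same route as the paper: transport the model structure of Theorem \ref{theo:Algmodelcat} (which applies since $k\supseteq\bbQ$ makes $\O_{\ovr{\CC}}$ admissible) along the categorical isomorphism $\QFT(\ovr{\CC})\cong\Alg(\O_{\ovr{\CC}})$ of Theorem \ref{theo:QFTsareOalgebras}. Your explicit verification of the dictionary --- that the isomorphism acts as the identity on underlying chain complexes at each color, so the component-wise weak equivalences and fibrations match --- is a point the paper's proof leaves implicit, but it is the same argument.
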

\begin{proof}
By Theorem \ref{theo:QFTsareOalgebras}, we have an isomorphism
of categories $\QFT(\ovr{\CC})\cong \Alg(\O_{\ovr{\CC}})$. Because
$k\supseteq \bbQ$, the colored operad $\O_{\ovr{\CC}}\in\Op_{\CC_0}(\Ch(k))$
is admissible and hence its category of algebras $\Alg(\O_{\ovr{\CC}})$ carries
the canonical model structure given in Theorem \ref{theo:Algmodelcat}.
The isomorphism $\QFT(\ovr{\CC})\cong \Alg(\O_{\ovr{\CC}})$ induces precisely 
the model structure on $\QFT(\ovr{\CC})$ that was claimed in the statement 
of this theorem.
\end{proof}

\begin{rem}
As a side-remark, we would like to mention that the model 
structure of Theorem \ref{theo:QFTmodelcat} is related to
the projective model structure on functor categories: Consider the
category $\dgAlg(k)^\CC$ of {\em all} functors from $\CC$ to
differential graded algebras. Because $\dgAlg(k) = \Alg(\mathsf{As})$
is the category of algebras over the ($1$-colored, i.e.\ $\CCC=1$ is a singleton) 
associative operad $\mathsf{As}\in \Op_{1}(\Ch(k))$, it carries a canonical model
structure by Theorem \ref{theo:Algmodelcat}. Concretely,
a $\dgAlg(k)$-morphism $\kappa : A\to B$ is a weak equivalence
(respectively a fibration) if its underlying $\Ch(k)$-morphism
is a quasi-isomorphism (respectively degree-wise surjective).
Then we can consider the projective model structure
on the functor category $\dgAlg(k)^\CC$. Concretely, this
means that a natural transformation $\zeta : \AAA\Rightarrow\BBB$
between two functors $\AAA,\BBB : \CC\to\dgAlg(k)$
is a weak equivalence (respectively a fibration)
if the $\Ch(k)$-morphism underlying each component
$\zeta_c : \AAA(c)\to \BBB(c)$ is a quasi-isomorphism (respectively
degree-wise surjective). Restricting the weak equivalences and fibrations
of this model structure
to the full subcategory $\QFT(\ovr{\CC})\subseteq \dgAlg(k)^\CC$
of $\perp$-commutative functors yields the same model structure
as the one in Theorem \ref{theo:QFTmodelcat} above.
It is important to emphasize that this observation {\em does not}
mean that our operadic formalism is irrelevant for endowing
the category of quantum field theories with a model structure.
In fact, it is generically not true that restricting weak equivalences and fibrations
to a full subcategory induces a model structure on it, i.e.\ our operadic
approach to endowing $\QFT(\ovr{\CC})$ with a 
model structure is more intrinsic and fundamental. 
\end{rem}

\begin{ex}\label{ex:BRSTBVweakequivalences}
To be more explicit, let us discuss the result of Theorem \ref{theo:QFTmodelcat} in the context of
the orthogonal category $\ovr{\Loc}$ from Example \ref{ex:LocOCat},
which is relevant for locally covariant quantum field theory, see
also Example \ref{ex:LCQFT}. (We consider off-shell theories in this example.
All statements below are of course also true for on-shell theories
by replacing $\ovr{\Loc}$ with its orthogonal localization $\ovr{\Loc[W^{-1}]}$ at Cauchy morphisms.) 
Theorem \ref{theo:QFTmodelcat} shows that the category $\QFT(\ovr{\Loc})$ 
of all $\Ch(k)$-valued locally covariant quantum field theories
carries a canonical model structure. Recalling the main intuitive 
principle of model category theory, this allows us to consistently regard two theories 
$\AAA : \Loc \to\dgAlg(k)$ and $\BBB : \Loc \to\dgAlg(k)$ as being the same
not only when they are isomorphic, but also when they are weakly equivalent.
A weak equivalence is concretely given by a natural transformation 
$\zeta : \AAA\Rightarrow\BBB$ whose components 
$\zeta_M : \AAA(M)\to \BBB(M)$ are quasi-isomorphisms between
the chain complexes of observables, for all spacetimes $M\in\Loc$. 
At a more informal level, such weak equivalences are frequently used 
in practice for constructing perturbative quantum gauge theories via the 
BRST/BV formalism, see e.g.\ \cite{Hollands,FredenhagenRejzner,FredenhagenRejzner2}. 
For example, the usual technique of adding various auxiliary fields to the differential graded
algebras of observables without changing their homologies should be understood as a particular
instance of our precise concept of weak equivalences developed in 
Theorem \ref{theo:QFTmodelcat}. The theories resulting from adding different choices 
of auxiliary fields are in general not isomorphic, but only weakly equivalent in the model category
$\QFT(\ovr{\CC})$. Model category theory provides a consistent
framework that allows us to treat such theories as if they would be the same.
We comment more on the last point in Examples \ref{ex:derivedFredenhagen} and
\ref{ex:derivedBRSTBV} below.
\end{ex}

Now let us consider an orthogonal functor $F :\ovr{\CC}\to\ovr{\DD}$.
Recalling that by Proposition \ref{propo:functorialityoperads} our 
operads are functorial on the category $\OCat$ of orthogonal categories, 
we obtain an $\Op(\Ch(k))$-morphism $\O_F : \O_{\ovr{\CC}} \to \O_{\ovr{\DD}}$.
Then by Proposition \ref{propo:Quillenadjunctionalgebras} we obtain 
a Quillen adjunction
\begin{flalign}
\xymatrix{
{\O_F}_! \,:\, \Alg(\O_{\ovr{\CC}}) ~\ar@<0.5ex>[r]&\ar@<0.5ex>[l]  ~\Alg(\O_{\ovr{\DD}}) \,:\, \O_F^\ast
}
\end{flalign}
between the model categories of algebras. Using further the natural 
isomorphism of categories $\Alg(\O_{(-)})\cong \QFT(-)$ 
from Theorem \ref{theo:QFTsareOalgebras}, which is in fact 
by construction (cf.\ Theorem \ref{theo:QFTmodelcat}) a natural 
isomorphism of model categories, we immediately obtain the following result.
\begin{propo}\label{propo:QFTQuillenadjunction}
Let us assume as before that $k\supseteq \bbQ$.
For every orthogonal functor $F :\ovr{\CC}\to\ovr{\DD}$ there exists a Quillen
adjunction
\begin{flalign}\label{eqn:QFTQuillenadjunction}
\xymatrix{
F_! \,:\, \QFT(\ovr{\CC}) ~\ar@<0.5ex>[r]&\ar@<0.5ex>[l]  ~\QFT(\ovr{\DD}) \,:\, F^\ast
}
\end{flalign}
between the model categories of $\Ch(k)$-valued quantum field theories on
$\ovr{\CC}$ and $\ovr{\DD}$, cf.\ Theorem \ref{theo:QFTmodelcat}.
The right adjoint functor $F^\ast$ is explicitly given by
pullback along $F$, i.e.\ for a $\perp$-commutative functor 
$\AAA : \DD\to \dgAlg(k)$ we have that $F^\ast (\AAA) := \AAA\,F : \CC\to\dgAlg(k)$
is given by pre-composition. Moreover, the right adjoint functor $F^\ast$ preserves weak equivalences.
\end{propo}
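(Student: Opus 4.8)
The plan is to reduce the statement entirely to the operadic machinery already established, since Proposition \ref{propo:QFTQuillenadjunction} is essentially a translation of Proposition \ref{propo:Quillenadjunctionalgebras} through the functoriality result of Proposition \ref{propo:functorialityoperads} and the natural isomorphism of Theorem \ref{theo:QFTsareOalgebras}. First I would invoke Proposition \ref{propo:functorialityoperads} to turn the orthogonal functor $F : \ovr{\CC}\to\ovr{\DD}$ into an $\Op(\Ch(k))$-morphism $\O_F : \O_{\ovr{\CC}}\to \O_{\ovr{\DD}}$; concretely this is a morphism with varying colors $(F_0,\O_F) : (\CC_0,\O_{\ovr{\CC}})\to(\DD_0,\O_{\ovr{\DD}})$, where $F_0 : \CC_0\to\DD_0$ is the map on objects. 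Then Proposition \ref{propo:Quillenadjunctionalgebras} immediately supplies a Quillen adjunction ${\O_F}_! \dashv \O_F^\ast$ between $\Alg(\O_{\ovr{\CC}})$ and $\Alg(\O_{\ovr{\DD}})$, with the right adjoint $\O_F^\ast$ preserving weak equivalences as part of that same proposition.

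Next I would transport this adjunction across the isomorphism of Theorem \ref{theo:QFTsareOalgebras}. The key point is that this isomorphism $\Alg(\O_{(-)})\cong\QFT(-)$ is natural in $\ovr{\CC}$ and, by the very definition of the model structure in Theorem \ref{theo:QFTmodelcat} (which was declared to be the one induced from $\Alg(\O_{\ovr{\CC}})$), it is an isomorphism of model categories, not merely of underlying categories. Consequently the Quillen adjunction on the algebra side transports verbatim to a Quillen adjunction $F_! \dashv F^\ast$ on the quantum field theory side, and the preservation of weak equivalences by the right adjoint is likewise transported.

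It then remains to identify the right adjoint $F^\ast$ explicitly as pullback (pre-composition) along $F$. Here I would unwind the definition of the pullback functor $(F_0,\O_F)^\ast$ from the discussion following Definition \ref{def:variablecolorsoperads}: for a $\P$-algebra one sets $((f,\phi)^\ast A)_c := A_{f(c)}$. Translated through Theorem \ref{theo:QFTsareOalgebras}, a $\perp$-commutative functor $\AAA : \DD\to\dgAlg(k)$ corresponds to the $\O_{\ovr{\DD}}$-algebra whose color-$d$ component is $\AAA(d)$, so the pulled-back algebra has color-$c$ component $\AAA(F(c))$, which is exactly the functor $\AAA\,F = F^\ast(\AAA)$. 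One should check that this assignment on objects really does agree with the operadic pullback on the full algebra structure (actions encoding the $\perp$-commutative multiplications and the functoriality of $\AAA$), but this is a direct bookkeeping verification given the explicit form of $\O_F$ from Proposition \ref{propo:functorialityoperads}.

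The main obstacle, such as it is, is not analytic but rather the careful matching of the two descriptions of $F^\ast$: confirming that the operadic pullback $(F_0,\O_F)^\ast$ corresponds under Theorem \ref{theo:QFTsareOalgebras} precisely to pre-composition $\AAA\mapsto\AAA\,F$, including the induced maps on morphisms and the compatibility of the $\O_{\ovr{\CC}}$-action with the operadic structure morphism $\O_F$. Everything else—the existence of the Quillen adjunction and the preservation of weak equivalences—is a formal consequence of the already-cited propositions, so I expect the proof to be short, with the bulk of the (routine) work hidden in this identification step, which I would leave as a straightforward check rather than spell out in detail.
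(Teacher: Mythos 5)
Your proposal is correct and follows essentially the same route as the paper: the paper obtains $\O_F$ from Proposition \ref{propo:functorialityoperads}, applies Proposition \ref{propo:Quillenadjunctionalgebras} to get the Quillen adjunction (and preservation of weak equivalences by the right adjoint) on the algebra side, and transports it through the natural isomorphism of Theorem \ref{theo:QFTsareOalgebras}, which is an isomorphism of model categories by the construction in Theorem \ref{theo:QFTmodelcat}. The identification of $F^\ast$ with pre-composition, which you flag as the only real bookkeeping step, is exactly what the paper's naturality statement encodes, so your treatment matches the paper's proof in both structure and substance.
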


Using standard techniques from model category theory, called {\em derived functors} \cite{Dwyer,Hovey,Riehl}, 
one can modify in a controlled way both the left and the right adjoint functors 
of the Quillen adjunction in \eqref{eqn:QFTQuillenadjunction} in order to obtain 
new functors that preserve weak equivalences. These derived functors 
provide constructions which are homotopically meaningful in the sense that they
are consistent with our main principle that weakly equivalent objects should 
be regarded as being the same. Because by Proposition \ref{propo:QFTQuillenadjunction}
the right adjoint functor $F^\ast : \QFT(\ovr{\DD})\to \QFT(\ovr{\CC})$
already preserves weak equivalences, there is no need to derive it.
Hence, we may simply choose
\begin{flalign}\label{eqn:rightderived}
\bbR F^\ast :=F^\ast :  \QFT(\ovr{\DD})\longrightarrow \QFT(\ovr{\CC})
\end{flalign}
for the right derived functor. However, the left adjoint functor 
$F_! : \QFT(\ovr{\CC})\to\QFT(\ovr{\DD})$ in general does not
preserve weak equivalences and we have to derive it. The standard technique
for defining a left derived functor $\bbL F_! : \QFT(\ovr{\CC})\to\QFT(\ovr{\DD})$,
which {\em does} preserve weak equivalences, is via cofibrant replacements.
Recall that a {\em cofibrant replacement functor} on a model category (here $\QFT(\ovr{\CC})$)
is an endofunctor $Q : \QFT(\ovr{\CC}) \to \QFT(\ovr{\CC})$ such that
$Q(\AAA)$ is a cofibrant object in $\QFT(\ovr{\CC})$ for every $\AAA\in \QFT(\ovr{\CC})$, 
together with a natural weak equivalence $q : Q\Rightarrow \id$. The model category axioms
ensure that a cofibrant replacement functor always exists. Making any choice of cofibrant replacement,
we define a left derived functor by pre-composition
\begin{flalign}\label{eqn:leftderived}
\bbL F_! := F_!\,Q : \QFT(\ovr{\CC})\longrightarrow\QFT(\ovr{\DD})\quad.
\end{flalign}
By construction, the derived functor $\bbL F_!$ preserves weak equivalences, 
see e.g.\ \cite[Lemma 1.1.12]{Hovey}.
In order to illustrate the relevance of these derived functors
for quantum field theory, let us consider the following examples.
\begin{ex}\label{ex:derivedFredenhagen}
Consider the orthogonal functor $j : \ovr{\Locc}\to \ovr{\Loc}$ from Example \ref{ex:LocOCat},
which describes the embedding of the category of diamond spacetimes 
into the category of all spacetimes. By Proposition \ref{propo:QFTQuillenadjunction}, 
we obtain a Quillen adjunction
\begin{flalign}
\xymatrix{
j_! \,:\, \QFT(\ovr{\Locc}) ~\ar@<0.5ex>[r]&\ar@<0.5ex>[l]  ~\QFT(\ovr{\Loc}) \,:\, j^\ast
}\quad.
\end{flalign}
The physical interpretation of the left and right adjoint functors
is as follows: The right adjoint $j^\ast$ is the obvious restriction functor
that restricts theories defined on the category $\Loc$ 
of all spacetimes to the category $\Locc$ of diamond spacetimes.
The left adjoint $j_!$ is a universal extension functor that extends
theories defined only on diamond spacetimes to all spacetimes.
We have shown in \cite{BeniniSchenkelWoike}
that the universal extension functor $j_!$ is an operadic 
refinement of Fredenhagen's universal algebra construction \cite{Fre1,Fre2,Fre3}.
\sk

When working with $\Ch(k)$-valued quantum field theories, 
there exists a non-trivial notion of weak equivalences 
(cf.\ Theorem \ref{theo:QFTmodelcat}) and, of course, we must 
ensure that both the restriction and the extension of quantum field
theories along $j : \ovr{\Locc}\to \ovr{\Loc}$ preserve these weak equivalences.
This is precisely what the derived functors do for us!  
Concretely, we have observed in \eqref{eqn:rightderived} that the right adjoint functor
$j^\ast$ already preserves weak equivalences and hence there is no need to derive it,
i.e.\ we may set $\bbR j^\ast = j^\ast$. In our context this means that the ordinary
restriction $ j^\ast (\AAA) \in \QFT(\ovr{\Locc})$ of quantum field 
theories $\AAA \in \QFT(\ovr{\Loc})$ is a homotopically meaningful construction. 
On the other hand, the universal extension functor $j_!$ is not yet homotopically 
meaningful and it must be derived according to \eqref{eqn:leftderived} by using 
a cofibrant replacement functor, i.e.\ we consider the left derived functor
\begin{flalign}
\bbL j_! := j_!\,Q : \QFT(\ovr{\Locc})\longrightarrow\QFT(\ovr{\Loc})\quad.
\end{flalign}
Let us stress that the derived extension $\bbL j_! (\AAA_{\text{\large $\diamond$}}^{}) \in\QFT(\ovr{\Loc})$
of a theory $\AAA_{\text{\large $\diamond$}}^{} \in \QFT(\ovr{\Locc})$ on diamond 
spacetimes is in general {\em not}
weakly equivalent to its ordinary extension $j_!(\AAA_{\text{\large $\diamond$}}^{}) \in\QFT(\ovr{\Loc})$.
We refer to Appendix \ref{app:extension} for simple toy-models
which explain the relevance of derived extension functors from 
a mathematical and physical perspective.
\end{ex}

\begin{ex}\label{ex:derivedBRSTBV}
In order to emphasize even more our comments in Example
\ref{ex:derivedFredenhagen}, let us consider the following concrete scenario: 
Let us take any perturbative quantum gauge theory
$\AAA_{\text{\large $\diamond$}}^{} \in \QFT(\ovr{\Locc})$ on the full subcategory $\Locc\subseteq \Loc$ 
of diamond spacetimes, see e.g.\ \cite{Hollands,FredenhagenRejzner,FredenhagenRejzner2} 
for concrete examples. (Constructing quantum gauge theories on $\Locc$ is generically easier 
than constructing them on all of $\Loc$ because all bundles over $M\in\Locc$ can be 
trivialized.) Now suppose that we would do a second 
slightly different construction, e.g.\ by using different auxiliary fields and/or gauge fixings, 
and obtain another theory $\AAA_{\text{\large $\diamond$}}^{\prime} \in \QFT(\ovr{\Locc})$ 
that is weakly equivalent to $\AAA_{\text{\large $\diamond$}}^{}$. 
Then, in the spirit of Fredenhagen's universal algebra construction, we would like 
to extend $\AAA_{\text{\large $\diamond$}}^{}$ and $\AAA_{\text{\large $\diamond$}}^{\prime}$
to the category of all spacetimes $\Loc$. If we would use the ordinary (i.e.\ underived) universal
extension functor $j_!$, it is {\em not} guaranteed that $j_! (\AAA_{\text{\large $\diamond$}}^{})$ 
and $j_! (\AAA_{\text{\large $\diamond$}}^{\prime})$ are weakly equivalent theories 
on $\Loc$. (See Appendix \ref{app:extension} for simple toy-models which show that
underived extension functors do not preserve weak equivalences.) 
That is of course unsatisfactory because $\AAA_{\text{\large $\diamond$}}^{}$ 
and $\AAA_{\text{\large $\diamond$}}^{\prime}$ represent the same quantum
gauge theory as they are weakly equivalent.
Using instead the derived universal extension functor $\bbL j_!$, we ensure that 
$\bbL j_! (\AAA_{\text{\large $\diamond$}}^{})$ and $\bbL j_! (\AAA_{\text{\large $\diamond$}}^{\prime})$ 
are weakly equivalent and hence they present the same quantum gauge theory on $\Loc$, 
as expected from the fact that $\AAA_{\text{\large $\diamond$}}^{}$ and 
$\AAA_{\text{\large $\diamond$}}^{\prime}$ present the same quantum gauge theory on $\Locc$.
\end{ex}

\begin{rem}
Let us recall that if $\AAA\in \QFT(\ovr{\CC})$ is a cofibrant object, then the canonical morphism
$\bbL F_! (\AAA)  = F_! \,Q(\AAA) \xrightarrow{\ F_! q_{\AAA}\ } F_!(\AAA) $
is a weak equivalence in $\QFT(\ovr{\DD})$. This means that, for any cofibrant
object $\AAA\in \QFT(\ovr{\CC})$, it is sufficient to compute the much simpler
underived functor $F_!(\AAA)$ in order to obtain the correct homotopy type 
of $\bbL F_! (\AAA)$. It is therefore interesting
to ask whether the typical examples of $\Ch(k)$-valued quantum field theories,
e.g.\ those produced by the BRST/BV formalism, naturally come as 
cofibrant objects in $\QFT(\ovr{\CC})$. (We thank the anonymous referee for asking this question to us.)
This is unfortunately {\em not} the case because, as we explain below, 
the usual canonical commutation relations in quantum (field) theory are incompatible 
with cofibrancy as they define non-semifree differential graded algebras.
\sk

In order to provide a simple argument, let us take $k=\bbC$ to be the field of complex numbers
and consider for the moment the orthogonal category $\ovr{\CC} = (\{\ast\},\emptyset)$ 
given by a point with the empty orthogonality relation. 
Then $\QFT(\ovr{\CC}) = \dgAlg(\bbC)$ is simply the category of 
(non-commutative) differential graded algebras over $\bbC$.  
Let $A\in\dgAlg(\bbC)$ be 
generated by two generators $x$ and $p$ of degree $0$, modulo the 
ideal generated by $\dd x =0 = \dd p$ and 
the canonical commutation relation $[x,p] = i\,1$. (Note that this 
is the algebra for a $1$-dimensional point particle
in quantum mechanics.) We now prove that $A\in\dgAlg(\bbC)$ is {\em not} a cofibrant object
by constructing an acyclic fibration $\kappa : B\to A$ that does not admit a section.
Let $B\in\dgAlg(\bbC)$ be
generated by two generators $x$ and $p$ of degree $0$
and one generator $c$ of degree $1$, 
modulo the ideal generated by the relations $\dd x =0 = \dd p$, 
$x c=0=cx$, $pc=0=cp$, $c^2=0$, $c \,\dd c=0$ and the
modified canonical commutation relation $[x,p] = i\,(1 + \dd c)$.
Taking the differential of these relations we further obtain that
$x\,\dd c = 0 = \dd c \, x$, $p\, \dd c = 0 = \dd c\, p$, $\dd c\, c =0$
and $\dd c\, \dd c=0$. Hence, a generic element of $B$ is of the form
$b = a(x,p) + \alpha \,\dd c + \beta \,c$, where $a(x,p)$ is a polynomial in 
$x$ and $p$, while $\alpha,\beta \in\bbC$ are complex numbers.
We define a $\dgAlg(\bbC)$-morphism $\kappa : B\to A$ by setting
$\kappa(x)=x$, $\kappa(p)= p$ and $\kappa(c)= 0$ and observe that this is
an acyclic fibration in $\dgAlg(\bbC)$. Any section $\sigma : A\to B$ 
of $\kappa$ must be of the form $\sigma(x) = x + \alpha\, \dd c$ and $\sigma(p) = p +\alpha^\prime \,\dd c$,
for some $\alpha,\alpha^\prime \in\bbC$. It follows that this is incompatible
with the commutation relations in $B$ and $A$. Explicitly,
\begin{flalign}
[\sigma(x),\sigma(p)]_B = [x+\alpha \,\dd c, p+ \alpha^\prime\,\dd c]_B= 
[x,p]_B = i\,(1+\dd c) \neq \sigma\big(i\,1\big) = \sigma\big([x,p]_A\big)\quad,
\end{flalign}
where the subscript indicates the algebra in which we take the commutator.
Hence, $A\in\dgAlg(\bbC)$ is not a cofibrant object.
\sk

A similar argument may be applied object-wise 
to any $\AAA\in\QFT(\ovr{\CC})$ that assigns
differential graded algebras involving (graded) canonical commutation relations,
which in particular includes the models constructed in the
BRST/BV formalism \cite{Hollands,FredenhagenRejzner,FredenhagenRejzner2}.
\end{rem}


\section{\label{sec:resolution}Homotopy algebraic quantum field theories}

\subsection{\label{subsec:Sigmaresolution}$\Sigma$-cofibrant resolutions}
Let us first recall the following standard concept of resolutions for colored operads,
see e.g.\ \cite{HinichOriginal,Spitzweck,BergerMoerdijkOriginal,BergerMoerdijk,CisinskiMoerdijk,PavlovScholbach,Hinich}.
\begin{defi}\label{def:Sigmacofibres}
A {\em $\Sigma$-cofibrant resolution} of a 
colored operad $\O\in \Op_{\CCC}(\Ch(k))$ is a $\Sigma$-cofibrant operad
$\O_\infty \in \Op_{\CCC}(\Ch(k))$ (cf.\ Definition \ref{def:Sigmacofibrant})
together with an acyclic fibration
\begin{flalign}
w : \O_\infty \longrightarrow \O
\end{flalign} 
in the model category $ \Op_{\CCC}(\Ch(k))$, cf.\ Theorem \ref{theo:Opmodelcat}.
\end{defi}

The role of $\Sigma$-cofibrant resolutions is that
they provide a suitable framework to study homotopy-coherent
algebraic structures, i.e.\ {\em homotopy algebras
over operads}. For example, $A_\infty$-algebras (see e.g.\ \cite[Section 9]{LodayVallette}) 
are algebras over a $\Sigma$-cofibrant resolution
$A_{\infty}\to \mathsf{As}$ of the associative operad
and $E_\infty$-algebras (see e.g.\ \cite{BergerFresse})
are algebras over a $\Sigma$-cofibrant resolution
$E_{\infty}\to \mathsf{Com}$ of the commutative operad.
As a colored example, homotopy-coherent diagrams (see e.g.\ \cite{BergerMoerdijk})
are algebras over a $\Sigma$-cofibrant resolution
${\mathsf{Diag}_\CC}_{\infty}\to \mathsf{Diag}_\CC$
of the diagram operad over a small category $\CC$. 
\sk

It is important to emphasize the following facts about
existence and ``uniqueness'' (see below for the precise statement) 
of $\Sigma$-cofibrant resolutions: 
1.)~Recalling that $\Op_{\CCC}(\Ch(k))$ is a model category (cf.\ Theorem \ref{theo:Opmodelcat}),
there exists a cofibrant replacement $q : Q(\O)\to \O$ which defines
a particular example of a $\Sigma$-cofibrant resolution, see e.g.\
\cite[Proposition 4.3]{BergerMoerdijkOriginal}. Hence, $\Sigma$-cofibrant resolutions
exist for every colored operad $\O\in \Op_{\CCC}(\Ch(k))$.
2.)~Let us assume that we have two $\Sigma$-cofibrant resolutions
$w : \O_\infty\to \O$ and $w^\prime : \O_\infty^\prime \to \O$
of a colored operad $\O\in \Op_{\CCC}(\Ch(k))$. Taking also a cofibrant replacement
$q : Q(\O)\to \O$, we obtain a commutative diagram
\begin{flalign}
\xymatrix{
\O_\infty^{~} \ar[r]^-{w} & \O &\ar[l]_-{w^\prime} \O_\infty^\prime\\
& Q(\O)\ar[u]_-{q} \ar@{-->}[ru]_-{l^\prime} \ar@{-->}[lu]^-{l}&
}
\end{flalign}
in $\Op_{\CCC}(\Ch(k))$, where the dashed arrows exist by the left lifting property (cf.\ Remark \ref{rem:lifting}) 
because $Q(\O)$ is a cofibrant object and $w$, $w^\prime$ are by definition acyclic fibrations.
The $2$-out-of-$3$ property of weak equivalences further implies
that the dashed arrows are themselves weak equivalences, hence
by Theorem \ref{theo:Quillenequivalence} they induce a zig-zag
of Quillen equivalences
\begin{flalign}
\xymatrix{
\Alg(\O_\infty) \ar@<-0.5ex>[r]_-{l^\ast} ~&~ \Alg(Q(\O)) ~\ar@<0.5ex>[r]^-{l^\prime_!} \ar@<-0.5ex>[l]_-{l_!}&\ar@<0.5ex>[l]^-{{l^\prime}^\ast}  ~\Alg(\O_\infty^{\prime})
}\quad.
\end{flalign}
This shows that the model categories of algebras over different
$\Sigma$-cofibrant resolutions are ``the same'' in the sense that they are
related by a zig-zag of Quillen equivalences. (In particular,
the corresponding homotopy categories are equivalent as categories.)
\sk

Now let us focus on our case of interest, namely the
quantum field theory operads from Definition \ref{def:AQFToperads}.
Inspired by the concept of homotopy algebras over operads, we
start with the following definition.
\begin{defi}
Let $\ovr{\CC}$ be an orthogonal category and 
$w : {\O_{\ovr{\CC}}}_\infty \to \O_{\ovr{\CC}}$ 
a $\Sigma$-cofibrant resolution. The model category of
{\em homotopy quantum field theories} on $\ovr{\CC}$ corresponding to this resolution
is defined as $\QFT_w(\ovr{\CC}) := \Alg({\O_{\ovr{\CC}}}_\infty)$.
\end{defi}

\begin{theo}\label{theo:strictification}
Let us assume as before that $k\supseteq \bbQ$. For every orthogonal
category $\ovr{\CC}$, the $\CC_0$-colored operad $\O_{\ovr{\CC}}\in \Op_{\CC_0}(\Ch(k))$
is $\Sigma$-cofibrant. As a consequence, for every $\Sigma$-cofibrant
resolution $w : {\O_{\ovr{\CC}}}_\infty \to \O_{\ovr{\CC}}$ there exists
a Quillen equivalence
\begin{flalign}\label{eqn:QuillenequivalencehAQFT}
\xymatrix{
w_! \,:\, \QFT_w(\ovr{\CC}) ~\ar@<0.5ex>[r]&\ar@<0.5ex>[l]  ~\QFT(\ovr{\CC}) \,:\, w^\ast
}
\end{flalign}
between the model categories of strict and homotopy 
quantum field theories on $\ovr{\CC}$.
\end{theo}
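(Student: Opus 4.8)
The plan is to reduce both assertions to results already established in the excerpt. The crux is the first claim — that $\O_{\ovr{\CC}}$ is $\Sigma$-cofibrant — after which the Quillen equivalence follows almost formally. First I would verify that each component of $\O_{\ovr{\CC}}$ is a cofibrant object in $\Ch(k)$, so that Proposition \ref{propo:componentcofimpliesSigmacof} applies. By \eqref{eqn:AQFToperadoperations}, the component $\O_{\ovr{\CC}}\big(\substack{t\\\und{c}}\big)$ is the tensoring $S\otimes k$ of the set $S := \big(\Sigma_n\times\CC(\und{c},t)\big)/{\sim_\perp}$ with the monoidal unit $k$, i.e.\ a chain complex concentrated in degree $0$ whose degree-$0$ part is the free $k$-module on $S$. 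Recalling \eqref{eqn:tensoringgeneral}, this is precisely the coproduct $\bigoplus_{s\in S} k$ of copies of $k$. Since $k$ is cofibrant by Theorem \ref{theo:Chmodelcat} and coproducts of cofibrant objects are cofibrant in any model category, each component is a cofibrant object of $\Ch(k)$. By Proposition \ref{propo:componentcofimpliesSigmacof} (which is where the hypothesis $k\supseteq\bbQ$ enters), the operad $\O_{\ovr{\CC}}$ is therefore $\Sigma$-cofibrant.

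For the second claim, let $w : {\O_{\ovr{\CC}}}_\infty \to \O_{\ovr{\CC}}$ be any $\Sigma$-cofibrant resolution. By Definition \ref{def:Sigmacofibres} the source ${\O_{\ovr{\CC}}}_\infty$ is $\Sigma$-cofibrant and $w$ is an acyclic fibration in $\Op_{\CC_0}(\Ch(k))$; in particular each of its components is a quasi-isomorphism. Regarding $w$ as an $\Op(\Ch(k))$-morphism $(\id_{\CC_0},w)$ between operads with the same color set, the Remark following Definition \ref{def:variablecolorWE} shows that $w$ is a weak equivalence in the sense of Definition \ref{def:variablecolorWE}. Since both ${\O_{\ovr{\CC}}}_\infty$ and $\O_{\ovr{\CC}}$ are $\Sigma$-cofibrant — the latter by the first part — Theorem \ref{theo:Quillenequivalence} applies to the Quillen adjunction \eqref{eqn:generaladjunction} associated with $w$ and yields a Quillen equivalence $w_! \dashv w^\ast$. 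Transporting along the natural isomorphism $\Alg(\O_{\ovr{\CC}})\cong\QFT(\ovr{\CC})$ of Theorem \ref{theo:QFTsareOalgebras} and using the definition $\QFT_w(\ovr{\CC}) = \Alg({\O_{\ovr{\CC}}}_\infty)$ then gives exactly the Quillen equivalence \eqref{eqn:QuillenequivalencehAQFT}.

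The only genuinely substantive step is the cofibrancy of the components; everything else is an assembly of the cited results. The subtlety I would keep an eye on is that $\Ch(k)$ consists of \emph{possibly unbounded} complexes, where cofibrancy is in general strictly stronger than degree-wise projectivity. This causes no trouble here, however, because a complex concentrated in a single degree on a free $k$-module is manifestly cofibrant, a fact I would record cleanly by exhibiting the component as a coproduct of copies of the cofibrant unit $k$ rather than by invoking any finer characterization of cofibrant chain complexes.
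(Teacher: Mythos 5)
Your proposal is correct and follows essentially the same route as the paper: cofibrancy of each component via the coproduct-of-copies-of-$k$ description \eqref{eqn:AQFToperadoperations}, then Proposition \ref{propo:componentcofimpliesSigmacof} for $\Sigma$-cofibrancy, then Theorem \ref{theo:Quillenequivalence} applied to the resolution, transported through the isomorphism of Theorem \ref{theo:QFTsareOalgebras}. Your explicit check that an acyclic fibration between operads with a fixed color set is a weak equivalence in the sense of Definition \ref{def:variablecolorWE} (via the remark following it) is a detail the paper leaves implicit, but it is the same argument.
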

\begin{proof}
Using Proposition \ref{propo:componentcofimpliesSigmacof},
it is sufficient to prove that each component
$\O_{\ovr{\CC}}(\substack{t \\ \und{c}}) \in\Ch(k)$ is a cofibrant chain complex.
Using the explicit definition in \eqref{eqn:AQFToperadoperations},
we observe that this is the case because the monoidal unit $k\in \Ch(k)$ 
is cofibrant and the $\Set$-tensoring (i.e.\ forming coproducts \eqref{eqn:tensoring})
preserves cofibrant objects. This proves the first part of our claim.
\sk

Concerning the second claim, consider any $\Sigma$-cofibrant resolution
$w : {\O_{\ovr{\CC}}}_\infty \to \O_{\ovr{\CC}}$. Because both the source
and target of this morphism are $\Sigma$-cofibrant, it follows from
Theorem \ref{theo:Quillenequivalence} that  the adjunction \eqref{eqn:QuillenequivalencehAQFT}
is a Quillen equivalence.
\end{proof}

\begin{rem}\label{rem:srtictification}
Theorem \ref{theo:strictification} should be interpreted as a strictification
result for homotopy quantum field theories.
Let us explain this important point in more detail:
Let $w : {\O_{\ovr{\CC}}}_\infty \to \O_{\ovr{\CC}}$ be a $\Sigma$-cofibrant
resolution, e.g.\ the Boardman-Vogt resolution developed
by Yau for our operads \cite{YauQFT}. Algebras over ${\O_{\ovr{\CC}}}_\infty$
generically describe a weaker concept of quantum field theories, where
functoriality, associativity and/or $\perp$-commutativity might just hold 
in a homotopy-coherent fashion. (The extent of this weakening depends on the choice
of $\Sigma$-cofibrant resolution.) Consider any homotopy quantum field theory 
$\AAA_\infty \in \QFT_w(\ovr{\CC})$ and replace it by a 
weakly equivalent cofibrant object $Q(\AAA_\infty) \to \AAA_\infty$. 
(Notice that $Q(\AAA_\infty)$ and $\AAA_\infty$ describe ``the same'' theory 
from our model categorical perspective because they are weakly equivalent.) 
Using that  \eqref{eqn:QuillenequivalencehAQFT} 
is a Quillen equivalence, we obtain from \cite[Proposition 1.3.13]{Hovey} 
that the composite of
\begin{flalign}
\xymatrix@C=4em{
Q (\AAA_\infty) \ar[r]^-{\eta_{Q(\AAA_\infty)}} & w^\ast w_! Q( \AAA_\infty)
\ar[rr]^-{\text{fibrant replacement}} && w^\ast R w_! Q (\AAA_\infty)
}
\end{flalign}
is a weak equivalence, where $\eta$ is the unit of the adjunction and we also 
performed a fibrant replacement $R$ in the second step. 
(Because every object in $\QFT(\ovr{\CC})$ is fibrant, one could also
drop the fibrant replacement here.) In particular, 
$Q (\AAA_\infty)$, and hence also our original theory $\AAA_\infty \in \QFT_w(\ovr{\CC})$,
is weakly equivalent to the image  under $w^\ast$ of 
the {\em strict} quantum field theory $R w_! Q (\AAA_\infty) \in \QFT(\ovr{\CC})$
(or $w_! Q (\AAA_\infty) \in \QFT(\ovr{\CC})$ if we drop the fibrant replacement).
This defines a model for the strictification of our original homotopy 
quantum field theory $\AAA_\infty \in  \QFT_w(\ovr{\CC})$. 
\end{rem}

\begin{rem}
We would like to issue a warning that Theorem \ref{theo:strictification}
and Remark \ref{rem:srtictification} should {\em not} be misunderstood
as the statement that homotopy quantum field theories are not useful.
Even though it is true that each homotopy quantum field
theory admits a strictification, it is highly non-trivial
to compute such strictifications in practice. The reason
is that in particular the cofibrant replacement in the construction 
of Remark \ref{rem:srtictification} is typically extremely complicated to compute explicitly.
Furthermore, as we will show in Sections \ref{subsec:stacks} and \ref{sec:orbifoldization} below,
interesting constructions naturally lead to non-strict homotopy quantum
field theories, which justifies their practical relevance.
\end{rem}

\subsection{\label{subsec:Einftyresolution}The $E_\infty$-resolution}
We develop a particular functorial $\Sigma$-cofibrant resolution 
for our colored operads $\O_{\ovr{\CC}}$ that is obtained by a component-wise
tensor product with the (chain version of the) Barratt-Eccles $E_\infty$-operad.
This choice of resolution is motivated by the fact that it is not only 
relatively simple, but also sufficiently flexible to encompass 
the examples from Section \ref{subsec:stacks} and \ref{sec:orbifoldization}
in terms of its algebras.
\sk

Let us denote by $\E_\infty\in\Op_{1}(\Ch(k))$ the 
(chain version of the) $1$-colored Barratt-Eccles $E_\infty$-operad 
studied in \cite{BergerFresse}. This operad is defined by
applying the normalized chain complex functor 
to the original simplicial Barratt-Eccles operad. In our work we 
do not need an explicit description of this operad and refer to 
\cite[Section 1.1]{BergerFresse} for the details. Let us recall
some basic properties of the operad  $\E_\infty\in\Op_{1}(\Ch(k))$
that will be important in what follows: 1.)~The operad $\E_\infty$ provides
a $\Sigma$-cofibrant resolution
\begin{flalign}\label{eqn:Einfty2Com}
w : \E_\infty \longrightarrow \mathsf{Com}
\end{flalign}
of the commutative operad $\mathsf{Com}\in \Op_{1}(\Ch(k))$.
2.)~The degree $0$ part of $\E_\infty$ is the associative operad
$\mathsf{As} \in\Op_{1}(\Ch(k))$, i.e.\ there exists an $\Op_{1}(\Ch(k))$-morphism
\begin{flalign}\label{eqn:Ass2Einfty}
i : \mathsf{As} \longrightarrow \E_\infty\quad.
\end{flalign}
3.)~The composition of \eqref{eqn:Ass2Einfty} and \eqref{eqn:Einfty2Com}
is the canonical $\Op_{1}(\Ch(k))$-morphism $\mathsf{As}\to\mathsf{Com}$.
\begin{rem}
Algebras $A\in \Alg(\E_\infty)$ over the operad $\E_\infty\in\Op_{1}(\Ch(k))$
are differential graded algebras, together with additional higher 
chain homotopy data (living in positive degrees in the chain complexes 
underlying $\E_\infty$) that describe homotopy-coherent commutativity.
The two operad morphisms in \eqref{eqn:Einfty2Com} and \eqref{eqn:Ass2Einfty}
yield a sequence of Quillen adjunctions
\begin{flalign}
\xymatrix{
\Alg(\mathsf{As}) ~\ar@<0.5ex>[r]^-{i_!} & \ar@<0.5ex>[l]^-{i^\ast}  ~\Alg(\E_\infty) ~ \ar@<0.5ex>[r]^-{w_!}& \ar@<0.5ex>[l]^-{w^\ast} ~ \Alg(\mathsf{Com})
}\quad.
\end{flalign}
The right adjoint $i^\ast$ assigns to an $\E_\infty$-algebra $A\in \Alg(\E_\infty)$
its underlying differential graded algebra, i.e.\ it forgets the higher chain homotopy data
describing homotopy-coherent commutativity. (Hence, the underlying
differential graded algebra is in general non-commutative.) The right 
adjoint $w^\ast$ assigns to a commutative differential graded algebra 
$A\in \Alg(\mathsf{Com})$ the $E_\infty$-algebra whose chain 
homotopy data are trivial, i.e.\ a strictly commutative $E_\infty$-algebra.
\end{rem}

For every orthogonal category $\ovr{\CC}$, let us define the colored operad
\begin{flalign}
\O_{\ovr{\CC}}\otimes \E_\infty \in\Op_{\CC_0}(\Ch(k))
\end{flalign}
by a component-wise tensor product of the quantum field theory operad from Definition
\ref{def:AQFToperads} and the $1$-colored operad $\E_\infty\in\Op_{1}(\Ch(k))$.
More concretely, the chain complex of operations for
$(\und{c},t) = ((c_1,\dots,c_n),t)\in\CC_0^{n+1}$ reads as
\begin{flalign}\label{eqn:componentsOCCresolved}
\big(\O_{\ovr{\CC}}\otimes \E_\infty \big)\big(\substack{t \\ \und{c}}\big) \,:=\, 
\O_{\ovr{\CC}}\big(\substack{t \\ \und{c}}\big) \otimes \E_\infty(n)\,\cong\,
 \big(\Sigma_n \times \CC(\und{c},t)\big)\big/{\sim_\perp}\otimes \E_\infty(n)\,\in\,\Ch(k)
\end{flalign}
and the operad structure is the tensor product of the respective operad structures.
The operad morphism in \eqref{eqn:Einfty2Com} defines
an $\Op_{\CC_0}(\Ch(k))$-morphism
\begin{flalign}\label{eqn:tmpresolution}
\xymatrix@C=4em{
\O_{\ovr{\CC}}\otimes \E_\infty \ar[r]^-{\O_{\ovr{\CC}}\otimes w} ~&~ \O_{\ovr{\CC}}\otimes \mathsf{Com} \,\cong\, \O_{\ovr{\CC}}\quad,
}
\end{flalign}
where the last isomorphism is due to $\mathsf{Com}(n) = k\in\Ch(k)$, for all $n\geq 0$.
In order to simplify our notations, in the following we will denote \eqref{eqn:tmpresolution} by
\begin{flalign}\label{eqn:wCCresolution}
w_{\ovr{\CC}} : \O_{\ovr{\CC}}\otimes \E_\infty\longrightarrow \O_{\ovr{\CC}}\quad.
\end{flalign}
\begin{theo}\label{theo:Einftyresolution}
Let us assume as before that $k\supseteq \bbQ$. 
For every orthogonal category $\ovr{\CC}$, the $\Op_{\CC_0}(\Ch(k))$-morphism
\eqref{eqn:wCCresolution} defines a $\Sigma$-cofibrant resolution
of the quantum field theory operad $\O_{\ovr{\CC}}\in \Op_{\CC_0}(\Ch(k))$. These resolutions are
functorial in the sense that $w : \O_{(-)}\otimes\E_\infty \to \O_{(-)}$ is a natural transformation
between the functors $\O_{(-)} \otimes \E_\infty :\OCat \to \Op(\Ch(k)) $
and $\O_{(-)}: \OCat \to \Op(\Ch(k))$.
\end{theo}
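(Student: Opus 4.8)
The plan is to unfold Definition \ref{def:Sigmacofibres}: I must verify that (i) the operad $\O_{\ovr{\CC}}\otimes\E_\infty$ is $\Sigma$-cofibrant, that (ii) the morphism $w_{\ovr{\CC}}$ of \eqref{eqn:wCCresolution} is an acyclic fibration in $\Op_{\CC_0}(\Ch(k))$, and finally the naturality claim. Throughout I would exploit that, by \eqref{eqn:componentsOCCresolved}, every component factorizes as $\O_{\ovr{\CC}}\big(\substack{t\\\und{c}}\big)\otimes\E_\infty(n)$, so that $w_{\ovr{\CC}}$ acts only on the second tensor factor via the Barratt-Eccles resolution $w:\E_\infty\to\mathsf{Com}$ of \eqref{eqn:Einfty2Com}.

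For (i) I would argue exactly as in the proof of Theorem \ref{theo:strictification} and reduce $\Sigma$-cofibrancy to a statement with the group action forgotten. Concretely, by Proposition \ref{propo:componentcofimpliesSigmacof} (which uses $k\supseteq\bbQ$) it suffices to check that the underlying chain complex of each component $\O_{\ovr{\CC}}\big(\substack{t\\\und{c}}\big)\otimes\E_\infty(n)$ is cofibrant in $\Ch(k)$. The first factor is cofibrant, being a coproduct of copies of the cofibrant monoidal unit $k$ concentrated in degree $0$ (see \eqref{eqn:tensoring} and the proof of Theorem \ref{theo:strictification}). The second factor $\E_\infty(n)$ is likewise cofibrant in $\Ch(k)$, since as the normalized chain complex of the contractible free $\Sigma_n$-simplicial set $E\Sigma_n$ it is a non-negatively graded complex of free $k$-modules, hence cofibrant in the projective model structure of Theorem \ref{theo:Chmodelcat}. (Alternatively this follows from the $\Sigma$-cofibrancy of $\E_\infty$ recorded in \eqref{eqn:Einfty2Com}.) Because $\Ch(k)$ is a symmetric monoidal model category, the tensor product of these two cofibrant objects is again cofibrant, which is what we need.

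For (ii), by Theorem \ref{theo:Opmodelcat} it suffices to check that each component of $w_{\ovr{\CC}}$ is simultaneously a quasi-isomorphism and degree-wise surjective. The component at $\big(\substack{t\\\und{c}}\big)$ is $\id\otimes w:\O_{\ovr{\CC}}\big(\substack{t\\\und{c}}\big)\otimes\E_\infty(n)\to\O_{\ovr{\CC}}\big(\substack{t\\\und{c}}\big)\otimes k\cong\O_{\ovr{\CC}}\big(\substack{t\\\und{c}}\big)$, where $w:\E_\infty(n)\to\mathsf{Com}(n)=k$ is a degree-wise surjective quasi-isomorphism because the operadic resolution \eqref{eqn:Einfty2Com} is an acyclic fibration, so each arity component is one again by Theorem \ref{theo:Opmodelcat}. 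The key point is that $\O_{\ovr{\CC}}\big(\substack{t\\\und{c}}\big)$ is a coproduct $\bigoplus_S k$ of copies of $k$ in degree $0$, so tensoring with it is just the coproduct functor $\bigoplus_S(-)$; this is exact and preserves degree-wise surjections, hence it carries the degree-wise surjective quasi-isomorphism $w$ to one of the same type. Thus $w_{\ovr{\CC}}$ is an acyclic fibration, and together with (i) this shows that \eqref{eqn:wCCresolution} is a $\Sigma$-cofibrant resolution.

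Finally, for functoriality I would observe that $(-)\otimes\E_\infty$ defines an endofunctor of $\Op(\Ch(k))$: it fixes the color set and sends a morphism $(f,\phi)$ to $(f,\phi\otimes\id)$, using that tensoring with the $1$-colored operad $\E_\infty$ commutes with the color-pullback $f^\ast$. Hence $\O_{(-)}\otimes\E_\infty$ is a functor, being the composite of this endofunctor with $\O_{(-)}$ from Proposition \ref{propo:functorialityoperads}. Naturality of $w$ then reduces, in each component, to the identity $\O_F\circ(\id\otimes w)=(\id\otimes w)\circ(\O_F\otimes\id)$, which is precisely the interchange law of the tensor product, since $\O_F$ acts on the operad factor while $w$ acts on the $\E_\infty$ factor. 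I expect no genuine obstacle in this argument: the only substantive input is the cofibrancy of $\E_\infty(n)$ as a chain complex, and the potentially delicate handling of the residual $\Sigma_{\und{c}}$-symmetry is entirely absorbed by Proposition \ref{propo:componentcofimpliesSigmacof} thanks to the standing hypothesis $k\supseteq\bbQ$.
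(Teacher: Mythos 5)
Your proposal is correct and follows the same overall decomposition as the paper (which is essentially forced by Definition \ref{def:Sigmacofibres}): $\Sigma$-cofibrancy of $\O_{\ovr{\CC}}\otimes\E_\infty$ via Proposition \ref{propo:componentcofimpliesSigmacof}, then the acyclic fibration property component-wise, then naturality. The one genuine difference is in how the weak-equivalence half of the acyclic fibration property is established. The paper invokes model-categorical machinery: since each $w:\E_\infty(n)\to k$ is a weak equivalence \emph{between cofibrant objects} and $V\otimes(-)$ is a left Quillen functor for cofibrant $V$, Ken Brown's lemma \cite[Lemma 1.1.12]{Hovey} gives that $\id\otimes w$ is a weak equivalence. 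You instead exploit the very special form of $\O_{\ovr{\CC}}\big(\substack{t\\\und{c}}\big)$ as a free $k$-module concentrated in degree $0$, so that $\id\otimes w$ is literally a direct sum $\bigoplus_S w$, and homology commutes with direct sums. Your route is more elementary and self-contained (no Ken Brown, no left Quillen functor property needed), at the price of being less robust: it would not survive replacing $\O_{\ovr{\CC}}$ by an operad whose components are merely cofibrant chain complexes, whereas the paper's argument would. Two further small points in your favour: you actually justify the cofibrancy of $\E_\infty(n)$ (as normalized chains of a simplicial set, a bounded-below complex of free $k$-modules), which the paper asserts without comment, and you spell out the functoriality that the paper dismisses as obvious; your description of $(-)\otimes\E_\infty$ as an endofunctor of $\Op(\Ch(k))$ commuting with color pullback $f^\ast$, with naturality reducing to interchange, is exactly the right way to make that precise. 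Your parenthetical alternative for cofibrancy of $\E_\infty(n)$ (deducing it from $\Sigma$-cofibrancy) is the only slightly glib step, since ``cofibrant in $\Ch(k)^{\Sigma_n}$ implies cofibrant in $\Ch(k)$'' itself needs an argument, but it is inessential to your proof.
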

\begin{proof}
Using the explicit definition in \eqref{eqn:componentsOCCresolved}, we observe that each component
$\big(\O_{\ovr{\CC}}\otimes \E_\infty \big)\big(\substack{t \\ \und{c}}\big)$  is a cofibrant
chain complex because $\E_\infty(n)$ is cofibrant and the $\Set$-tensoring 
(i.e.\ forming coproducts \eqref{eqn:tensoringgeneral}) preserves cofibrant objects.
It follows from Proposition \ref{propo:componentcofimpliesSigmacof}
that $\O_{\ovr{\CC}}\otimes \E_\infty  \in \Op_{\CC_0}(\Ch(k))$ is $\Sigma$-cofibrant.
\sk

We next have to prove that \eqref{eqn:wCCresolution} is an acyclic fibration
in $\Op_{\CC_0}(\Ch(k))$, i.e.\ each component
$w_{\ovr{\CC}} = \id\otimes w : \O_{\ovr{\CC}}\big(\substack{t \\ \und{c}}\big)\otimes \E_\infty(n)
\to \O_{\ovr{\CC}}\big(\substack{t \\ \und{c}}\big)\otimes k$ is an acyclic fibration
in $\Ch(k)$, cf.\ Theorem \ref{theo:Opmodelcat} and Theorem \ref{theo:Chmodelcat}.
The fibration property is clear because each $w : \E_\infty(n)\to k$ is degree-wise surjective.
Using further that each $w : \E_\infty(n)\to k$ is a weak equivalence between
cofibrant objects in $\Ch(k)$, the left Quillen functor property of the tensor product
$V\otimes (-): \Ch(k) \to \Ch(k)$ (for $V\in\Ch(k)$ cofibrant) implies via Ken Brown's lemma \cite[Lemma 1.1.12]{Hovey} 
that $w_{\ovr{\CC}} = \id\otimes w$ is a weak equivalence too.
\sk

Functoriality of these resolutions is obvious.
\end{proof}

\begin{rem}\label{rem:hAQFTlimitingcases}
In order to obtain some intuition for the algebras over $\O_{\ovr{\CC}}\otimes \E_\infty$, i.e.\ 
homotopy quantum field theories $\QFT_w(\ovr{\CC}) := \Alg(\O_{\ovr{\CC}}\otimes \E_\infty )$,
we consider some special limiting cases.
Given any orthogonal category $\ovr{\CC} = (\CC,\perp)$, let us consider as an auxiliary concept
the orthogonal category $\ovr{\CC}^{\max} = (\CC,\perp^{\max})$, where
$\perp^{\max} := \mathrm{Mor}\,\CC \, {}_\mathrm{t}^{}\!\times\!{}^{}_\mathrm{t} \,\mathrm{Mor}\,\CC$
is the maximal orthogonality relation. We have isomorphisms of categories
\begin{flalign}
\QFT(\ovr{\CC}^{\max}) ~\cong~ \Alg(\O_{\ovr{\CC}^{\max}}) ~\cong~\Alg(\mathsf{Com})^{\CC}\quad, 
\end{flalign}
i.e.\ quantum field theories on $\ovr{\CC}^{\max} $ are functors on $\CC$ with values in {\em commutative}
differential graded algebras. For the resolution $\O_{\ovr{\CC}^{\max}} \otimes\E_\infty$, we similarly obtain
isomorphisms of categories
\begin{flalign} \label{aqftmaxortheqn}
\QFT_w(\ovr{\CC}^{\max}) ~=~ \Alg(\O_{\ovr{\CC}^{\max}}\otimes\E_\infty) ~\cong~\Alg(\E_\infty)^{\CC}\quad, 
\end{flalign}
i.e.\ homotopy quantum field theories on $\ovr{\CC}^{\max}$ are functors on $\CC$ with values in 
$\E_\infty$-algebras.
\sk

The orthogonal functor $p:= \id_\CC : \ovr{\CC} \to \ovr{\CC}^{\max}$ induced
by the identity $\id_\CC$ defines a colored operad morphism 
$\O_p \otimes\E_\infty : \O_{\ovr{\CC}}\otimes\E_\infty \to \O_{\ovr{\CC}^{\max}}\otimes\E_\infty$ 
and thus a Quillen adjunction
\begin{flalign}\label{eqn:EinftyQFTs}
\xymatrix{
p_! \,:\, \QFT_w(\ovr{\CC}) ~\ar@<0.5ex>[r]&\ar@<0.5ex>[l]  ~\Alg(\E_\infty)^{\CC}\,:\, p^\ast
}\quad.
\end{flalign}
The right adjoint $p^\ast$ assigns to every functor $\BBB: \CC \to \Alg(\E_\infty)$ with values in 
$\E_\infty$-algebras a homotopy quantum field theory $p^\ast(\BBB)\in \QFT_w(\ovr{\CC})$ 
on $\ovr{\CC}$. Because the latter theories assign homotopy-coherently commutative observable algebras 
to spacetimes, one should interpret them as drastically simplified toy-models
that ignore all quantum theoretic and also Poisson algebraic aspects. We shall show in 
Section \ref{subsec:stacks} below that such theories naturally arise by taking 
suitable ``function algebras'' on $\infty$-stacks, which can be understood as the 
starting point for Poisson algebraic studies of classical gauge theories 
and their deformation quantization.
\sk

Now let us consider as another auxiliary concept the orthogonal category
$\ovr{\CC}^{\min} = (\CC,\emptyset)$ with the trivial orthogonality relation.
We have isomorphisms of categories
\begin{flalign}
\QFT(\ovr{\CC}^{\min}) ~\cong~ \Alg(\O_{\ovr{\CC}^{\min}}) ~\cong~\Alg(\mathsf{As})^{\CC} = \dgAlg(k)^\CC\quad, 
\end{flalign}
i.e.\ quantum field theories on $\ovr{\CC}^{\min} $ are functors on $\CC$ with values in (non-commutative)
differential graded algebras. We define a colored operad morphism
$l : \O_{\ovr{\CC}^{\min}} \to \O_{\ovr{\CC}} \otimes\E_\infty$ by the components 
(recall \eqref{eqn:AQFToperadoperations} and use that the equivalence relation is trivial for $\ovr{\CC}^{\min}$)
\begin{flalign}
\xymatrix{
\ar@{=}[d] \O_{\ovr{\CC}^{\min}}\big(\substack{t \\ \und{c}}\big) \ar[rr]^-{l}~&&~  \O_{\ovr{\CC}}\big(\substack{t \\ \und{c}}\big) \otimes\E_\infty(n) \\
 \ar[d]_-{(\mathrm{diag}\times \id)\otimes \id} \big(\Sigma_n \times \CC(\und{c},t)\big)\otimes k ~&&~\big(\Sigma_n \times \CC(\und{c},t)\big)\big/{\sim_\perp} \otimes \E_\infty(n) \ar[u]_-{\cong}\\
\big(\Sigma_n\times \Sigma_n \times \CC(\und{c},t)\big)\otimes k ~\ar[rr]_-{\cong} &&\big(\Sigma_n \times \CC(\und{c},t)\big)\otimes \mathsf{As}(n) \ar[u]_-{\pi \otimes i}
}
\end{flalign}
where $\mathrm{diag}$ denotes the diagonal map, $\pi$ is the projection to equivalence classes
and $i$ is the operad morphism from \eqref{eqn:Ass2Einfty}. This defines a Quillen adjunction
\begin{flalign}\label{eqn:underlyingdgAlgQFTs}
\xymatrix{
l_! \,:\, \dgAlg(k)^\CC ~\ar@<0.5ex>[r]&\ar@<0.5ex>[l]  ~\QFT_w(\ovr{\CC})\,:\, l^\ast
}\quad.
\end{flalign}
The right adjoint $l^\ast$ assigns to every homotopy quantum field theory $\AAA\in\QFT_w(\ovr{\CC})$
its underlying functor with values in differential graded algebras. Because this constructions
neglects the homotopies encoded in $\E_\infty$, such functor in general does not satisfy 
the strict $\perp$-commutativity axiom. 
\sk

Summing up, we have seen that every homotopy quantum field theory
$\AAA\in \QFT_w(\ovr{\CC})$ has an underlying functor $l^\ast(\AAA) : \CC\to \dgAlg(k)$ that is
obtained by forgetting the chain homotopy data in $\E_\infty$, cf.\ \eqref{eqn:underlyingdgAlgQFTs}.
Furthermore, every functor $\BBB : \CC\to \Alg(\E_\infty)$ with values in $\E_\infty$-algebras
defines a homotopy quantum field theory $p^\ast(\BBB)\in \QFT_w(\ovr{\CC})$, cf.\ \eqref{eqn:EinftyQFTs}. 
Examples of the latter theories are obtained in Corollary \ref{cor:stackQFTs} below. 
\end{rem}

\subsection{\label{subsec:stacks}Toy-models via cochain algebras on stacks}
We provide simple toy-models of homotopy quantum field theories 
on our resolved operads from Theorem \ref{theo:Einftyresolution}
by taking suitable cochain algebras on stacks. (In this section 
the term ``stack'' always refers to ``$\infty$-stacks''.) Applying this
construction to diagrams of stacks will define functors $\CC\to \Alg(\E_\infty)$
with values in $\E_\infty$-algebras, which may be regarded 
as drastically simplified toy-models of homotopy quantum field theories
according to Remark \ref{rem:hAQFTlimitingcases}, see in particular \eqref{eqn:EinftyQFTs}. 
In our opinion these toy-models are
interesting because they provide a convenient starting point 
for investigating Poisson algebraic aspects of classical gauge theories
and their deformation quantization. We however would like
to emphasize that such constructions are technically very involved, see
e.g.\ \cite{DAG,DAG2} for a modern homotopy theoretical approach, 
even in the case of linear fields \cite{BVquantization}.
We therefore postpone these problems to future works
and only outline below the construction of the stacky cochain algebras,
which should be interpreted physically as  ``function algebras'' on stacks.
\sk

In this section we assume that the reader has some familiarity 
with simplicial sets, simplicial $k$-modules and the Dold-Kan correspondence,
see e.g.\ \cite{Schwede} and \cite{GoerssJardine}. 
Before we can address the more complicated case of stacks, let us discuss 
an analog of our construction for simplicial sets, which is 
given by the usual cochain algebra construction.
\sk

We denote by $\sSet$ the category of simplicial sets equipped with the Quillen model structure,
by $\mathbf{sMod}_k$ the model category of simplicial $k$-modules 
(weak equivalences and fibrations are defined as for the underlying simplicial sets)
and by $\Ch_{\geq 0}(k)$ the category of non-negatively graded chain complexes 
with the projective model structure (weak equivalences are quasi-isomorphisms 
and fibrations are surjective in all positive degrees). 
There exists a sequence of Quillen adjunctions (left adjoints point from left to right)
\begin{flalign}\label{eqn:normalizedchains}
\xymatrix@C=4em{
\sSet \ar@<0.5ex>[r]^-{(-)\otimes k}~&~  \ar@<0.5ex>[l]^-{U} \mathbf{sMod}_k\ar@<0.5ex>[r]^-{N_\ast} ~&~ \ar@<0.5ex>[l]^-{\Gamma}  \Ch_{\geq 0}(k) ~ \ar@<0.5ex>[r]^-{\iota}&~ \ar@<0.5ex>[l]^-{\tau_{\geq 0}} \Ch(k)
}\quad.
\end{flalign}
The first step is the free-forget adjunction between simplicial sets and simplicial $k$-modules,
the second step is the Dold-Kan correspondence and the last step is the inclusion-(good)truncation adjunction
for chain complexes. The composite of left adjoints is the functor
that assigns to a simplicial set $X\in \sSet$ its normalized chains $N_\ast(X)\in\Ch(k)$,
where here and in the following we suppress both the free simplicial $k$-module functor
$(-)\otimes k : \sSet\to\mathbf{sMod}_k$ and the inclusion functor $\iota : \Ch_{\geq 0}(k)\to \Ch(k)$.
\sk

The normalized cochains on a simplicial set are obtained by composing the 
normalized chains functor with the internal hom functor $[-,k]$, which is the 
left adjoint in the Quillen adjunction
\begin{flalign}\label{eqn:cotensoringadjunction}
\xymatrix{
[-,k] \,:\, \Ch(k) ~\ar@<0.5ex>[r]&\ar@<0.5ex>[l]  ~\Ch(k)^\op\,:\, k^{(-)}
}\quad.
\end{flalign}
Here the right adjoint functor is given by cotensoring, which in the present case coincides 
with the internal hom  $k^V = [V,k]$, for all $V\in\Ch(k)$. The composition of
the left adjoints in \eqref{eqn:normalizedchains} and \eqref{eqn:cotensoringadjunction} 
defines a left Quillen functor
\begin{flalign}
N^\ast \,:\, \sSet \longrightarrow \Ch(k)^{\op}\quad,
\end{flalign}
which is the normalized cochains functor. Notice that
we do not have to derive this functor  because every
simplicial set is cofibrant. It was shown in \cite{BergerFresse}
that the normalized cochains
on every simplicial set $X\in\sSet$ carry a canonical $\E_\infty$-algebra
structure, i.e.\ 
\begin{flalign}\label{eqn:cochainssSetformula}
N^\ast(X) = [N_\ast(X),k] \in\Alg(\E_\infty)\quad.
\end{flalign}
\begin{ex}\label{ex:sSets}
We may regard any set $S\in\Set$ as a constant simplicial set $S\in\sSet$.
Forming the normalized cochains on this simplicial set, we obtain a
chain complex concentrated in degree $0$ with trivial differential.
Concretely, it is given by $N^\ast(S) = \mathrm{Map}(S,k)$,
i.e.\ the normalized cochain algebra on a set $S\in\Set$ is precisely its
function algebra.
\sk

More interestingly, we may regard any groupoid $\mathcal{G} \in \Grpd$
as a simplicial set via the nerve functor $B(\mathcal{G}) \in\sSet$.
In this case the normalized cochain algebra $N^\ast(B(\mathcal{G}))$
is precisely the groupoid cohomology algebra, see e.g.\ \cite{Crainic}.
\end{ex}

The construction above generalizes to the case of stacks, which
loosely speaking are smooth simplicial sets. Recall from 
e.g.\ \cite{Schreiber,Hollander,Dugger} that stacks may be described 
by presheaves  of simplicial sets on the site of Cartesian spaces, 
i.e.\ they are objects in the category
\begin{flalign}
\mathbf{H}\, := \, \PSh(\Cart,\sSet)\, =\, \sSet^{\Cart^\op}\quad.
\end{flalign}
For an example, we refer to the stack of Yang-Mills fields constructed in \cite{BSSStack}.
\sk

In the following we shall endow  $\mathbf{H}$ (as well as every other category of presheaves with 
values in a model category) with its global projective model structure. We explain in
Remark \ref{rem:localmodelstructure} below how our results can be extended to
stacks, which are the fibrant objects in the local model structure on $\mathbf{H}$, cf.\ \cite{Hollander,Dugger}.
From \eqref{eqn:normalizedchains}, we obtain the following 
induced sequence of Quillen adjunctions between presheaf model categories
\begin{flalign}\label{eqn:normalizedchainsstacks}
\xymatrix@C=2em{
\mathbf{H} \ar@<0.5ex>[r]^-{(-)\otimes k}~&~  \ar@<0.5ex>[l]^-{U} \PSh(\Cart,\mathbf{sMod}_k)
\ar@<0.5ex>[r]^-{N_\ast} ~&~ \ar@<0.5ex>[l]^-{\Gamma}  \PSh(\Cart,\Ch_{\geq 0}(k)) ~ 
\ar@<0.5ex>[r]^-{\iota}&~ \ar@<0.5ex>[l]^-{\tau_{\geq 0}} \PSh(\Cart,\Ch(k))
}\quad,
\end{flalign}
where each functor acts object-wise on presheaves.
Suppressing as before the functors $(-)\otimes k$ and $\iota$, 
the composition of left adjoints in \eqref{eqn:normalizedchainsstacks}
assigns to an object $X\in \mathbf{H} $ the presheaf of chain complexes
$N_\ast(X)\in  \PSh(\Cart,\Ch(k))$ whose value on a test space $T\in\Cart$ is 
\begin{flalign}\label{eqn:Nastcomponents}
N_\ast(X)(T) ~=~N_\ast\big(X(T)\big)\in\Ch(k)\quad.
\end{flalign}

We shall now generalize the Quillen adjunction in \eqref{eqn:cotensoringadjunction} to the case of presheaves.
Because we are working with presheaves on Cartesian spaces, we choose $k =\bbR$ or $k=\bbC$
as the underlying base ring. (With this choice we may describe real or complex valued cochain algebras.)
We define the object $\und{k}\in\PSh(\Cart,\Ch(k))$ by setting
$\und{k}(T) := C^\infty(T,k)$ (concentrated in degree $0$ with trivial differential), for
all test spaces $T \in\Cart$. Because the category 
$\PSh(\Cart,\Ch(k))$ is enriched over $\Ch(k)$, we have a mapping chain complex functor
$[-,-]^\infty : \PSh(\Cart,\Ch(k))^\op\times \PSh(\Cart,\Ch(k))\to \Ch(k)$, which is explicitly
given by the end formula
\begin{flalign}\label{eqn:inftyhom}
[V,W]^\infty = \int_{T\in\Cart^\op} \big[V(T),W(T)\big] \quad,
\end{flalign}
for all $V,W\in \PSh(\Cart,\Ch(k))$, where on the right-hand side $[-,-]$ is the internal hom in $\Ch(k)$.
There exists an adjunction
\begin{flalign}\label{eqn:cotensoringadjunctionstacks}
\xymatrix{
[-,\und{k}]^\infty \,:\, \PSh(\Cart,\Ch(k)) ~\ar@<0.5ex>[r]&\ar@<0.5ex>[l]  ~\Ch(k)^\op\,:\, \und{k}^{(-)}
}\quad,
\end{flalign}
where the right adjoint is given by cotensoring
$\und{k}^V (T) = [V,\und{k}(T)]$, for all $V\in\Ch(k)$ and $T\in\Cart$.
Because $\und{k}\in \PSh(\Cart,\Ch(k))$ is a fibrant object, it follows
that \eqref{eqn:cotensoringadjunctionstacks} is a Quillen adjunction.
The composition of the left adjoints in \eqref{eqn:normalizedchainsstacks}
and \eqref{eqn:cotensoringadjunctionstacks} defines a left Quillen functor
\begin{flalign}
{N^{\infty}}^{\ast} \, : \, \mathbf{H} \longrightarrow \Ch(k)^\op \quad,
\end{flalign}
which we call the {\em smooth normalized cochains functor}.
\sk

In contrast to the case of simplicial sets above, the smooth normalized cochains functor
does not necessarily preserve weak equivalences and hence it has to be derived.
In order to provide a concrete construction, we can take the very explicit 
cofibrant replacement functor $Q : \mathbf{H}\to\mathbf{H}$ developed by 
Dugger in \cite{DuggerUniversal} and define a derived functor by pre-composition
\begin{flalign}\label{eqn:Nastderived}
\bbL {N^{\infty}}^{\ast} \,:= \,  {N^{\infty}}^{\ast} \, Q\, : \, \mathbf{H} \longrightarrow \Ch(k)^\op \quad.
\end{flalign}
For every $X\in \mathbf{H}$, the derived smooth normalized 
cochains on $X$ carry a canonical $\E_\infty$-algebra structure, i.e.\ we canonically have that
\begin{flalign}
\bbL  {N^{\infty}}^{\ast}(X) \, = \, \big[N_\ast\big(Q(X) \big) ,\und{k}\big]^\infty \in\Alg(\E_\infty)\quad.
\end{flalign}
The relevant argument goes as follows:
Recalling \eqref{eqn:inftyhom} and \eqref{eqn:Nastcomponents},
we have that
\begin{flalign}
\bbL {N^{\infty}}^{\ast} (X) = \int_{T\in\Cart^\op} \big[ N_\ast\big(Q(X)(T) \big),\und{k}(T)\big] \quad.
\end{flalign}
Using that by \cite{BergerFresse} each $N_\ast\big(Q(X)(T)\big)\in\Ch(k)$ carries a canonical 
$\E_\infty$-{\em co}algebra structure, we obtain that each term under the end is canonically
an $\E_\infty$-algebra and hence so is the end.
\begin{rem}\label{rem:localmodelstructure}
Recall that in our constructions above we have endowed the category
$\mathbf{H} = \PSh(\Cart,\sSet)$ with its global projective model structure.
However, for the study of stacks the local projective model structure on $\mathbf{H}$
is more suitable, see \cite{Hollander,Dugger}. In short, the local model structure
is obtained via left Bousfield localization of the global model structure at all hypercovers.
This implies that local and global cofibrations are precisely the same, but there are more
weak equivalences and fewer fibrations in the local model structure. We denote
the local projective model structure by $\mathbf{H}_{\mathrm{loc}}$
and recall that the category of stacks is by definition the full subcategory 
$ \mathbf{St}\subseteq \mathbf{H}_{\mathrm{loc}}$ of locally fibrant objects. 
Notice that the full subcategory  $\mathbf{St}\subseteq \mathbf{H}_{\mathrm{loc}}$
is not necessarily a model category in its own right. However it is a so-called
{\em homotopical category} in the sense of \cite{Riehl}, i.e.\
a category with a notion of weak equivalences satisfying the $2$-out-of-$6$ property. 
As a consequence of  \cite[Proposition 3.3.5]{Hirschhorn}, we obtain that the local weak equivalences
between stacks are precisely the global weak equivalences. This implies that
\eqref{eqn:Nastderived} restricts to a homotopical functor
\begin{flalign}
\bbL {N^{\infty}}^{\ast}  \,:\, \mathbf{St} \longrightarrow \Ch(k)^\op
\end{flalign}
on the homotopical category of stacks, i.e.\ it preserves local weak equivalences between stacks.
\end{rem}

We conclude this subsection with an application to homotopy quantum field theory.
\begin{cor}\label{cor:stackQFTs}
Let $\ovr{\CC}$ be an orthogonal category. For every functor
$X : \CC^\op\to \mathbf{St}\subseteq \mathbf{H}_{\mathrm{loc}}$ 
with values in stacks (i.e.\ fibrant objects), we obtain a functor 
$\bbL  {N^{\infty}}^{\ast}(X) : \CC\to \Alg(\E_\infty)$, 
i.e.\ via \eqref{eqn:EinftyQFTs} an example of a homotopy quantum field
theory $p^\ast \big(\bbL {N^{\infty}}^{\ast} (X)\big)\in \QFT_w(\ovr{\CC})$.
\end{cor}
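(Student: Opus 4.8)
The plan is to assemble the ingredients already developed in this subsection: the essential content is that the $\E_\infty$-algebra structure on $\bbL {N^{\infty}}^{\ast}(X)$ exhibited above is \emph{natural} in $X$, so that $\bbL {N^{\infty}}^{\ast}$ refines to a functor valued in $\E_\infty$-algebras; the remaining steps are bookkeeping of variances and an application of the Quillen adjunction \eqref{eqn:EinftyQFTs}.

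First I would upgrade the target of the smooth normalized cochains functor. Recall that $\bbL {N^{\infty}}^{\ast}(X) = [N_\ast(Q(X)),\und{k}]^\infty$, with the $\E_\infty$-algebra structure obtained by feeding the natural $\E_\infty$-\emph{co}algebra structure on each $N_\ast(Q(X)(T))$ from \cite{BergerFresse} into the end formula \eqref{eqn:inftyhom}. Since the cofibrant replacement $Q$, the normalized chains $N_\ast$, the cotensoring against $\und{k}$ and the end are all functorial, and the Berger--Fresse coalgebra structure is natural with respect to simplicial maps, this assignment refines to a functor $\bbL {N^{\infty}}^{\ast} : \mathbf{St}\to\Alg(\E_\infty)^\op$ whose composition with the forgetful functor $\Alg(\E_\infty)^\op\to\Ch(k)^\op$ recovers the homotopical functor of Remark \ref{rem:localmodelstructure}.

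Next I would precompose with the given functor $X : \CC^\op\to\mathbf{St}$. This yields a functor $\CC^\op\to\Alg(\E_\infty)^\op$, which is precisely the datum of a covariant functor $\bbL {N^{\infty}}^{\ast}(X) : \CC\to\Alg(\E_\infty)$ sending $c\mapsto \bbL {N^{\infty}}^{\ast}(X(c))$ and a $\CC$-morphism $f : c\to c'$ to the $\E_\infty$-algebra map $\bbL {N^{\infty}}^{\ast}(X(c))\to \bbL {N^{\infty}}^{\ast}(X(c'))$ induced by $X(f) : X(c')\to X(c)$. Applying the right adjoint $p^\ast$ of \eqref{eqn:EinftyQFTs} to this functor then produces the homotopy quantum field theory $p^\ast\big(\bbL {N^{\infty}}^{\ast}(X)\big)\in\QFT_w(\ovr{\CC})$, as claimed.

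The only point demanding genuine care --- and hence the main obstacle --- is the naturality asserted in the first step: one must verify that the Berger--Fresse coalgebra structures on the chain complexes $N_\ast(Q(X)(T))$ are compatible with the maps induced by morphisms of $X$, and that the end \eqref{eqn:inftyhom} can be taken functorially, so that the various $\E_\infty$-structures are genuinely compatible with the chain maps $\bbL {N^{\infty}}^{\ast}(X(f))$. No further homotopy-theoretic input is required, since $\bbL {N^{\infty}}^{\ast}$ was already shown to be well defined and homotopical on $\mathbf{St}$ in Remark \ref{rem:localmodelstructure}; the corollary is thus essentially formal once this naturality is in place.
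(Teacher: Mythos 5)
Your proposal is correct and follows essentially the same route the paper intends: the paper states this corollary without a separate proof precisely because it is the assembly you describe — the $\E_\infty$-structure on $\bbL{N^{\infty}}^{\ast}$ constructed via the Berger--Fresse coalgebra structure and the end formula \eqref{eqn:inftyhom} is natural, so composing with $X:\CC^\op\to\mathbf{St}$ yields a covariant functor $\CC\to\Alg(\E_\infty)$, to which one applies $p^\ast$ from \eqref{eqn:EinftyQFTs}. Your identification of the naturality of the coalgebra structures (and the resulting functoriality of the end) as the only point requiring verification matches the implicit content of the paper's construction.
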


\begin{ex}
Let us consider the orthogonal category $\ovr{\Loc}$ from Example \ref{ex:LocOCat}.
It was shown in  \cite{BSSStack}  that for each spacetime $M\in\Loc$
there exists a stack $\mathsf{YM}_G(M) \in \mathbf{St}$ of solutions of the 
Yang-Mills equation with a (possibly non-Abelian) structure group $G$.
This assignment is contravariantly functorial, i.e.\ we have a functor 
$\mathsf{YM}_G : \Loc^\op \to\mathbf{St}$. We interpret the
derived smooth normalized cochain algebra $\bbL {N^{\infty}}^{\ast}(\mathsf{YM}_G(M) ) \in \Alg(\E_\infty)$
as a higher algebra of observables for classical Yang-Mills theory on the spacetime $M$. The corresponding functor
$\bbL  {N^{\infty}}^{\ast}(\mathsf{YM}_G) : \Loc\to\Alg(\E_\infty) $ then defines via  \eqref{eqn:EinftyQFTs}
a toy-model for a homotopy quantum field theory. (Let us recall from above that this construction
does not yet take into account Poisson geometric aspects and/or deformation quantization of the gauge theory.)
\sk

It is important to emphasize that our higher algebras of observables
$\bbL  {N^{\infty}}^{\ast}(\mathsf{YM}_G(M) ) \in  \Alg(\E_\infty)$ are valued in chain complexes.
They describe more than just gauge invariant observables, which are given by
the $0$th homology. Our algebras 
should rather be interpreted as a generalization of the Chevalley-Eilenberg algebras 
(i.e.\ the BRST formalism in physics) from infinitesimal to finite gauge transformations. 
For (finite-dimensional) Lie groupoid cochain algebras, a precise relationship is 
understood via the van Est map, see e.g.\ \cite{Crainic}. We expect that a similar
result holds true for stacks and will come back to this issue in a future work.
\end{ex}


\section{\label{sec:orbifoldization}Examples via homotopy invariants}
In this section we present another class of examples 
of non-strict homotopy quantum field theories on our resolved operads
from Theorem \ref{theo:Einftyresolution}.  Our envisaged 
construction already appeared in a less formal and complete 
approach in a previous work of two of us  \cite{BeniniSchenkel} and it can be 
interpreted in terms of orbifoldization, i.e.\ it takes homotopy invariants 
of local groupoid actions on quantum field theories. See also
\cite{SW} for a similar construction for topological field theories. The main missing point
in \cite{BeniniSchenkel} was that we could only establish the relevant homotopy 
data in low orders, but we had no control of its coherence. It is due to the new
techniques developed in the present paper that we can now 
prove homotopy-coherence of the construction proposed in \cite{BeniniSchenkel}.
In Example \ref{ex:fiberedcat} below we explain that
such constructions might also be relevant for perturbative quantum gauge theories.
Let us emphasize that, in contrast to our previous examples obtained from cochain 
algebras on stacks (cf.\ Corollary \ref{cor:stackQFTs}), 
the homotopy quantum field theories
obtained from orbifoldization in general {\em do not}
admit a description in terms of an $\E_\infty$-algebra valued
functor, hence they encode quantum theoretic features.
\sk

Let $\ovr{\CC}$ be an orthogonal category and $\pi : \DD\to \CC$ 
a category fibered in groupoids. We endow $\DD$ with the pullback
orthogonality relation $\perp_\DD := \pi^\ast(\perp_\CC)$,
i.e.\ $g_1\perp_\DD g_2$ if and only if $\pi(g_1)\perp_\CC\pi(g_2)$.
We obtain an orthogonal functor $\pi : \ovr{\DD}\to \ovr{\CC} $,
which we call an {\em orthogonal category fibered in groupoids}.
Given any strict $\Ch(k)$-valued quantum field theory
$\AAA \in \QFT(\ovr{\DD})$ on the total category $\ovr{\DD}$, we would like to
define a quantum field theory on the base category $\ovr{\CC}$
by forming {\em homotopy invariants} along the groupoid fibers
$\pi^{-1}(c)\in\Grpd$, for all $c\in\CC$. Since forming (homotopy) invariants
corresponds to categorical (homotopy) limits, we are looking for something like
a (derived) right adjoint functor of the pullback functor $\pi^\ast : \QFT(\ovr{\CC})\to \QFT(\ovr{\DD})$.
Unfortunately, such right adjoints (i.e.\ operadic {\em right} Kan extensions)
generically do not exist, in contrast to the left adjoints (i.e.\ operadic {\em left} Kan 
extensions) in \eqref{eqn:generaladjunction}. We therefore propose the following
alternative construction: Consider the underlying chain complex valued functor
$\AAA : \DD\to \Ch(k)$ of our quantum field theory $\AAA\in \QFT(\ovr{\DD})$.
In contrast to the situation above, the pullback functor
$\pi^\ast :\Ch(k)^\CC \to \Ch(k)^\DD$ on functor categories admits a right adjoint,
which is given by the right Kan extension $\Ran_\pi : \Ch(k)^\DD\to\Ch(k)^\CC$
along $\pi:\DD\to \CC$. The derived functor corresponding to the right Kan extension
is called {\em homotopy right Kan extension} (cf.\  \cite{hoKan2, hoKan3})
and it will be denoted as usual by $\hoRan_\pi : \Ch(k)^\DD\to\Ch(k)^\CC$.
Forming the homotopy right Kan extension of (the underlying $\Ch(k)$-valued functor of) 
our quantum field theory $\AAA\in \QFT(\ovr{\DD})$ however only defines
an object $\hoRan_\pi \AAA \in \Ch(k)^{\CC}$, i.e.\ we do not automatically
get the structure of a quantum field theory on $\ovr{\CC}$.  As observed in
\cite{BeniniSchenkel} by direct computations, it is generically not true
that  $\hoRan_\pi \AAA \in \Ch(k)^{\CC}$ carries the structure
of a {\em strict} quantum field theory on $\ovr{\CC}$ because
especially the $\perp$-commutativity property only holds true up to
chain homotopies. The novel result which we shall prove in this section
is as follows: Choosing a strictified model (in the sense of \cite{Hollander}) for the 
orthogonal category fibered in groupoids $\pi : \ovr{\DD}\to \ovr{\CC} $,
the homotopy right Kan extension $\hoRan_\pi \AAA$ carries canonically
the structure of a homotopy quantum field theory on $\ovr{\CC}$
corresponding to the resolution $w_{\ovr{\CC}} : \O_{\ovr{\CC}}\otimes \E_\infty \to \O_{\ovr{\CC}}$
from Theorem \ref{theo:Einftyresolution}, i.e.\ $\hoRan_\pi \AAA\in\QFT_w(\ovr{\CC})$.
\begin{ex}\label{ex:fiberedcat} 
Let $\ovr{\Loc}$ be the orthogonal category of 
spacetimes from Example \ref{ex:LocOCat}. 
For constructing  perturbative quantum Yang-Mills theory 
(with non-Abelian structure group $G$) within the 
BRST/BV formalism for algebraic quantum field theory,
one has to choose background principal $G$-bundles with
connections (satisfying the classical non-linear Yang-Mills equation)
around which one can perturb. This implies that the natural 
assignment of observable algebras for perturbative quantum Yang-Mills theory 
is {\em not} to spacetimes $M\in\Loc$ but rather
to triples $(M,P,A)$, where $M\in\Loc$ is a spacetime and $(P,A)$ is a Yang-Mills field on
$M$, i.e.\ a principal $G$-bundle $P\to M$ with connection $A$ satisfying the Yang-Mills equation.
\sk

Let us denote the category of such triples by $\mathbf{YM}_G\Loc$. A morphism
$(M,P,A)\to (M^\prime,P^\prime,A^\prime)$ is a principal $G$-bundle
morphism $g : P \to P^\prime$ that induces a $\Loc$-morphism $M\to M^\prime$
on the base spaces and that preserves the connections, i.e.\  $g^\ast A^\prime =A$.
(A similar category appeared before in the work of Zahn on quantum field theories
in the presence of background gauge fields \cite{Zahn}.)
Notice that there exists an obvious projection functor
$\pi : \mathbf{YM}_G\Loc \to\Loc$ which assigns to a triple $(M,P,A)$
its underlying spacetime $M$ and to a morphism $(M,P,A)\to (M^\prime,P^\prime,A^\prime)$ 
its underlying $\Loc$-morphism $M\to M^\prime$. It is easy to check
that $\pi : \mathbf{YM}_G\Loc \to\Loc$  is a category fibered in groupoids
with fibers $\pi^{-1}(M) \in\Grpd$ the groupoids of all
Yang-Mills fields $(P,A)$ over $M\in\Loc$, which we interpret as background fields. 
Pulling back the orthogonality relation from $\ovr{\Loc}$,
we obtain an orthogonal category fibered in groupoids 
$\pi : \ovr{\mathbf{YM}_G\Loc} \to \ovr{\Loc}$.
\sk

The constructions in \cite{Hollands,FredenhagenRejzner,FredenhagenRejzner2} and 
in particular in \cite{TehraniZahn} concretely define perturbative quantum Yang-Mills theory 
in terms of a functor $\AAA : \mathbf{YM}_G\Loc\to \dgAlg(k) $ that satisfies
the relevant $\perp$-commutativity axiom, 
i.e.\ $\AAA \in \QFT(\ovr{ \mathbf{YM}_G\Loc})$.
This is however not completely satisfactory
for the following reason: For a fixed spacetime $M\in\Loc$,
we do not only obtain a single algebra of observables,
but rather a whole diagram $\AAA\vert_{\pi^{-1}(M)}^{} : \pi^{-1}(M)\to \dgAlg(k)$
of observable algebras corresponding to different choices
of background fields around which we perturb. 
It would be desirable to construct from this data a single 
observable algebra $\AAA_\pi (M)\in\dgAlg(k)$ that 
combines both the background gauge fields on $M$
and the quantum perturbations around each background.
For constructing such a single algebra one has to remove certain 
gauge redundancies whose origin lies in the following classical picture: 
Perturbations of Yang-Mills fields on a spacetime $M\in \Loc$ admit different 
presentations in terms of gauge equivalent 
background Yang-Mills fields. These presentations should 
be regarded as being equivalent and hence
identified by taking a suitable (homotopy) 
quotient of the stack of Yang-Mills fields with perturbations.
At the dual level of observable algebras, this amounts to 
forming the homotopy limit 
\begin{flalign}
\AAA_\pi (M) \,:=\, \holim \Big(\AAA\vert_{\pi^{-1}(M)}^{} : \pi^{-1}(M)\to \dgAlg(k)\Big) \in \dgAlg(k)
\end{flalign}
of the diagram of observables on $M$. This means that we are taking homotopy invariants along
the groupoid fibers $\pi^{-1}(M)$. By construction, this defines a single algebra $\AAA_\pi (M)\in\dgAlg(k)$
that combines the observables for the background gauge fields on $M$ and the observables for the quantum
perturbations around each background in a way that the gauge redundancies mentioned above are removed.
The main aim of this section is to generalize and formalize this construction and to show that it defines, 
after a suitable strictification of the category fibered in groupoids, a homotopy 
quantum field theory, i.e.\ an algebra over our resolved operad from Theorem \ref{theo:Einftyresolution}.
An open problem, which is not addressed in this paper, is to compute explicitly the 
homotopy invariants $\AAA_\pi (M)\in\dgAlg(k)$ for perturbative quantum Yang-Mills theory 
and to provide a physical interpretation of the homology groups of this differential graded algebra.
We however would like to refer the reader to \cite{BeniniSchenkel} for concrete computations 
in the simpler setting where the input theory $\AAA$ consists of chain complexes 
concentrated in degree $0$.
\end{ex}

In the following we let $\ovr{\CC}$ be any orthogonal category. Because
of the strictification result in \cite[Theorems 1.2 and 1.4]{Hollander}, 
we may work without loss of generality with presheaves of groupoids
$F:\CC^\op\to \Grpd$ instead of categories fibered in groupoids over $\CC$. 
(Example \ref{ex:strictmodel} below explains how to rephrase our Example 
\ref{ex:fiberedcat} in terms of a presheaf.)  We recall that every presheaf
$F \in \PSh(\CC,\Grpd)$ defines a category fibered in groupoids
via the {\em Grothendieck construction}: The total category
$\CC_F$ is the category whose objects are pairs
$(c,x)$ with $c\in\CC$ and $x\in F(c)$ and whose morphisms
are pairs $(f, h) : (c,x)\to (c^\prime,x^\prime)$ 
with $f : c\to c^\prime$ in $\CC$ and $h : x \to f^\ast x^\prime$ in $F(c)$.
For better readability, here and in the following we use the pullback notation $f^\ast := F(f) 
: F(c^\prime)\to F(c)$ for the functor corresponding to a $\CC$-morphism 
$f : c\to c^\prime$. The composition of two $\CC_F$-morphisms $(f, h) : (c,x)\to (c^\prime,x^\prime)$ 
and $(f^\prime , h^\prime ) : (c^\prime,x^\prime)\to (c^{\prime\prime},x^{\prime\prime})$
is given by $(f^\prime , h^\prime )\, (f, h) := \big(f^\prime\,f, (f^\ast h^\prime)\, h\big) :(c,x) 
\to (c^{\prime\prime},x^{\prime\prime})$. The obvious projection functor
$\pi : \CC_F \to \CC$, given by $(c,x) \mapsto c$ and $(f,h)\mapsto f$,
defines a category fibered in groupoids whose fiber $\pi^{-1}(c)$ over $c\in \CC$ is
naturally isomorphic to the value $F(c)$ of the presheaf 
$F :\CC^\op\to \Grpd$ on $c$. When endowed with the
pullback orthogonality relation, we obtain an orthogonal category 
fibered in groupoids $\pi : \ovr{\CC}_F\to \ovr{\CC}$.
\begin{ex}\label{ex:strictmodel} 
Recall from Example \ref{ex:fiberedcat} 
the category fibered in groupoids $\pi : \mathbf{YM}_G\Loc \to \Loc$
that describes Yang-Mills fields over spacetimes. The particular model
we have given above is not strict because pullbacks
of principal $G$-bundles and connections along principal
bundle morphisms are only pseudo-functorial, i.e.\
the assignment of fibers $\Loc\ni M \mapsto \pi^{-1}(M)\in\Grpd$ 
is just a contravariant pseudo-functor. A weakly equivalent strict model
has been developed in \cite{BSSStack}. More precisely, this
paper constructs a functor $\mathsf{YM}_G : \Loc^\op \to \mathbf{St}$
to the category of stacks such that $\mathsf{YM}_G(M)$ is the moduli stack
of Yang-Mills fields on $M$. Because these Yang-Mills stacks are just $1$-stacks
(in contrast to $\infty$-stacks), we obtain a groupoid of Yang-Mills fields
on every $M\in\Loc$ by taking the global points of these stacks, i.e.\
$F(M) := \mathsf{YM}_G(M)(\ast) \in\Grpd$. This defines
a presheaf of groupoids $F : \Loc^\op\to \Grpd$ whose Grothendieck
construction is weakly equivalent (in the sense of \cite{Hollander}) to
the category fibered in groupoids $\pi : \mathbf{YM}_G\Loc \to \Loc$.
Let us mention that the basic reason why the construction in \cite{BSSStack} defines
a strict presheaf is that the groupoid of Yang-Mills fields $F(M) $ on any 
$M\in\Loc$ is described in terms of {\v C}ech data subordinate to
the canonical cover given by {\em all} diamond subsets $U\subseteq M$ of $M$.
\end{ex}

We shall now present our model for the homotopy
right Kan extension $\hoRan_\pi : \Ch(k)^{\CC_F} \to \Ch(k)^\CC$
along the projection functor $\pi : \CC_F\to \CC$.
\begin{propo}\label{propo:hoRan}
Let $F\in\PSh(\CC,\Grpd)$ be a presheaf of groupoids and $\AAA :\CC_F\to\Ch(k)$
a chain complex valued functor on the corresponding Grothendieck construction.
Then the following formula defines a model for the homotopy right Kan extension
$\hoRan_\pi \AAA: \CC\to\Ch(k)$ of $\AAA$ along the projection functor
$\pi : \CC_F\to\CC$. For all $c\in\CC$,
\begin{flalign}\label{eqn:hoRanend}
\hoRan_\pi\AAA(c) \,:=\, \int_{x\in F(c)} \big[ N_\ast \big(B\big(F(c) \downarrow x\big)\big) , \AAA(c,x)\big] \quad,
\end{flalign}
where $[-,-]: \Ch(k)^\op\times \Ch(k) \to \Ch(k)$ is the internal hom functor,
$B: \Cat\to \sSet$  is the nerve functor, $N_\ast :\sSet \to\Ch(k)$ is the normalized chain complex functor
(i.e.\ the composition of left adjoints in \eqref{eqn:normalizedchains}) and
$F(c) \downarrow x$ is the over category of $F(c)$ over $x\in F(c)$.
\end{propo}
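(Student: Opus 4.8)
The plan is to compute $\hoRan_\pi\AAA$ by the pointwise formula for homotopy right Kan extensions, then to reduce the resulting homotopy limit over a comma category to a homotopy limit over the fiber, and finally to evaluate the latter by the Bousfield--Kan cobar formula. First I would recall that, dually to the description of homotopy left Kan extensions as homotopy colimits over comma categories, the homotopy right Kan extension of $\AAA : \CC_F\to\Ch(k)$ along $\pi : \CC_F\to\CC$ is computed pointwise by a homotopy limit over the comma category, cf.\ \cite{hoKan2,hoKan3}: for each $c\in\CC$,
\begin{flalign}
\hoRan_\pi\AAA(c) \,\simeq\, \holim\Big( (c\downarrow\pi)\xrightarrow{\ \mathrm{pr}_c\ } \CC_F \xrightarrow{\ \AAA\ } \Ch(k)\Big)\quad,
\end{flalign}
where $\mathrm{pr}_c$ is the forgetful projection. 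Unwinding the Grothendieck construction, an object of $(c\downarrow\pi)$ is a triple $(c',x',u)$ with $c'\in\CC$, $x'\in F(c')$ and $u : c\to c'$ in $\CC$, and a morphism is a $\CC_F$-morphism over the identity of the chosen maps out of $c$.

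The key step is to replace $(c\downarrow\pi)$ by the fiber $F(c)$. I would introduce the fiber inclusion $\iota_c : F(c)\to (c\downarrow\pi)$, $x\mapsto (c,x,\id_c)$, together with the functor $r_c : (c\downarrow\pi)\to F(c)$, $(c',x',u)\mapsto u^\ast x'$, built from the pullback functors $u^\ast = F(u)$ of the presheaf $F$. A direct check gives $r_c\,\iota_c = \id_{F(c)}$, while the morphisms $(u,\id) : (c,u^\ast x',\id_c)\to (c',x',u)$ assemble into a natural transformation $\iota_c\,r_c\Rightarrow\id$ satisfying the triangle identities, so that $\iota_c$ admits $r_c$ as a right adjoint. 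A functor that admits a right adjoint is homotopy initial (the relevant comma categories $(\iota_c\downarrow j)\cong(F(c)\downarrow r_c(j))$ have a terminal object, hence contractible nerve), and the cofinality theorem for homotopy limits therefore yields a natural weak equivalence
\begin{flalign}
\holim_{(c\downarrow\pi)} (\AAA\circ\mathrm{pr}_c) \,\simeq\, \holim_{F(c)} \AAA(c,-)\quad,
\end{flalign}
where $\AAA(c,-) := \AAA\circ\mathrm{pr}_c\circ\iota_c : F(c)\to\Ch(k)$ is the restriction of $\AAA$ to the fiber over $c$.

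It then remains to evaluate the homotopy limit over the groupoid $F(c)$. In the $\Ch(k)$-enriched setting the homotopy limit of a small diagram is the weighted limit against the normalized chains of the nerves of the slice categories, i.e.\ the end
\begin{flalign}
\holim_{F(c)}\AAA(c,-) \,=\, \int_{x\in F(c)} \big[N_\ast\big(B(F(c)\downarrow x)\big),\AAA(c,x)\big]\quad,
\end{flalign}
which is exactly \eqref{eqn:hoRanend}. Here I would use that $x\mapsto N_\ast(B(F(c)\downarrow x))$ is the bar-type projectively cofibrant replacement of the constant weight $k$, so that the weighted limit against it computes the derived limit; and since $F(c)$ is a groupoid, each $B(F(c)\downarrow x)$ has the terminal object $\id_x$ and is therefore contractible, so the over-category occurring here carries the same homotopical content as the under-category appearing in the textbook statement of the formula. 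Finally I would verify that all three identifications are natural in $c\in\CC$, so that the end genuinely assembles into a functor $\CC\to\Ch(k)$ modelling $\hoRan_\pi\AAA$.

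The main obstacle I expect lies in this last point rather than in any single display: the adjunction $\iota_c\dashv r_c$ must be checked to be compatible with the pullback functors $f^\ast$ of $F$ as $c$ varies, so that the reduction to the fiber and hence the weak equivalence with the stated end are natural transformations of functors on $\CC$. The two remaining inputs --- the pointwise comma-category formula and the Bousfield--Kan description of homotopy limits --- are standard once the weight $N_\ast(B(F(c)\downarrow -))$ is recognized as a cofibrant resolution of the constant diagram.
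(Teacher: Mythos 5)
Your proposal is correct and follows the same three-step skeleton as the paper's proof: the pointwise comma-category formula for $\hoRan_\pi$ from \cite{hoKan2,hoKan3}, reduction of the homotopy limit over $c\downarrow\pi$ to one over the fiber $F(c)$, and evaluation of the latter by the Bousfield--Kan end. The one genuine difference is in the middle step: the paper simply cites the proof of \cite[Theorem 5.3]{BeniniSchenkel} for homotopy initiality of the fiber inclusion $x\mapsto (\id_c : c\to\pi(c,x))$, whereas you prove it directly by exhibiting the right adjoint $r_c : (c\downarrow\pi)\to F(c)$, $(c',x',u)\mapsto u^\ast x'$. This adjunction argument is valid precisely because $F$ is a \emph{strict} presheaf, so that $u^\ast f^\ast = (fu)^\ast$ holds on the nose (which is what makes $r_c$ a functor); it renders the proof self-contained and, notably, uses only the splitness of the fibration $\pi:\CC_F\to\CC$ rather than the groupoid property of the fibers, which is what the cited general (non-strictified) result relies on. For the final step, the paper derives the end formula \eqref{eqn:hoRanend} from the framing machinery of \cite[Chapter 19]{Hirschhorn} by an explicit adjunction computation with simplicial frames, while you invoke the weighted-limit description with the projectively cofibrant weight $N_\ast(B(F(c)\downarrow -))$ directly; these are the same computation packaged differently. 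One small correction: the Bousfield--Kan formula for homotopy limits already uses the over-categories $F(c)\downarrow x$ (cf.\ \cite[Definition 19.1.5]{Hirschhorn}), so no comparison with under-categories is needed, and the contractibility of $B(F(c)\downarrow x)$ via the terminal object $\id_x$ holds for any category, not just groupoids; your remark there is harmless but superfluous. Your closing concern about naturality in $c$ is legitimate but routine, and the paper's own proof is likewise carried out pointwise.
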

\begin{proof}
Recall from \cite{hoKan2, hoKan3} that the homotopy right Kan extension 
can be computed point-wise as a homotopy limit: For every $c\in\CC$,
the chain complex $\hoRan_\pi \AAA (c)\in\Ch(k)$ is the homotopy limit
of the diagram
\begin{flalign}
\xymatrix{
c \downarrow \pi \ar[r] & \CC_F \ar[r]^-{\AAA}& \Ch(k) 
}\quad,
\end{flalign}
where $c \downarrow \pi$ is the under category 
of $\pi : \CC_F\to \CC$ under $c\in\CC$.
In the proof of \cite[Theorem 5.3]{BeniniSchenkel}
it was shown that the functor $F(c) \to c \downarrow \pi$
defined by $x\mapsto \big(\id_c : c \to \pi(c,x)\big)$ is homotopy
initial, i.e.\ there exists a weakly equivalent but simpler model where
\begin{flalign}\label{eqn:holimtmp}
\hoRan_\pi \AAA (c) \,= \, \holim\Big(\AAA\vert_{\pi^{-1}(c)} : F(c)\to \Ch(k)\Big)
\end{flalign}
is the homotopy limit of the restriction of $\AAA :\CC_F\to\Ch(k)$ 
to the fiber $\pi^{-1}(c)\cong F(c)$.
\sk

We compute \eqref{eqn:holimtmp} by using the standard Bousfield-Kan formula, 
see e.g.\ \cite[Chapter 19]{Hirschhorn} for an excellent summary.
For this we endow the model category $\Ch(k)$ with the framing
determined by the functors $\Ch(k)\ni V\mapsto N_\ast(\Delta[-])\otimes V \in \Ch(k)^\Delta$
and $\Ch(k) \ni V\mapsto [N_\ast(\Delta[-]),V]\in \Ch(k)^{\Delta^\op}$.
Here $\Delta$ is the usual simplex category, hence $ \Ch(k)^\Delta$ is the category of
cosimplicial chain complexes and $\Ch(k)^{\Delta^\op}$ is that of simplicial chain complexes.
Moreover, $\Delta[-] : \Delta \to \sSet$ is the functor that
assigns to $[n]\in\Delta$ the simplicial $n$-simplex $\Delta[n]\in\sSet$.
The Bousfield-Kan formula in \cite[Definition 19.1.5]{Hirschhorn} can be 
written as the end
\begin{flalign}
 \holim\Big(\AAA\vert_{\pi^{-1}(c)} : F(c)\to \Ch(k)\Big) \,=\,  \int_{x\in F(c)} \widehat{\AAA}(c,x)^{B(F(c)\downarrow x)}\quad,
\end{flalign}
where $\widehat{\AAA}(c,x) :=  \big[N_\ast(\Delta[-]),\AAA(c,x)\big] \in \Ch(k)^{\Delta^\op}$ 
is the simplicial frame on $\AAA(c,x)\in\Ch(k)$. The expression $\widehat{\AAA}(c,x)^{B(F(c)\downarrow x)}$ under the
end is defined in \cite[Definition 16.3.1]{Hirschhorn} as the
limit of the diagram
\begin{flalign}
\xymatrix@C=3em{
\big( \Delta[-]\downarrow B\big( F(c)\downarrow x\big)\big)^\op \ar[r] & \Delta^\op \ar[rr]^-{[N_\ast(\Delta[-]),\AAA(c,x)]}&&
\Ch(k)
}\quad.
\end{flalign}
Introducing for notational simplicity the category 
$J := \Delta[-]\downarrow B\big( F(c)\downarrow x\big)$, we compute
\begin{flalign}
\nn \widehat{\AAA}(c,x)^{B(F(c)\downarrow x)} &\,=\, \mathrm{lim}_{J^\op}^{} \Big(\big[N_\ast(\Delta[-]),\AAA(c,x)\big]\Big)
\,\cong\, \big[N_\ast\big(\colim_J^{} \big(\Delta[-]\big)\big),\AAA(c,x)\big]\\
&\,\cong\, \big[ N_\ast \big(B\big(F(c) \downarrow x\big)\big) , \AAA(c,x)\big]\quad.
\end{flalign}
For the first isomorphism we used that both $[-,W] : \Ch(k)\to\Ch(k)^\op$, for any $W\in\Ch(k)$,
and $N_\ast:\sSet\to \Ch(k)$ are left adjoint functors and hence they preserve colimits. (The colimit
of a diagram $J \to \Ch(k)^\op$ is the limit of the opposite diagram $J^\op \to \Ch(k)$.) In the second step
we used that every simplicial set $X\in\sSet$ is a
colimit of simplicial $n$-simplices via 
\begin{flalign}
X\,\cong\, \colim\Big(\xymatrix{
\Delta[-]\downarrow X \ar[r]& \Delta \ar[r]^-{\Delta[-]}& \sSet
}\Big)\quad.
\end{flalign}
This completes our proof.
\end{proof}

\begin{rem}
We would like to emphasize that it is because of \eqref{eqn:holimtmp} 
that the homotopy right Kan extension $\hoRan_\pi \AAA(c)$
describes the homotopy invariants of a functor $\AAA : \CC_F\to \Ch(k)$
along the groupoid fiber $\pi^{-1}(c)\cong F(c)\in\Grpd$. Let us also stress that 
to obtain \eqref{eqn:holimtmp} one uses crucially that $\pi : \CC_F\to\CC$ 
is a category fibered in groupoids, see the proof of 
\cite[Theorem 5.3]{BeniniSchenkel} for the details.
\end{rem}

Now we can state the main theorem of this section.
In the proof below we use the fundamental result by Berger and Fresse 
\cite{BergerFresse} that the normalized chain complex of a simplicial set 
carries a canonical {\em co}action of the Barratt-Eccles operad $\E_\infty\in\Op_{1}(\Ch(k))$.
\begin{theo}\label{theo:hoRanQFT}
Let $\ovr{\CC}$ be an orthogonal category and $F \in \PSh(\CC,\Grpd)$
a presheaf of groupoids. Furthermore, let $\pi : \ovr{\CC}_F \to \ovr{\CC}$ be the
orthogonal category fibered in groupoids that is obtained by the Grothendieck construction applied to $F$.
Given any strict quantum field theory $\AAA\in \QFT(\ovr{\CC}_F)$ on 
the total category $\ovr{\CC}_F$, the family of chain complexes
$\hoRan_\pi\AAA(c)\in\Ch(k)$, for $c\in\CC$, from Proposition \ref{propo:hoRan}
carries canonically the structure of an $\O_{\ovr{\CC}}\otimes\E_\infty$-algebra.
In other words, $\hoRan_\pi\AAA\in \QFT_w(\ovr{\CC})$
is a homotopy quantum field theory on the base category $\ovr{\CC}$
corresponding to our resolution $w_{\ovr{\CC}} : \O_{\ovr{\CC}}\otimes\E_\infty \to \O_{\ovr{\CC}}$ 
from Theorem  \ref{theo:Einftyresolution}. 
\end{theo}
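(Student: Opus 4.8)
The plan is to exhibit the required structure as a morphism of colored operads $\rho \colon \O_{\ovr{\CC}}\otimes\E_\infty \to \End_{B}$ into the $\CC_0$-colored endomorphism operad of the family $B := \{\hoRan_\pi\AAA(c)\}_{c\in\CC_0}$ from Proposition \ref{propo:hoRan}; giving such a $\rho$ is the same datum as an $\O_{\ovr{\CC}}\otimes\E_\infty$-algebra structure on $B$. The two ingredients are: (i) the fundamental result of \cite{BergerFresse} that $N_\ast B$ carries a natural $\E_\infty$-\emph{coaction}, so that each $C_{c,x} := N_\ast(B(F(c)\downarrow x))$ is an $\E_\infty$-coalgebra, naturally in $(c,x)$ over the groupoid $F(c)$; and (ii) the fact that $\AAA$, being an object of $\QFT(\ovr{\CC}_F)\cong\Alg(\O_{\ovr{\CC}_F})$, supplies operadic operations indexed by the morphisms and permutations of $\ovr{\CC}_F$. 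Conceptually, $\rho$ is the colored, fibered analogue of the general convolution principle that $[\,C,A\,]$ is an $\O\otimes\E_\infty$-algebra whenever $C$ is an $\E_\infty$-coalgebra and $A$ is an $\O$-algebra, with the $\E_\infty(n)$-factors absorbed by the coaction on the nerves and the $\O_{\ovr{\CC}}$-factors absorbed through $\AAA$.

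Concretely, I would define $\rho$ on a generator $[\sigma,\und f]\otimes e \in (\O_{\ovr{\CC}}\otimes\E_\infty)\big(\substack{t\\\und c}\big)$, with $e\in\E_\infty(n)$ and $\und f=(f_1,\dots,f_n)$, $f_i\colon c_i\to t$, as follows. Fix $y\in F(t)$ and first apply the $e$-co-operation $C_{t,y}\to C_{t,y}^{\otimes n}$ coming from the Berger-Fresse coaction. Transport the $i$-th tensor factor along the functor $F(t)\downarrow y \to F(c_i)\downarrow f_i^\ast y$ induced by $f_i^\ast = F(f_i)$, i.e.\ apply $N_\ast B$ to obtain $C_{t,y}\to C_{c_i, f_i^\ast y}$, and feed the result into the $i$-th input evaluated at $f_i^\ast y$, landing in $\AAA(c_i, f_i^\ast y)$. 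Finally combine the $n$ resulting elements using the $\O_{\ovr{\CC}_F}$-operation of $\AAA$ associated with $[\sigma,(\tilde f_1,\dots,\tilde f_n)]$, where $\tilde f_i = (f_i,\id)\colon (c_i, f_i^\ast y)\to(t,y)$ are the canonical lifts of the $f_i$. This produces a chain map $C_{t,y}\to\AAA(t,y)$ depending on the inputs, i.e.\ (after checking naturality in $y$) an element of the end \eqref{eqn:hoRanend} defining $\hoRan_\pi\AAA(t)$.

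The verifications then fall into three groups. First, $\rho$ lands in the end: the over-category nerves $B(F(-)\downarrow-)$, the transport maps, the coaction of \cite{BergerFresse}, and the functor $\AAA$ are all natural, so the constructed maps satisfy the wedge conditions over $F(t)$ and assemble into a map into $\int_{y\in F(t)}$. Second, $\rho$ descends to the quotient $\sim_\perp$: if $f_i\perp_\CC f_{i+1}$ then, by definition of the pullback orthogonality on $\ovr{\CC}_F$, the lifts satisfy $\tilde f_i\perp_{\CC_F}\tilde f_{i+1}$, so the relation $[\sigma,(\tilde f_\bullet)]=[\sigma\tau_i,(\tilde f_\bullet)]$ already holds in $\O_{\ovr{\CC}_F}$ and is respected by the $\O_{\ovr{\CC}_F}$-algebra $\AAA$; hence $\rho([\sigma,\und f]\otimes e)=\rho([\sigma\tau_i,\und f]\otimes e)$. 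This is precisely the step at which the strict $\perp$-commutativity of the input theory is consumed. Third, $\rho$ is a morphism of operads: equivariance and unitality follow from the $\Sigma_n$-equivariance and unitality of the Berger-Fresse coaction together with the operad-algebra axioms \eqref{eqn:alg1}--\eqref{eqn:alg3} for $\AAA$, while compatibility with operadic composition follows by combining the co-associativity and co-composition of the $\E_\infty$-coaction, the functoriality of $F$, $B$ and $N_\ast$ (so that transports compose along $f'\circ f$), and the composition axiom \eqref{eqn:alg2} for $\AAA$.

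I expect the main obstacle to be this last compatibility with operadic composition, carried out simultaneously with the end/naturality bookkeeping: one must align the $\E_\infty$-co-operadic structure on the nerves (which splits a single $C_{t,y}$ into tensor factors) with the colored operadic structure of $\AAA$ on the Grothendieck construction (which reassembles them inside $\AAA(t,y)$), across the change of groupoid fibers induced by composing spacetime morphisms. The conceptual heart, which organizes all these checks, is the observation that the $\E_\infty$-coalgebra structure of \cite{BergerFresse} on $N_\ast B(F(c)\downarrow x)$ is exactly the data needed to upgrade the strict $\perp$-commutativity of $\AAA$ to a homotopy-coherent $\perp$-commutativity after passing to the homotopy invariants \eqref{eqn:holimtmp}.
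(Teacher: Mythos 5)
Your proposal is correct and follows essentially the same route as the paper's proof: both construct the action on each end component at $y\in F(t)$ by composing the Berger--Fresse $\E_\infty$-coaction on $N_\ast(B(F(t)\downarrow y))$, transport along the pullback functors $f_i^\ast$ of over categories, evaluation against the inputs, and the $\O_{\ovr{\CC}_F}$-action of the canonical lift $[\sigma,(\und f,\id_{\und f^\ast y})]$, with your endomorphism-operad packaging being an equivalent reformulation of the paper's adjunction bookkeeping. Your explicit check that orthogonal pairs lift to orthogonal pairs (so the construction descends along $\sim_\perp$) is exactly the point implicit in the paper's use of the canonical lift.
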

\begin{proof}
We recall from Section \ref{subsec:algebras} that an $\O_{\ovr{\CC}} \otimes \E_\infty$-action
on the family of chain complexes $\hoRan_\pi\AAA(c)\in\Ch(k)$, for $c\in\CC$,
is given by specifying for each $n\geq 0$ and $(\und{c},t)\in\CC_0^{n+1}$
a $\Ch(k)$-morphism
\begin{flalign}\label{eqn:alphacomplete}
\alpha \,:\, \O_{\ovr{\CC}}\big(\substack{t\\ \und{c}}\big) \otimes \E_{\infty}(n) \otimes \hoRan_\pi\AAA(\und{c}) \longrightarrow
\hoRan_\pi\AAA(t)\quad,
\end{flalign}
where $ \hoRan_\pi\AAA(\und{c}) = \bigotimes_{i=1}^n \hoRan_\pi\AAA(c_i)$, such that
the compatibility conditions \eqref{eqn:alg1}, \eqref{eqn:alg2} and \eqref{eqn:alg3} are satisfied.
Because the chain complex $\O_{\ovr{\CC}}\big(\substack{t\\ \und{c}}\big)$ is a free $k$-module
concentrated in degree $0$ with trivial differential (cf.\ \eqref{eqn:AQFToperadoperations}), 
we may equivalently specify $\Ch(k)$-morphisms
\begin{flalign}\label{eqn:alphasigmaf}
\alpha_{[\sigma,\und{f}]}^{} \,:\, \E_\infty(n) \otimes  \hoRan_\pi\AAA(\und{c})  \longrightarrow  \hoRan_\pi\AAA(t) \quad,
\end{flalign}
for all generators $[\sigma,\und{f}]\in \O_{\ovr{\CC}}\big(\substack{t\\ \und{c}}\big)$.
\sk

Using that the homotopy right Kan extension is computed by an end 
\eqref{eqn:hoRanend}, we will define \eqref{eqn:alphasigmaf} component-wise,
for all $y\in F(t)$ in the groupoid associated to the target color $t\in\CC$. To simplify our notation,
we abbreviate in what follows the normalized chain complexes 
in \eqref{eqn:hoRanend} by
\begin{flalign}\label{eqn:Nastshort}
N_\ast(x) \,:=\, N_\ast \big(B\big(F(c) \downarrow x\big)\big)\in\Ch(k)\quad,
\end{flalign}
for all  $x\in F(c)$. We will define the $y\in F(t)$ 
component of \eqref{eqn:alphasigmaf} by a $\Ch(k)$-morphism
\begin{flalign}\label{eqn:alphasigmafy}
\alpha_{[\sigma,\und{f}],y}^{} \,:\, \E_\infty(n)\otimes \bigotimes_{i=1}^n \big[N_\ast( f_i^\ast y),\AAA(c_i,f_i^\ast y)\big]
\longrightarrow \big[N_\ast(y),\AAA(t,y)\big]\quad,
\end{flalign}
where in the source we projected down from the ends to the $f_i^\ast y\in F(c_i)$ components
corresponding to the given $y\in F(t)$ and the given family of $\CC$-morphisms
$\und{f} = (f_1 :c_1\to t ,\dots , f_n : c_n\to t)$ that can be extracted from the generator $[\sigma,\und{f}]\in \O_{\ovr{\CC}}\big(\substack{t\\ \und{c}}\big)$ 
without ambiguity. Using further the 
adjunction $(-)\otimes V \dashv [V,-]$, for any $V\in\Ch(k)$, we may equivalently
consider the adjunct of the morphism \eqref{eqn:alphasigmafy}. 
Using also the symmetric braiding on $\Ch(k)$ to rearrange the tensor factors, 
we observe that defining \eqref{eqn:alphasigmafy} 
is equivalent to defining a $\Ch(k)$-morphism
\begin{flalign}
\widetilde{\alpha}_{[\sigma,\und{f}],y}^{} \,:\, 
\bigotimes_{i=1}^n \big[N_\ast(f_i^\ast y),\AAA(c_i,f_i^\ast y)\big]\otimes N_\ast(y)\otimes \E_\infty(n)
\longrightarrow \AAA(t,y)\quad.
\end{flalign}
Using that the normalized chain complex on a simplicial set carries a canonical {\em co}action of the 
$\E_\infty$-operad (cf.\ \cite[Theorem 2.1.1]{BergerFresse}), we define the
latter morphism by the composition of the following sequence of $\Ch(k)$-morphisms
\begin{flalign}\label{eqn:alphasigmafyexplicit}
\xymatrix{
\bigotimes\limits_{i=1}^n \big[N_\ast(f_i^\ast y),\AAA(c_i,f_i^\ast y)\big]\otimes N_\ast(y)\otimes \E_\infty(n)
\ar[d]^-{~\id\otimes \text{$\E_\infty$-coaction}}\\
\bigotimes\limits_{i=1}^n \big[N_\ast( f_i^\ast y),\AAA(c_i,f_i^\ast y)\big]\otimes N_\ast(y)^{\otimes n}
\ar[d]^-{~\id \otimes \Motimes_{i=1}^n N_\ast(f_i^\ast)} \\
\bigotimes\limits_{i=1}^n \big[N_\ast( f_i^\ast y),\AAA(c_i,f_i^\ast y)\big]\otimes 
\bigotimes\limits_{i=1}^n  N_\ast(f_i^\ast y)
\ar[d]^-{~\text{permute}}\\
\bigotimes\limits_{i=1}^n \Big(\big[N_\ast( f_i^\ast y),\AAA(c_i,f_i^\ast y)\big] \otimes N_\ast(f_i^\ast y) \Big)
\ar[d]^-{~\Motimes_{i=1}^n \ev}\\
\bigotimes\limits_{i=1}^n  \AAA(c_i,f_i^\ast y)
\ar[d]^-{~\text{$\O_{\ovr{\CC}_F}$-action of $[\sigma,(\und{f},\id_{\und{f}^\ast y})]$} }\\
\AAA(t,y)
}
\end{flalign}
Let us explain these steps in more detail: Step 1 uses the $\E_\infty$-coaction on normalized chains from
\cite[Theorem 2.1.1]{BergerFresse}. Recalling our notation in \eqref{eqn:Nastshort}, step 2 
is induced by the pullbacks $f_i^\ast : F(t)\downarrow y \to F(c_i)\downarrow f_i^\ast y $ of over categories
along the given $\CC$-morphisms $f_i : c_i\to t$. Step 3 is just a permutation of tensor factors
via the symmetric braiding. In step 4 we used the standard evaluation morphism $\ev : [V,W]\otimes V\to W$
for internal hom objects, which is the adjunct under $(-)\otimes V \dashv [V,-]$ of the identity $\id : [V,W]\to [V,W]$.
Finally, in step 5 we constructed the canonical lift
$[\sigma,(\und{f},\id_{\und{f}^\ast}y)] \in \O_{\ovr{\CC}_F}\big(\substack{ (t,y)\\ (\und{c},\und{f}^\ast y) }\big)$
of our given element $[\sigma,\und{f}] \in \O_{\ovr{\CC}}\big(\substack{t \\ \und{c}}\big)$ and used
the fact that $\AAA\in\QFT(\ovr{\CC}_F)\cong \Alg(\O_{\ovr{\CC}_F})$ is by hypothesis an $\O_{\ovr{\CC}_F}$-algebra.
\sk

Taking the adjunct of \eqref{eqn:alphasigmafyexplicit} defines \eqref{eqn:alphasigmafy}.
By a direct calculation one confirms that these components are compatible with the ends,
i.e.\ they define the morphism in \eqref{eqn:alphasigmaf} and consequently also 
\eqref{eqn:alphacomplete}. Confirming that the resulting morphisms
satisfy the compatibility conditions \eqref{eqn:alg1}, \eqref{eqn:alg2} and \eqref{eqn:alg3} 
is a rather lengthy but straightforward computation.
\end{proof}

\begin{rem}\label{rem:groupoidcoho}
The homotopy quantum field theory $\hoRan_\pi\AAA \in \QFT_w(\ovr{\CC})$ 
from Theorem \ref{theo:hoRanQFT} admits an interpretation in terms
of fiber-wise normalized cochain algebras on the category fibered in groupoids $\pi :\CC_F\to\CC$
with coefficients in a strict quantum field theory $\AAA \in \QFT(\ovr{\CC}_F)$ on the total category.
(In other words, this is the fiber-wise groupoid cohomology of $\pi :\CC_F\to\CC$ 
with coefficients in $\AAA \in \QFT(\ovr{\CC}_F)$.)
This can be understood by recalling that the chain complex $\hoRan_\pi\AAA(c)\in \Ch(k)$
assigned to an object $c\in \CC$ is given by the end formula \eqref{eqn:hoRanend},
where $F(c) \cong \pi^{-1}(c)\in \Grpd$ is the corresponding groupoid fiber. (Compare
this to the normalized cochains on a simplicial set with coefficients in $k$ given by 
\eqref{eqn:cochainssSetformula}.)
Similarly to ordinary groupoid cohomology \cite{Crainic}, these chain complexes
may have an interesting homology, even for the case where the input quantum
field theory $\AAA \in \QFT(\ovr{\CC}_F)$ is concentrated in degree $0$ and has a trivial
differential. Physical examples of the latter scenario have been discussed in \cite{BeniniSchenkel}
and they include e.g.\ Dirac fields on the groupoid of {\em all} possible spin structures over a spacetime.
\end{rem}


\section*{Acknowledgments}
We thank the anonymous referee for useful comments that helped us to improve this manuscript.
We also would like to thank Adrien Brochier, Klaus Fredenhagen, Ya{\"e}l Fr{\'e}gier, Owen Gwilliam, Dmitri Pavlov,
Birgit Richter, Claudia Scheimbauer, Urs Schreiber and Christoph Schweigert
for useful discussions and comments related to this work.
The work of M.B.\ is supported by a research grant funded by 
the Deutsche Forschungsgemeinschaft (DFG, Germany). 
A.S.\ gratefully acknowledges the financial support of 
the Royal Society (UK) through a Royal Society University 
Research Fellowship, a Research Grant and an Enhancement Award. 
L.W.\ is supported by the RTG 1670 ``Mathematics inspired 
by String Theory and Quantum Field Theory''.

\appendix

\section{\label{app:extension}Derived local-to-global extension for toy-models}
The goal of this appendix is to provide concrete evidence
that {\em derived} local-to-global extension functors
as in Examples \ref{ex:derivedFredenhagen} and \ref{ex:derivedBRSTBV}
are crucial for gauge theory.
In order to simplify our computations, we shall focus as in Remark \ref{rem:hAQFTlimitingcases}
and Section \ref{subsec:stacks} on very simple toy-models given by $\E_\infty$-algebra
valued functors. Because our concrete examples studied below are topological field
theories, in the sense that they are insensitive to the spacetime metric, we shall 
consider instead of the spacetime category $\Loc$ the category $\Man$ of oriented $m$-dimensional
manifolds of finite type with orientation preserving open embeddings as morphisms.
Hence, the relevant category of quantum field theories for this appendix
is given by $\QFT_{w}(\ovr{\Man}^{\max}) \cong \Alg(\E_\infty)^\Man$, where we
endowed $\Man$ with the maximal orthogonality relation from Remark \ref{rem:hAQFTlimitingcases}.
Instead of $\ovr{\Locc}$, we consider the full orthogonal subcategory
$\ovr{\Disc}^{\max}\subseteq \ovr{\Man}^{\max}$  of all manifolds diffeomorphic to $\bbR^m$.
The corresponding orthogonal embedding functor $j : \ovr{\Disc}^{\max} \to \ovr{\Man}^{\max}$
then defines a Quillen adjunction
\begin{flalign}\label{eqn:j!appendix}
\xymatrix{
j_! \,:\, \QFT_w(\ovr{\Disc}^{\max}) ~\ar@<0.5ex>[r]&\ar@<0.5ex>[l]  ~\QFT_w(\ovr{\Man}^{\max}) \,:\, j^\ast
}\quad,
\end{flalign}
whose left adjoint is an extension functor to all manifolds 
for theories $\BBB \in \QFT_w(\ovr{\Disc}^{\max})\cong\Alg(\E_\infty)^\Disc$ 
that are only defined on disks.
\sk

The main technical tool that we use to compute the derived local-to-global extension
of our examples below is the simplicial set tensoring for $\E_\infty$-algebras
\cite{fressecotripel, gtz}. This is a functor
$\otimes :\sSet \times \Alg(\E_\infty)\to \Alg(\E_\infty)$ that assigns to
a simplicial set $K\in \sSet$ and an $\E_\infty$-algebra $A\in \Alg(\E_\infty)$
a new $\E_\infty$-algebra $K\otimes A \in \Alg(\E_\infty)$.
The corresponding derived tensoring is given by 
\begin{flalign}
K  \stackrel{\bbL}{\otimes}  A \,:=\,  K\otimes Q(A)\quad,
\end{flalign}
where $Q(A)$ is a cofibrant replacement of $A$ in the model category 
$\Alg(\E_\infty)$. The latter is also referred to as the derived higher 
Hochschild chains on $K$ with coefficients in $A$, cf.\ \cite{gtz}. 
\sk

The following technical theorem will be the key ingredient for
the concrete computations of derived local-to-global extensions in the examples below.
\begin{theo}\label{theo:hoLan}
Let $\AAA\in \QFT_w(\ovr{\Disc}^{\max})$ be naturally weakly equivalent to a constant 
functor $\Disc\to\Alg(\E_\infty)$ whose value we denote by $A\in \Alg(\E_\infty)$.
Then the derived local-to-global extension 
$\bbL j_!(\AAA) \in \QFT_w(\ovr{\Man}^{\max})$
may be computed object-wise for $M\in\Man$ by
\begin{flalign}\label{eqn:exttensor}
\big(\mathbb{L}j_! (\AAA)\big)(M) \,=\, \Sing(M) \stackrel{\mathbb{L}}{\otimes} A \quad,
\end{flalign}
where $\Sing(M)\in\sSet$ denotes the simplicial set of singular simplices in $M$. 
\end{theo}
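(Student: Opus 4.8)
The plan is to reduce the computation to the constant functor, identify the derived extension with a homotopy colimit over a comma category of embedded disks, and finally recognise that comma category as a combinatorial model for the homotopy type of $M$. First I would record that, since the orthogonality relations are maximal, Remark~\ref{rem:hAQFTlimitingcases} gives $\QFT_w(\ovr{\Man}^{\max})\cong \Alg(\E_\infty)^{\Man}$ and $\QFT_w(\ovr{\Disc}^{\max})\cong \Alg(\E_\infty)^{\Disc}$, under which $j^\ast$ is restriction along $j:\Disc\to\Man$ and its left adjoint $j_!$ is the enriched left Kan extension; hence $\bbL j_!$ is the homotopy left Kan extension $\mathrm{hoLan}_j$. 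Because $\bbL j_!$ preserves weak equivalences (it is $j_!\,Q$, cf.\ \cite[Lemma~1.1.12]{Hovey}) and $\AAA$ is naturally weakly equivalent to the constant functor $\mathrm{const}_A:\Disc\to\Alg(\E_\infty)$, and because weak equivalences in these functor categories are detected object-wise, it suffices to compute $\big(\bbL j_!(\mathrm{const}_A)\big)(M)$ up to weak equivalence, for every $M\in\Man$.

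Second, I would invoke the point-wise formula for the homotopy left Kan extension as a homotopy colimit over a comma category,
\begin{flalign}
\big(\bbL j_!(\mathrm{const}_A)\big)(M)\,\simeq\,\hocolim_{(D,\phi)\in\, j\downarrow M}\!A\quad,
\end{flalign}
where $j\downarrow M$ is the category whose objects are open embeddings $\phi:D\to M$ of Euclidean disks $D\cong\bbR^m$ and whose morphisms $(D,\phi)\to(D',\phi')$ are embeddings $g:D\to D'$ with $\phi'\,g=\phi$. The homotopy colimit of a constant diagram indexed by a small category $I$ is the derived tensoring of its value by the nerve $B(I)$, so that
\begin{flalign}
\hocolim_{j\downarrow M}A\,\simeq\,B(j\downarrow M)\stackrel{\bbL}{\otimes}A\quad,
\end{flalign}
using the Bousfield--Kan description of homotopy colimits together with the simplicial tensoring on $\Alg(\E_\infty)$ from \cite{fressecotripel,gtz}.

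Third, the geometric heart of the argument is the identification $B(j\downarrow M)\simeq\Sing(M)$ in $\sSet$. I would prove this by passing to the diagram of opens $(D,\phi)\mapsto\phi(D)\subseteq M$: each image is contractible, so the canonical map $\hocolim_{j\downarrow M}\phi(D)\to M$ has source weakly equivalent to $B(j\downarrow M)$, and one shows this map is a weak equivalence because the embedded disks form a basis of $M$ by contractible opens and the category $j\downarrow M$ records their overlaps coherently. This is a homotopy-colimit refinement of the nerve/good-cover theorem, and verifying it carefully --- in particular checking cofinality and coverage so that every compact family of singular simplices is eventually captured by some disk, with the coherence supplied by the morphisms of $j\downarrow M$ --- is the main obstacle. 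Since both $B(j\downarrow M)$ and $\Sing(M)$ then model the weak homotopy type of $M$, they are weakly equivalent.

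Finally, because $\otimes:\sSet\times\Alg(\E_\infty)\to\Alg(\E_\infty)$ is a left Quillen bifunctor, the derived tensoring $(-)\stackrel{\bbL}{\otimes}A$ carries weak equivalences of simplicial sets to weak equivalences of $\E_\infty$-algebras; applying this to $B(j\downarrow M)\simeq\Sing(M)$ and combining with the previous two steps yields
\begin{flalign}
\big(\bbL j_!(\AAA)\big)(M)\,\simeq\,\big(\bbL j_!(\mathrm{const}_A)\big)(M)\,\simeq\,\Sing(M)\stackrel{\bbL}{\otimes}A\quad,
\end{flalign}
which is the asserted identity.
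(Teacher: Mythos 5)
Your skeleton coincides with the paper's proof: reduce to the constant functor using that $\bbL j_!$ preserves weak equivalences, compute the homotopy left Kan extension object-wise as a homotopy colimit over the comma category $\Disc\downarrow M$ (cf.\ \cite[Proposition~1.14]{hoKan2}), and identify the homotopy colimit of a constant diagram with the derived tensoring by the nerve, $B(\Disc\downarrow M)\stackrel{\bbL}{\otimes}A$, via \cite[Chapter 19]{Hirschhorn}. Up to this point the proposal is correct and matches the paper step by step.

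The genuine gap is in what you yourself flag as ``the main obstacle'': the identification $B(\Disc\downarrow M)\simeq \Sing(M)$. You assert that the canonical map $\hocolim_{(D,\phi)\in\Disc\downarrow M}\Sing(\phi(D))\to \Sing(M)$ is a weak equivalence because the disks form a basis of contractible opens whose overlaps are recorded coherently, and you propose to verify this by ``checking cofinality and coverage so that every compact family of singular simplices is eventually captured by some disk.'' That verification cannot work as stated: a singular simplex of $M$ need not factor through any embedded disk at all (e.g.\ a simplex whose image is a whole compact manifold $M$), so there is no cofinality argument that places simplices inside single disks. The correct criterion is local rather than compact-global, and it is exactly the hypothesis of Lurie's Seifert--van Kampen theorem \cite[A.3.1]{HA}, which the paper applies to the functor $\chi:\Disc\downarrow M\to\Open_M$, $(f:U\to M)\mapsto f(U)$: one must show that for \emph{every point} $x\in M$ the full subcategory $(\Disc\downarrow M)_x$ of embeddings whose image contains $x$ has weakly contractible nerve. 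The paper checks this by observing that $(\Disc\downarrow M)_x$ is cofiltered --- any two disk images containing $x$ contain a smaller embedded disk around $x$, and parallel morphisms in $\Disc\downarrow M$ automatically coincide because the structure maps to $M$ are injective --- so its nerve is contractible by \cite[Section 1]{Quillen}. Seifert--van Kampen then gives $\hocolim(\Sing\circ\chi)\simeq\Sing(M)$, and since each $\chi(f)$ is a contractible space, $\Sing\circ\chi$ is naturally weakly equivalent to the constant functor $\ast$, whence $\hocolim(\Sing\circ\chi)\simeq B(\Disc\downarrow M)$. Without this theorem (or an equivalent descent statement) and the verification of its pointwise hypothesis, your third step remains an unproven assertion, and it is precisely the nontrivial content of the result.
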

\begin{proof}
Because of $\QFT_w(\ovr{\Man}^{\max}) \cong \Alg(\E_{\infty})^\Man$ 
and $\QFT_w(\ovr{\Disc}^{\max}) \cong \Alg(\E_{\infty})^\Disc$,
the left adjoint $j_!(\AAA)$ in \eqref{eqn:j!appendix} is given
by left Kan extension of $\E_\infty$-algebra valued functors along
the embedding functor $j: \Disc\to \Man$. The left derived functor $\bbL j_!(\AAA)$
is therefore a homotopy left Kan extension, which by \cite[Proposition~1.14]{hoKan2}
can be computed object-wise for $M\in\Man$ as a homotopy colimit of the diagram
$\Disc \downarrow M \to \Disc \xrightarrow{\AAA} \Alg(\E_\infty)$. 
Using that by hypothesis $\AAA$ is naturally weakly equivalent to a constant functor
with value $A\in\Alg(\E_\infty)$, it follows from \cite[Chapter 19]{Hirschhorn}
that this homotopy colimit is given by the derived tensoring
\begin{flalign}
\big(\bbL j_!(\AAA)\big)(M) \,\simeq\, B(\Disc \downarrow M) \stackrel{\mathbb{L}}{\otimes} A\quad,
\end{flalign}
where $B:\Cat\to \sSet$ is the nerve functor. 
\sk

To conclude the proof, it is sufficient to show that $B(\Disc \downarrow M) \simeq \Sing(M)$
are weakly equivalent in $\sSet$ because the derived tensoring preserves (by construction) weak equivalences.
Let us denote by $\Open_M$ the category of open subsets of $M$
and introduce the functor 
\begin{flalign}
\chi\,:\, \Disc\downarrow M \, \longrightarrow \, \Open_M
\end{flalign}
that sends an open embedding $(f : U \to M) \in \Disc \downarrow M$ 
of a disk $U$ into $M$ to its image $f(U)\in \Open_M$. For each point $x\in M$, 
consider the full subcategory $(\Disc \downarrow M)_x \subseteq \Disc \downarrow M$ 
of objects whose images under $\chi$ contain the point $x$. 
One easily observes that $(\Disc \downarrow M)_x$ is cofiltered, i.e.\ for 
any two open subsets $U_1, U_2 \subseteq M$ containing $x$, there exists 
a smaller open subset $U \subseteq U_1\cap U_2\subseteq M$ that is diffeomorphic to $\bbR^m$ and 
contains $x$. By \cite[Section 1]{Quillen}, this implies that the nerve of $(\Disc \downarrow M)_x$
is a contractible simplicial set. With this observation we can apply Lurie's 
Seifert-van~Kampen~theorem \cite[A.3.1]{HA} to the functor $\chi$ 
and conclude that there is a weak equivalence (in $\sSet$)
\begin{flalign}
\hocolim\big(\Sing \circ \chi \big) \,\simeq\, \Sing(M)\quad,
\end{flalign}
where $\chi$ is regarded here as a functor with values in topological spaces 
by equipping each open subset of $M$ with the induced topology. 
Since $\chi(f: U \to M)$ is a contractible topological space, 
for each $(f: U \to M) \in \Disc \downarrow M$, it follows that 
$\Sing \circ \chi: \Disc \downarrow M \to \sSet$ is naturally weakly 
equivalent to the constant functor $\ast: \Disc \downarrow M \to \sSet$ 
whose value is the terminal object $\ast \in \sSet$.
Summing up, we obtained the following chain of weak equivalences (in $\sSet$)
\begin{flalign}
\Sing(M) \,\simeq\, \hocolim\big(\Sing \circ \chi\big) 
\, \simeq \, \hocolim\big(\ast: \Disc \downarrow M \to \sSet\big)
\,\simeq\, B(\Disc \downarrow M) \quad.
\end{flalign}
This concludes the proof.
\end{proof}

We shall now discuss two explicit toy-models that are inspired by
topological field theory, namely simplified versions of Dijkgraaf-Witten theory 
and Chern-Simons theory. The aim of our first example is
to show that the derived local-to-global extension $\bbL j_!$ captures the
desired global gauge theory observables on topologically non-trivial manifolds,
while the underived extension functor fails to do so. Hence, the following example provides a
concrete argument why the derived functor is preferable also from a physical perspective.
\begin{ex}
Inspired by Dijkgraaf-Witten theory, we consider a field theory
whose fields on a manifold $M$ are described by the groupoid $\PBun_G(M)$
of principal $G$-bundles on $M$, where $G$ is a finite nilpotent group (e.g.\ Abelian). 
Using \cite[Lemma~2.8]{SW} (and its proof), we can describe (up to weak equivalence in $\sSet$)
the nerve $B\PBun_G(M)$ of this groupoid by the simplicial mapping space 
$BG^{\Sing(M)}$, where $BG$ is the nerve of the groupoid associated to the group $G$. 
In complete analogy to Corollary \ref{cor:stackQFTs}, this defines
a theory $\AAA \in \QFT_w(\ovr{\Man}^{\max})\cong \Alg(\E_\infty)^{\Man}$
that assigns to each $M\in\Man$ the normalized cochain algebra
\begin{flalign}\label{eqn:DW}
\AAA(M)\,:=\, N^\ast\big(BG^{\Sing(M)}\big) \in\Alg(\E_\infty)
\end{flalign}
on the groupoid of fields.
\sk

Our goal is to show that the {\em derived} local-to-global
extension $\bbL j_!\, j^\ast(\AAA)$ of the restriction
$j^\ast(\AAA)\in \QFT_w(\ovr{\Disc}^{\max})\cong \Alg(\E_\infty)^{\Disc}$
to disks is weakly equivalent to the original theory $\AAA$.
For this we observe that $j^\ast(\AAA)$ is naturally weakly equivalent 
to the constant functor $N^\ast(BG) : \Disc \to \Alg(\E_\infty)$
that assigns the normalized cochain algebra on $BG$. 
(This is because $\Sing(U)\to \ast$ is a weak equivalence in $\sSet$ 
as the underlying topological space of every $U\in\Disc$ is contractible.)
Hence, the hypotheses of Theorem \ref{theo:hoLan} are fulfilled
and we obtain a chain of weak equivalences (in $\Alg(\E_\infty)$)
\begin{flalign}
\big(\bbL j_! \,j^\ast(\AAA)\big)(M) \,\simeq\, \Sing(M) \stackrel{\bbL}{\otimes} N^\ast(BG) \,\simeq\, N^\ast(BG^{\Sing(M)})\quad,
\end{flalign}
where the last step was proven in \cite[Proposition~5.3]{AyalaFrancis}
and it uses the assumption that $G$ is nilpotent. We conclude that $\mathbb{L}j_!\, j^\ast(\AAA) \simeq \AAA$. 
\sk

Let us discuss these results from a more physical perspective.
For a manifold $M\in\Man$, the $\E_\infty$-algebra $\AAA(M)$ 
is a ``higher function algebra'' on the groupoid $\PBun_G(M)$ of principal
$G$-bundles on $M$. Its 0th homology
$H_0(\AAA(M)) = \mathrm{Map}\big(\pi_0\PBun_G(M) , k\big)$ 
is the ordinary algebra of functions on the set $\pi_0\PBun_G(M)$ of  isomorphism
classes of such bundles. This means that the $\E_\infty$-algebra $\AAA(M)$ 
describes observables that in particular can be used to distinguish 
different principal $G$-bundles on $M$, which is of course what 
Dijkgraaf-Witten theory is about. At first sight, it seems that the
restriction $j^\ast(\AAA)$ to disks discards such bundle observables
because $\pi_0\PBun_G(U) \cong\ast$ is the point and hence the homology
$H_0(\AAA(U))\cong k$ is the trivial algebra, for all $U\in\Disc$.
This is however not true: as we have proven above, the {\em derived}
local-to-global extension $\bbL j_!\, j^\ast(\AAA)$ recovers the bundle
observables from the higher algebraic structure of $j^\ast(\AAA)$,
or equivalently of the constant functor $N^\ast(BG) : \Disc \to \Alg(\E_\infty)$.
In stark contrast to this, the {\em underived} local-to-global extension 
is {\em not} capable to recover such bundle observables because it 
extends our constant functor to the functor $j_!\, j^\ast(\AAA) : 
\Man\to  \Alg(\E_\infty)$ that assigns to each $M \in \Man$ 
the coproduct $\bigsqcup_{\pi_0(M)} N^\ast(BG)$ of $\E_{\infty}$-algebras, 
where $\pi_0(M)$ denotes the set of connected components of $M$. 
Therefore, for $M$ connected, the 0th homology of 
$j_!\, j^\ast(\AAA)(M)$ is the trivial algebra $k$.
\end{ex}

In our second example we prove that the {\em underived}  
local-to-global extension $j_!$ is not compatible with weak equivalences.
More precisely, we show that, even when it happens that $j_!$
produces the desired result for a certain choice of input, changing
to a weakly equivalent input results in a non-equivalent outcome.
This issue is solved (by construction) by the {\em derived}
extension $\bbL j_!$.
\begin{ex}
Inspired by Chern-Simons theory with structure group $\bbR$,
we consider a field theory whose fields on a $2$-dimensional 
manifold $M$ are described by the chain complex of principal $\bbR$-bundles
with flat connections on $M$. (Throughout the whole 
example $\Man$ will be the category of $2$-dimensional
oriented manifolds.) The chain complex of linear
observables on $M\in \Man$ for this theory is concretely defined 
by the $(-1)$-shifted compactly-supported de Rham complex, i.e.\
\begin{flalign}
\Omega_{c}^\ast(M)[-1]\, :=\, \bigg(   
\xymatrix@C=1.5em{
\dots & \ar[l] 0  & \ar[l] \stackrel{(-1)}{\Omega_c^2(M) } & \ar[l]\stackrel{(0)}{\Omega_c^1(M)}
& \ar[l] \stackrel{(1)}{\Omega_c^0 (M)} & \ar[l] 0 & \ar[l] \cdots
}\bigg)\quad,
\end{flalign}
where we indicated in round brackets our homological degree conventions.
We define a theory $\AAA \in \QFT_w({\ovr{\Man}}^{\max}) \cong \Alg(\E_{\infty})^\Man$
by assigning to each $M\in\Man$ the free $\E_{\infty}$-algebra
\begin{flalign}
\AAA(M) \,:= \, \bbE_{\infty}\big(\Omega_{c}^\ast(M)[-1]\big) \in\Alg(\E_\infty)
\end{flalign}
over $\Omega_{c}^\ast(M)[-1]$. (This is a homotopy-coherent 
analogue of the strictly commutative symmetric algebra over a chain complex.)
\sk

As in the previous example, our first goal is to show that the {\em derived} local-to-global
extension $\bbL j_!\, j^\ast(\AAA)$ of the restriction
$j^\ast(\AAA)\in \QFT_w(\ovr{\Disc}^{\max})\cong \Alg(\E_\infty)^{\Disc}$
to disks is weakly equivalent to the original theory $\AAA$.
For this we observe that $j^\ast(\AAA)$ is naturally weakly equivalent
to the constant functor $\bbE_{\infty}(\mathbb{R}[1]) : \Disc \to \Alg(\E_\infty)$ 
that assigns the free $\E_\infty$-algebra over the $1$-shifted monoidal unit $\bbR[1]$. 
(This is because the integration map $\int_U : \Omega_{c}^\ast(M)[-1]\to \bbR[1]$ is a 
weak equivalence between cofibrant chain complexes for every $U\in\Disc$.)
Theorem \ref{theo:hoLan} then implies
\begin{flalign}
\big(\bbL j_!\,j^\ast(\AAA)\big)(M) \,\simeq\,  
\Sing(M) \stackrel{\mathbb{L}}{\otimes} \bbE_{\infty}(\mathbb{R}[1])\quad.
\end{flalign}
Because $\bbE_{\infty}(\bbR[1]) \in \Alg(\E_{\infty})$ is a free $\E_{\infty}$-algebra, 
the derived tensoring is weakly equivalent to the underived one. Using the explicit
formula of \cite{fressecotripel}, a direct computation then shows that
\begin{flalign}
\big(\bbL j_!\,j^\ast(\AAA)\big)(M) \,\simeq \,\Sing(M) \otimes \bbE_\infty(\mathbb{R}[1]) 
\,=\, \bbE_{\infty}\big(N_\ast(\Sing(M)) \otimes \mathbb{R}[1]\big) \quad .
\end{flalign} 
One concludes that $\mathbb{L}j_!\, j^\ast(\AAA) \simeq \AAA$
because the $1$-shifted $\bbR$-valued normalized chain complex
$N_\ast(\Sing(M)) \otimes \mathbb{R}[1]$ is naturally weakly equivalent to 
$\Omega_{c}^\ast(M)[-1]$, which is a consequence of de Rham's theorem.
\sk

Let us now compute the {\em underived}
local-to-global extension $j_!\, j^\ast (\AAA)$ of the restriction $j^\ast(\AAA)$,
which is given by left Kan extension along $j : \Disc\to\Man$.
Because $j_!$ can be computed object-wise as a colimit
and the left adjoint functor $\bbE_\infty$ preserves colimits,
one finds that
\begin{flalign}\label{eqn:tmpunderived1}
\big(j_!\,j^\ast(\AAA)\big)(M) \,\cong\, 
\bbE_\infty \Big( \colim \Big( \Disc \downarrow M 
\to \Disc \xrightarrow{\Omega^\ast_c[-1]} \Ch(\bbR) \Big) \Big) 
\,\cong\, \bbE_\infty\big(\Omega^\ast_c[-1](M)\big)\quad,
\end{flalign}
where in the last step one exploits a partition of unity
argument on $M$ to explicitly compute the colimit, cf.\ \cite[Lemma 4.4.1]{Lang}. 
Note that in this specific situation the underived extension $j_!\,j^\ast(\AAA)$
is naturally isomorphic to the original theory $\AAA$, which is however accidental as this result 
is not stable under weak equivalence. In particular, if we replace the restriction 
$j^\ast (\AAA)$ by the naturally weakly equivalent constant functor 
$\bbE_{\infty}(\mathbb{R}[1]): \Disc\to \Alg(\E_\infty)$, then another
direct calculation shows that
\begin{flalign}\label{eqn:tmpunderived2}
\big(j_!\big(\bbE_{\infty}(\mathbb{R}[1])\big)\big)(M)\,\cong\,\bbE_\infty \Big( \colim \Big( \Disc \downarrow M 
\to \Disc \xrightarrow{\bbE_\infty(\bbR[1])} \Ch(\bbR) \Big) \Big) 
\,\cong\, \bigsqcup_{\pi_0(M)} \bbE_\infty ( \bbR[1]) \quad,
\end{flalign}
where $\pi_0(M)$ is the set of connected components of $M$.
One easily sees that \eqref{eqn:tmpunderived1} and \eqref{eqn:tmpunderived2} 
are not weakly equivalent by considering for example the $2$-sphere $M= \bbS^2$.
The conclusion is that the {\em underived} local-to-global extension $j_!$
is inconsistent in our homotopical framework because 
it assigns to weakly equivalent theories $j^\ast(\AAA) \simeq \bbE_\infty(\bbR[1])$ 
on disks genuinely different theories on all manifolds. This lack of stability under weak equivalences
is of course solved (by construction) by working with the {\em derived} extension functor $\bbL j_!$.
\end{ex}



\begin{thebibliography}{10}


\bibitem[AF15]{AyalaFrancis} 
D.~Ayala and J.~Francis,
``Factorization homology of topological manifolds,''
J.\ Topol.\ {\bf 8}, no.\ 4, 1045--1084 (2015)
[arXiv:1206.5522 [math.AT]].



\bibitem[BS17]{BeniniSchenkel} 
M.~Benini and A.~Schenkel,
``Quantum field theories on categories fibered in groupoids,''
Commun.\ Math.\ Phys.\  {\bf 356}, no.\ 1, 19 (2017)
[arXiv:1610.06071 [math-ph]].


\bibitem[BSS18]{BSSStack} 
M.~Benini, A.~Schenkel and U.~Schreiber,
``The stack of Yang-Mills fields on Lorentzian manifolds,''
Commun.\ Math.\ Phys.\  {\bf 359}, no.\ 2, 765 (2018)
[arXiv:1704.01378 [math-ph]].
  

\bibitem[BSS15]{BeniniSchenkelSzabo} 
M.~Benini, A.~Schenkel and R.~J.~Szabo,
``Homotopy colimits and global observables in Abelian gauge theory,''
Lett.\ Math.\ Phys.\  {\bf 105}, no.\ 9, 1193 (2015)
[arXiv:1503.08839 [math-ph]].
  

\bibitem[BSW17]{BeniniSchenkelWoike} 
M.~Benini, A.~Schenkel and L.~Woike,
``Operads for algebraic quantum field theory,''
arXiv:1709.08657 [math-ph].


\bibitem[BSW18]{BSWinvolutions}
M.~Benini, A.~Schenkel and L.~Woike,
``Involutive categories, colored $\ast$-operads and quantum field theory,''
arXiv:1802.09555 [math.CT].


\bibitem[BF04]{BergerFresse}
C.~Berger and B.~Fresse,
``Combinatorial operad actions on cochains,''
Math.\ Proc.\ Cambridge Philos.\ Soc.\ {\bf 137}, no.\ 1, 135--174 (2004)
[arXiv:math/0109158 [math.AT]]. 


\bibitem[BM03]{BergerMoerdijkOriginal}
C.~Berger and I.~Moerdijk,
``Axiomatic homotopy theory for operads,'' 
Comment.\ Math.\ Helv.\ {\bf 78}, 805--831 (2003)
[arXiv:math/0206094 [math.AT]].


\bibitem[BM07]{BergerMoerdijk}
C.~Berger and I.~Moerdijk,
``Resolution of coloured operads and rectification of homotopy algebras,''
in: A.~Davydov, M.~Batanin, M.~Johnson, S.~Lack and A.~Neeman (eds.),
{\it Categories in algebra, geometry and mathematical physics},
Contemp.\ Math.\ {\bf 431}, 31--58,  
American Mathematical Society, Providence, RI (2007). 


\bibitem[BDFY15]{AQFTbook}
R.~Brunetti, C.~Dappiaggi, K.~Fredenhagen and J.~Yngvason,
{\it Advances in algebraic quantum field theory},
Springer Verlag, Heidelberg (2015).


\bibitem[BFV03]{Brunetti} 
R.~Brunetti, K.~Fredenhagen and R.~Verch,
``The generally covariant locality principle: A new paradigm for local quantum field theory,''
Commun.\ Math.\ Phys.\  {\bf 237}, 31 (2003)
[math-ph/0112041].


\bibitem[CPTVV17]{DAG2}
D.~Calaque, T.~Pantev, B.~To{\"e}n, M.~Vaqui{\'e} and G.~Vezzosi,
``Shifted Poisson structures and deformation quantization,''
J.\ Topol.\ {\bf 10}, no.\ 2, 483--584  (2017)
[arXiv:1506.03699 [math.AG]].

  
\bibitem[Cis09]{hoKan2}  
D.-C.~Cisinski, 
``Locally constant functors,''
Math.\ Proc.\ Cambridge Philos.\ Soc.\ {\bf 147}, no.\ 3, 593--614  (2009)
[arXiv:0803.4342 [math.AT]].
 

\bibitem[CM13]{CisinskiMoerdijk}
D.-C.~Cisinski and I.~Moerdijk, 
``Dendroidal sets and simplicial operads,'' 
J.\ Topol.\ {\bf 6}, no.\ 3, 705--756 (2013) 
[arXiv:1109.1004 [math.AT]].


\bibitem[CG17]{CostelloGwilliam}
K.~Costello and O.~Gwilliam,
{\it Factorization algebras in quantum field theory. Vol.\ 1},
New Mathematical Monographs {\bf 31}, 
Cambridge University Press, Cambridge (2017).


\bibitem[Cra03]{Crainic}
M.~Crainic, 
``Differentiable and algebroid cohomology, van Est isomorphisms, and characteristic classes,''
Comment.\ Math.\ Helv.\ {\bf 78}, no.\ 4, 681--721 (2003).


\bibitem[Dou17]{Dougherty}
J.~Dougherty, 
``Sameness and Separability in Gauge Theories,''
Philosophy of Science {\bf 84}, no.\ 5, 1189--1201 (2017).


\bibitem[Dug01]{DuggerUniversal}
D.~Dugger,
``Universal homotopy theories,''
Adv.\ Math.\ {\bf 164}, no.\ 1, 144--176  (2001)
[arXiv:math/0007070 [math.AT]]. 
  

\bibitem[DHI04]{Dugger}
D.~Dugger, S.~Hollander and D.~C.~Isaksen,
``Hypercovers and simplicial presheaves,''
Math.\ Proc.\ Cambridge Philos.\ Soc.\ {\bf 136}, no.\ 1, 9--51 (2004)
[arXiv:math/0205027 [math.AT]]. 

  
\bibitem[DS95]{Dwyer}
W.~G.~Dwyer and J.~Spalinski,
``Homotopy theories and model categories,''
in: I.~M.~James (ed.), {\it Handbook of Algebraic Topology}, 
73, North-Holland, Amsterdam (1995). 
   

\bibitem[FV15]{FewsterVerch}
C.~J.~Fewster and R.~Verch,
``Algebraic quantum field theory in curved spacetimes,''
in: R.~Brunetti, C.~Dappiaggi, K.~Fredenhagen and J.~Yngvason (eds.),
{\it Advances in algebraic quantum field theory}, 125--189,
Springer Verlag, Heidelberg (2015)
[arXiv:1504.00586 [math-ph]].


\bibitem[Fre90]{Fre1}
K.~Fredenhagen, 
``Generalizations of the theory of superselection sectors,''
in: D.~Kastler (ed.),
{\it The algebraic theory of superselection sectors: Introduction and recent results}, 
World Scientific Publishing, 379 (1990).


\bibitem[Fre93]{Fre2}
K.~Fredenhagen, 
``Global observables in local quantum physics,''
in: H.~Araki, K.~R.~Ito, A.~Kishimoto and I.~Ojima (eds.),
{\it Quantum and non-commutative analysis: Past, present and future perspectives}, 41--51,
Kluwer Academic Publishers (1993).
  

\bibitem[FRS92]{Fre3}
K.~Fredenhagen, K.-H.~Rehren and B.~Schroer, 
``Superselection sectors with braid group statistics and exchange algebras II: 
Geometric aspects and conformal covariance,'' 
Rev.\ Math.\ Phys.\ {\bf 4}, 113 (1992).


\bibitem[FR12]{FredenhagenRejzner} 
K.~Fredenhagen and K.~Rejzner,
``Batalin-Vilkovisky formalism in the functional approach to classical field theory,''
Commun.\ Math.\ Phys.\  {\bf 314}, 93 (2012)
[arXiv:1101.5112 [math-ph]].
  
  
\bibitem[FR13]{FredenhagenRejzner2} 
K.~Fredenhagen and K.~Rejzner,
``Batalin-Vilkovisky formalism in perturbative algebraic quantum field theory,''
Commun.\ Math.\ Phys.\  {\bf 317}, 697 (2013)
[arXiv:1110.5232 [math-ph]].


\bibitem[Fre16]{fressecotripel}
B.~Fresse,
``The cotriple resolution of differential graded algebras,''
Proc.\ Amer.\ Math.\ Soc.\ {\bf 144}, 4693--4707  (2016)
[arXiv:1503.08489 [math.AT]]


\bibitem[GTZ14]{gtz}
G.~Ginot, T.~Tradler and M.~Zeinalian,
``Derived Higher Hochschild Homology, Topological Chiral Homology and Factorization Algebras,''
Commun.\ Math.\ Phys.\ {\bf 326}, 635--686 (2014).
[arXiv:1011.6483 [math.QA]]


\bibitem[GJ99]{GoerssJardine}
P.~G.~Goerss and J.~F.~Jardine,
{\it Simplicial homotopy theory},
Progress in Mathematics {\bf 174},
Birkh\"auser Verlag, Basel (1999).


\bibitem[GH18]{BVquantization}
O.~Gwilliam and R.~Haugseng,
``Linear Batalin-Vilkovisky quantization as a functor of $\infty$-categories,''
Selecta Math.\ {\bf 24}, no.\ 2, 1247--1313  (2018)
[arXiv:1608.01290 [math.AT]].


\bibitem[HK64]{HaagKastler}
R.~Haag and D.~Kastler,
``An algebraic approach to quantum field theory,''
J.\ Math.\ Phys.\  {\bf 5}, 848 (1964).


\bibitem[Hin97]{HinichOriginal}
V.~Hinich, 
``Homological algebra of homotopy algebras,'' 
Comm.\ Algebra {\bf 25}, no.\ 10, 3291--3323  (1997)
[arXiv:q-alg/9702015].
Erratum: arXiv:math/0309453 [math.QA].


\bibitem[Hin15]{Hinich}
V.~Hinich, 
``Rectification of algebras and modules,'' 
Doc.\ Math.\ {\bf 20}, 879--926 (2015)
[arXiv:1311.4130 [math.QA]].


\bibitem[Hir03]{Hirschhorn}
P.~S.~Hirschhorn,
{\it Model categories and their localizations},
Math.\ Surveys Monogr.\ {\bf 99}, 
Amer.\ Math.\ Soc., Providence, RI (2003).


\bibitem[Hol08a]{Hollander}
S.~Hollander,
``A homotopy theory for stacks,''
Israel J.\ Math.\ {\bf 163}, 93--124 (2008)
[arXiv:math.AT/0110247].


\bibitem[Hol08b]{Hollands} 
S.~Hollands,
``Renormalized Quantum Yang-Mills Fields in Curved Spacetime,''
Rev.\ Math.\ Phys.\  {\bf 20}, 1033 (2008)
[arXiv:0705.3340 [gr-qc]].
  
  
\bibitem[Hov99]{Hovey}
M.~Hovey,
{\it Model categories}, 
Math.\ Surveys Monogr.\ {\bf 63}, 
Amer.\ Math.\ Soc., Providence, RI (1999).


\bibitem[Kaw15]{Kawahigashi}
Y.~Kawahigashi,
``Conformal Field Theory, Tensor Categories and Operator Algebras,''
J.\ Phys.\ A {\bf 48}, no.\ 30, 303001 (2015) 
[arXiv:1503.05675 [math-ph]].


\bibitem[Lan14]{Lang}
B.~Lang, 
{\it Universal constructions in algebraic 
and locally covariant quantum field theory}, 
Ph.D.~thesis, U.~York (2014), 
available at \url{http://etheses.whiterose.ac.uk/8019/}.


\bibitem[LV12]{LodayVallette}
J.-L.~Loday and B.~Vallette,
{\it Algebraic operads},
Grundlehren der Mathematischen Wissenschaften {\bf 346},
Springer Verlag, Heidelberg (2012).


\bibitem[Lur09]{TQFT}
J.~Lurie, 
``On the Classification of Topological Field Theories,'' 
Current Developments in Mathematics, Volume 2008, 129--280 (2009)
[arXiv:0905.0465 [math.CT]].


\bibitem[LurHTT]{Lurie}
J.~Lurie,
{\it Higher topos theory}, 
Annals of Mathematics Studies {\bf 170}, 
Princeton University Press (2009).


\bibitem[LurHA]{HA}
J.~Lurie,
{\it Higher algebra},
book draft available at
\url{http://www.math.harvard.edu/~lurie/papers/HA.pdf}


\bibitem[NTW17]{Teh}
J.~Nguyen, N.~J.~Teh and L.~Wells,
``Why surplus structure is not superfluous,''
{\em to appear in The British Journal for the Philosophy of Science}
[arXiv:1712.01228 [physics.hist-ph]].


\bibitem[PTVV13]{DAG}
T.~Pantev, B.~To{\"e}n, M.~Vaqui{\'e} and G.~Vezzosi,
``Shifted symplectic structures,''
Publ.\ Math.\ Inst.\ Hautes {\'E}tudes Sci.\ {\bf 117}, 271--328 (2013)
[arXiv:1111.3209 [math.AG]]. 


\bibitem[Qui10]{Quillen}
D.~Quillen, 
``Higher algebraic $K$-theory: I,'' 
in: L.~Ji, K.~Liu and S.-T.~Yau (eds.), 
{\it Cohomology of groups and algebraic $K$-theory}, 
Adv.\ Lect.\ Math.\ {\bf 12}, 
Int.\ Press, Somerville, MA (2010). 


\bibitem[PS18]{PavlovScholbach}
D.~Pavlov and J.~Scholbach,
``Admissibility and rectification of colored symmetric operads,''
Journal of Topology {\bf 11}, 559--601 (2018)
[arXiv:1410.5675 [math.AT]].


\bibitem[Rie14]{Riehl}
E.~Riehl,
{\it Categorical homotopy theory},
New Mathematical Monographs {\bf 24},
Cambridge University Press, Cambridge (2014).


\bibitem[Rod14]{hoKan3}
B.~Rodr{\'i}guez-Gonz{\'a}lez,
``Realizable homotopy colimits,''
Theory and Applications of Categories {\bf 29}, no.\ 22, 609--634 (2014)
[arXiv:1104.0646 [math.AG]].


\bibitem[Sch13]{Schreiber}
U.~Schreiber,
``Differential cohomology in a cohesive infinity-topos,''
current version available at \url{https://ncatlab.org/schreiber/show/differential+cohomology+in+a+cohesive+topos}
[arXiv:1310.7930 [math-ph]].


\bibitem[SS03]{Schwede}
S.~Schwede and B.~Shipley,
``Equivalences of monoidal model categories,''
Algebr.\ Geom.\ Topol. {\bf 3}, 287--334 (2003)
[arXiv:math/0209342 [math.AT]]. 


\bibitem[SW18]{SW}
  C.~Schweigert and L.~Woike,
  ``Extended Homotopy Quantum Field Theories and their Orbifoldization,''
  arXiv:1802.08512 [math.QA].
  

\bibitem[Spi01]{Spitzweck}
M.~Spitzweck,
{\it Operads, algebras and modules in general model categories},
PhD Thesis, Bonn (2001).
Available at \url{http://hss.ulb.uni-bonn.de/2001/0241/0241.pdf}


\bibitem[TZ18]{TehraniZahn} 
M.~T.~Tehrani and J.~Zahn,
``Background independence in gauge theories,''
arXiv:1804.07640 [math-ph].


\bibitem[Yau16]{Yau}
D.~Yau,
{\it Colored operads},
Graduate Studies in Mathematics {\bf 170},
American Mathematical Society, Providence, RI (2016).


\bibitem[Yau18]{YauQFT}
D.~Yau,
``Homotopical Quantum Field Theory,''
arXiv:1802.08101 [math-ph].


\bibitem[Zah14]{Zahn}
J.~Zahn,
``The renormalized locally covariant Dirac field,''
Rev.\ Math.\ Phys.\  {\bf 26}, no.\ 1, 1330012 (2014)
[arXiv:1210.4031 [math-ph]].
  

\end{thebibliography}
\end{document}